\begin{document}
\title{Stable Matchings in Metric Spaces:
                     Modeling Real-World Preferences using Proximity}  

\author{Hossein Karkeh Abadi}  
\affiliation{%
  \institution{Stanford University}
  \department{Electrical Engineering}}
 \email{hosseink@stanford.edu}
\author{Balaji Prabhakar}
\affiliation{%
  \institution{Stanford University}
  \department{Electrical Engineering and Computer Science}
}
\email{balaji@stanford.edu}

\makeatletter
\newtheoremstyle{mytheorem}
  {3pt}
  {3pt}
  {\itshape}
  {}
  {\itshape\bfseries}
  {.}
  {.5em}
  {\thmname{#1}\thmnumber{\@ifnotempty{#1}{ }#2}%
   \thmnote{ {\the\thm@notefont(#3)}}}
\makeatother
\theoremstyle{mytheorem}
\newtheorem{theoremA}{\textbf{Theorem}}[section]
\newtheorem{lemmaA}{\textbf{Lemma}}[section]
\newtheorem*{theorem*}{\textbf{Theorem}}

\newcommand{\ones}{\mathbf 1}
\newcommand{\reals}{\mathbb{R}}
\newcommand{\integers}{\mathbb{Z}}
\newcommand{\symm}{{\mbox{\bf S}}}  

\newcommand{\nullspace}{{\mathcal N}}
\newcommand{\range}{{\mathcal R}}
\newcommand{\Rank}{\mathop{\bf Rank}}
\newcommand{\Tr}{\mathop{\bf Tr}}
\newcommand{\diag}{\mathop{\bf diag}}
\newcommand{\card}{\mathop{\bf card}}
\newcommand{\rank}{\mathop{\bf rank}}
\newcommand{\conv}{\mathop{\bf conv}}
\newcommand{\prox}{\mathbf{prox}}

\newcommand{\Expect}{\mathop{\mathbb E{}}}
\newcommand{\Prob}{\mathop{\mathbb P}}
\newcommand{\Co}{{\mathop {\bf Co}}} 
\newcommand{\dist}{\mathop{\bf dist{}}}
\newcommand{\argmin}{\mathop{\rm argmin}}
\newcommand{\argmax}{\mathop{\rm argmax}}
\newcommand{\epi}{\mathop{\bf epi}} 
\newcommand{\Vol}{\mathop{\bf vol}}
\newcommand{\dom}{\mathop{\bf dom}} 
\newcommand{\intr}{\mathop{\bf int}}
\newcommand{\sign}{\mathop{\bf sign}}
\newcommand{\pr}{\noindent \bf Proof}
\newcommand{\epr}{\hfill \qed \vspace{8pt} \\}

\newcommand{\cf}{{\it cf.}}
\newcommand{\eg}{{\it e.g.}}
\newcommand{\ie}{{\it i.e.}}
\newcommand{\etc}{{\it etc.}}

\newcommand{\xx}{\mathbf{x}}
\newcommand{\oo}{\mathbf{o}}
\newcommand{\cC}{\mathcal{C}}
\newcommand{\ba}{\mathbf{a}}
\newcommand{\bX}{\mathbf{X}}
\newcommand{\bY}{\mathbf{Y}}
\newcommand{\bM}{\mathbf{M}}

\newcommand{\cI}{\mathcal{I}}
\newcommand{\cP}{\mathcal{P}}
\newcommand{\cV}{\mathcal{V}}
\newcommand{\cN}{\mathcal{N}}
\newcommand{\cE}{\mathcal{E}}
\newcommand{\cB}{\mathcal{B}}
\newcommand{\cA}{\mathcal{A}}
\newcommand{\reds}{\mathcal{R}}
\newcommand{\blues}{\mathcal{B}}
\newcommand{\cM}{\mathcal{M}}
\newcommand{\cW}{\mathcal{W}}
\newcommand{\cS}{\mathcal{S}}
\newcommand{\cU}{\mathcal{U}}
\newcommand{\cO}{\mathcal{O}}

\begin{abstract}
Suppose each of $n$ men and $n$ women is located at a point in a metric space.  A woman ranks the 
men in order of their distance to her from closest to farthest, breaking ties at random. The men 
rank the women similarly.  An interesting problem is to use these ranking lists and find a stable 
matching in the sense of Gale and Shapley.  This problem formulation naturally models preferences 
in several real world applications; for example, dating sites, room renting/letting, ride hailing 
and labor markets.  Two key questions that arise in this setting are: (a) When is the stable matching 
unique without resorting to tie breaks? (b) If $X$ is the distance between a randomly chosen stable 
pair, what is the distribution of $X$ and what is $E(X)$?  These questions address conditions under
which it is possible to find a unique (stable) partner, and the quality of the stable matching in terms
of the rank or the proximity of the partner. 

We study dating sites and ride hailing as prototypical examples of stable matchings in discrete and 
continuous metric spaces, respectively.  In the dating site model, each man/woman is assigned to a 
point on the $k$-dimensional hypercube based on their answers to a set of $k$ questions with binary 
answers ({\it e.g.}\ , like/dislike). We consider two different metrics on the hypercube: Hamming 
and Weighted Hamming (in which the answers to some questions carry more weight).  Under both metrics, 
there are exponentially many stable matchings when $k = \lfloor\log n\rfloor$.  There is a unique 
stable matching, with high probability, under the Hamming distance when $k = \Omega(n^6)$, and under 
the Weighted Hamming distance when $k > (2+\epsilon) \log n$ for some $\epsilon>0$.  Furthermore, 
under the Weighted Hamming 
distance, we show that  ${\log(X)}/{\log(n)} \to -1$, as $n \to \infty$, when $k > (1+\epsilon) \log n$ 
for some $\epsilon>0$.  In the ride hailing model, passengers and cabs are modeled as points on the 
line and matched based on Euclidean distance (a proxy for pickup time).  Assuming the locations of 
the passengers and cabs are independent Poisson processes of different intensities, we derive bounds 
on the distribution of $X$ in terms of busy periods at a last-come-first-served preemptive-resume
(LCFS-PR) queue. We also get bounds on $E(X)$ using combinatorial arguments. 
\end{abstract}

\maketitle
\newpage
\section{Introduction}
\label{intro}

The stable marriage problem was first introduced by \citet{gale1962college} as a way of 
modeling the college admissions process, in which students are matched with colleges, and 
the process of courtship leading to marriage, in which women and men are matched.  They 
introduced two key properties of matchings: stability and optimality.  These properties 
are quite well-known and we will recall them formally later; for now, we proceed informally. 
Stability captures the requirement that a matching should not pair a man M and a woman W 
with partners whom they {\em both} prefer less than each other.  Should this happen, M and 
W are both incentivized to break up with their assigned partners and match with each other.  
\citet{gale1962college} show that there is always at least one stable matching and present
the {\em deferred-acceptance algorithm} for finding it.  Optimality refers to the quality 
of a matching in terms of the rank of men in their partners' preference lists, and vice versa.  
The stable marriage problem has also been studied in several other real world settings.  One 
famous example is the National Resident Matching Program (NRMP) \cite{roth1984evolution,roth1996nrmp} 
where medical school students are matched to residency programs through a centralized stable 
matching mechanism.  Other examples include online dating 
\cite{hitsch2010matching}, sorority rush \cite{mongell1991sorority}, and school choice \cite{abdulkadiroglu2003school}.

\citet{knuth1976} initiated the theoretical analysis of large-scale instances of the stable 
marriage problem under the ``random preference list'' assumption, where the preference lists 
of each man and woman are drawn independently and uniformly from the set of all permutations.  
Knuth poses the question of estimating the average number of stable matchings when $n$, the
number of men (equal to the number of women) grows large, and provides an integral formula 
for the probability that a given matching is stable.  \citet{pittel1989average, pittel1992likely} 
evaluated this integral and showed that the average number of stable matchings is asymptotic to 
$e^{-1}{n\ln n}$ as $n\to\infty$, and that any given woman (or man) has $\Theta(\log n)$ stable 
partners, on average.  We mention a few other results under the random preference list assumption 
relevant to our work: \citet{immorlica2005marriage} proved that if the preference list of each 
woman has only a constant number of entries, then the number of people with multiple stable partners 
is vanishingly small.\footnote{\citet{roth1999redesign} also empirically observe this phenomenon 
in the context of candidates interviewing for jobs.}  \citet{ashlagi2014unbalanced} 
studied the ``unbalanced'' case when there are $n$ men and $n+k$ women, for $k \ge 1$.  They show 
that, with high probability,\footnote{We say a sequence of events $E_n$ occur with high probability
 if $\lim_{n \to \infty} \Prob(E_n) = 1$.} the fraction of men and women with multiple stable 
partners tends 
to zero as $n\to\infty$.  This line of work is theoretically very interesting, but preference 
lists in the real world are rarely drawn at random---there can exist a significant correlation 
in the choices people and organizations make.  For example, \citet{roth1999redesign} empirically 
observed correlations in the NRMP preference lists; the applicants largely prefer the same programs 
and the programs tend to rank the applicants similarly ({\it i.e.}, a top applicant in one program was 
also top-ranked in other programs).  They note that these correlations can result in a small set 
of stable matchings.  \citet{holzman2014matching} make the previous observation mathematically
precise by assuming each participant picks their preference list from a small set of permutations. 

While the above correlations capture a ``sameness'' in the preferences of people and organizations,
in this paper we consider correlations due to ``proximity''.  Proximity can arise from a coincidence 
of likes and interests between members of the two sides of a matching market.  For example, each member 
of a matching market answers a questionnaire describing their likes, dislikes, interests or requirements.  
The questionnaire can either be the same for both sides of the matching market ({\it e.g.}, online dating) or 
different ({\it e.g.}, renters answer questions describing their preferred properties while lessors describe 
attributes of their preferred renters).  The vector of answers can be viewed as points in a metric space 
and proximity is equated with distance in the metric space.  Each participant in the market ranks members 
of the other side based on their proximity to the participant, from closest to farthest.  Distance also 
arises naturally in the case of ride hailing, where it is desirable to match a hailer with the closest 
available car.  Thus, a wide variety of real world applications can be modeled in this framework; for 
example, dating sites,\footnote{Tinder (\url{https://www.gotinder.com}), Zoosk (\url{https://www.zoosk.com})} 
renting/letting,\footnote{Airbnb (\url{https://www.airbnb.com}), Zillow (\url{http://www.zillow.com})}    
labor markets,\footnote{LinkedIn (\url{https://www.linkedin.com})}
and ride hailing.\footnote{Uber (\url{https://www.uber.com}), Lyft (\url{https://www.lyft.com})}

\smallskip
\noindent\textbf{Our results.} We analyze stable matchings in discrete and continuous metric spaces as the number 
of participants grows large.  
We make distributional assumptions on the distances between the participants (hence on the preference lists) 
and analyze the number and quality of stable matchings.  The quality of a stable matching is captured by how
small the distances are between stable partners in the matching.  When the metric space is continuous, the 
stable matching is almost surely unique under very mild and natural distributional assumptions.  However, this 
is not necessarily true in discrete metric spaces.  An interesting finding of our work is that a participant 
(on either side of the market) is {\em at the same distance} from their partner in all stable matchings.  Thus,
it makes sense to consider $X$, the distance between a randomly chosen stable pair (regardless of which 
stable matching they're picked from, should there be more than one stable matching).  We are interested in 
the distribution of $X$ and $E(X)$ as the number of participants grows large.  We explore these quantities
in the dating sites and ride hailing settings.

\noindent\textbf{Dating sites}. 
Suppose the men and women of a community are seeking to get matched to a partner in a dating site.  
At the time of signing up, participants are usually asked to answer a fixed set of $k$ yes/no questions 
about their preferences, (\emph{e.g.}, ``Do you like pets?'', ``Are you a morning person?'').  
We call the $k$-bit vector representing a participant's answers to these questions the participant's {\it profile}.
Each profile can be modeled as a point on the $k$-dimensional hypercube, $Q_k$.  The aim
is to match a woman to a man whose profile is closest or most similar to hers.  We consider two different
metrics on $Q_k$ for measuring this similarity: the {\it Hamming} distance and the {\it Weighted Hamming} distance. 
The Hamming distance between two profiles is equal to the number of entries at which they disagree.  The 
Weighted Hamming distance weighs some disagreements more; the details are in Section \ref{discrete}.  Since 
the distances are not necessarily distinct, we also assume that each person has a ``tie-breaking preference list'' for ranking members of the other side and uses this to break ties.  One way to think of the actual preference
list of a woman is that it ranks the men by distance, closest first.  Men at the same distance are ranked according
to her tie-breaking preference list. The men form their preference lists similarly.\footnote{One way to generalize this model to matching markets with two 
 different questionnaires (one for each side of the market) is to ask 
each participant to answer their questionnaire and also to indicate their best answers 
from participants on the other side of the market ({\it e.g.}, renters and lessors answer their
questions and that of an ideal response from the other side).   The overall profile is then 
formed by concatenating the answers to both questionnaires.}

We consider the setting in which profiles are picked independently and uniformly at random from $Q_k$, and 
the tie-breaking preference lists are chosen independently and uniformly from the set of all permutations. 
Let $\epsilon > 0$ be an arbitrary positive number.  We shall prove that under both the Hamming and the Weighted
Hamming distances, for $k < (1-\epsilon) \log n$, the fraction of people with multiple stable partners tends to 
zero, with high probability, as $n \to \infty$.  However, if $k = \lfloor \log n \rfloor$, there are exponentially 
many stable matchings.  We show that, with high probability, the stable matching is unique under the Hamming 
distance for $k = \Omega(n^6)$, and it is unique under the Weighted Hamming distance for $k > (2+\epsilon) \log n$, without resorting to tie breaks.\footnote{Tie-breaking represents {\em chance}, which, in the context
of dating, could reasonably be thought of as being less preferable to {\em choice}.  In other words, a participant would prefer to find his/her partner from their profile rather than through a process involving a coin flip. }
We derive a lower bound on $X$ under the Hamming distance.  Under the Weighted Hamming distance, we prove that if 
$k > (1+\epsilon) \log n$, then $\log X/ \log n \to -1$ in probability.

\noindent \textbf{Ride hailing}.  Consider the problem of matching passengers and cabs on a street.  
Let blue and red points on the real line represent the location of passengers and cabs, 
respectively.  Suppose the blue and red points occur according to two independent Poisson processes 
with respective intensities $\lambda$ and $\mu$.  Each point forms its preference list by ranking 
points of the other color in an increasing order of their Euclidean distance to it.  \citet{holroyd2009poisson} 
studied translation-invariant matchings between the points of two $d$-dimensional Poisson processes with 
the same intensities ($\lambda=\mu$).  They show the natural algorithm of matching mutually closest pairs
of points iteratively yields an almost surely unique stable matching.  They analyze the tail behavior of 
$X$, the distance between a typical pair of stable partners.  In the 1-dimensional case, they derive power 
law upper and lower bounds for the tail distribution of $X$.  In this paper, we study the stable matching 
problem between two Poisson processes on the real line in the unbalanced case where $\lambda < \mu$.  
We derive bounds on the distribution of $X$ in terms of the busy cycles of a last-come-first-served preemptive-resume (LCFS-PR) 
queue.\footnote{Such a queue is also called a {\em stack} \cite{kelly2014stochastic}.}  Using combinatorial arguments, we
prove that $\Expect(X) \le \big{(}1+\ln \frac{\mu+\lambda}{\mu - \lambda} \big{)}/{(\mu-\lambda)}$.

The rest of the paper is organized as follows. In Section \ref{prelims} we define the stable matching
problem, introduce relevant notation, and state some known results.  In Section \ref{discrete} we describe 
the stable matching problem on hypercubes and present our results in this model.  In Section \ref{continuous} 
we analyze the stable matching problem on the real line.  Section \ref{conclusion} concludes the paper.

\section{Background and Previous Work}
\label{prelims}

A community of $n$ men and $n$ women is represented by sets $\cM$ and $\cW$, respectively. 
Suppose each person $x$ in the community has a strict preference list, $\succ_x$, which ranks
members of the opposite gender.  Thus, $y_1 \succ_{x} y_2$ means $x$ prefers $y_1$ to $y_2$. 
A matching $\mu$ is a mapping from $\cM \cup \cW$ to itself, such that for each 
man $m$, $\mu(m) \in \cW \cup \{m\}$, for each woman $w$, $\mu(w) \in \cM \cup \{w\}$, 
and for any $m,w \in \cM \cup \cW$, $\mu(m) = w$ implies $\mu(w) = m$. 
A man or woman $x$ is {\it unmatched} under $\mu$ if $\mu(x) = x$. 
A pair $(m,w) \in \cM \times \cW$ is called a {\it blocking pair} for a matching 
$\mu$ if $w \succ_m \mu(m)$ and $m \succ_w \mu(w)$. 
A matching is called {\it stable} if it does not have any blocking pairs. If a man 
$m$ and a woman $w$ are matched to each other in a stable matching, 
we say $w$ and $m$ are a {\it stable partner} of each other. 

The problem of stable matching was first introduced by \citet{gale1962college}.  They proved 
that there always exists a stable matching, which can be found using an iterative algorithm 
called the {\it deferred-acceptance algorithm}. This algorithm proceeds in a series of proposals
and tentative approvals until there is a one-to-one matching between the men and women.  When the
women propose, they each end up with the best stable partner they can have in any stable matching.  
This matching, often called {\em woman-optimal}, also pairs each man with his lowest-ranked stable
partner.  The {\em man-optimal} stable matching, which results when the men do the proposing, may 
be distinct from the woman-optimal stable matching; thus, there may be many stable matchings.
Under the random preference list assumption, \citet{pittel1989average, pittel1992likely} proved that
the average number of stable matchings is asymptotic to $e^{-1}n\ln n$ as $n\to\infty$, and each person 
has $\Theta(\log n)$ stable partners, on average.
The stable marriage problem can be extended to the unbalanced case where the number of men and women 
is not equal.  It is clear that for any stable matching in the unbalanced case, there are some people 
who remain unmatched.  This may also happen in the balanced case if the preference lists of some men 
or women are not complete.  We state the following theorems for ready reference.

\begin{theorem}[Rural Hospital] \cite{roth1986allocation, mcvitie1970stable}
The set of men and women who are not matched is the same for all stable matchings.
\end{theorem}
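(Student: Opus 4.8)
It suffices to show that for any two stable matchings $\mu$ and $\nu$, every person matched under $\mu$ is also matched under $\nu$; exchanging the roles of $\mu$ and $\nu$ then gives the reverse inclusion, so $\mu$ and $\nu$ match exactly the same people and hence leave exactly the same people unmatched. Suppose this fails. Since the whole setup is symmetric in men and women, we may assume some man $m_0$ is matched under $\mu$ but unmatched under $\nu$. Identify each matching with its set of matched pairs and consider the symmetric difference $\mu\triangle\nu$ as a graph. Every vertex is incident to at most one $\mu$-edge and at most one $\nu$-edge, so each connected component is a path or an even cycle in which $\mu$- and $\nu$-edges alternate. The man $m_0$ lies on exactly one such edge, namely his $\mu$-edge $\{m_0,\mu(m_0)\}$, so he is an endpoint of an alternating path
\[
P:\quad m_0,\ w_1,\ m_1,\ w_2,\ m_2,\ \ldots
\]
with $w_{i+1}=\mu(m_i)$ and $m_i=\nu(w_i)$, the edges $\{m_{i-1},w_i\}$ lying in $\mu$ and the edges $\{w_i,m_i\}$ lying in $\nu$.

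The core of the argument is the following claim, proved by induction as one moves along $P$ away from $m_0$: every woman on $P$ strictly prefers her $\nu$-partner to her $\mu$-partner, and every man on $P$ other than $m_0$ strictly prefers his $\mu$-partner to his $\nu$-partner. The base case is that $w_1$ prefers $\nu(w_1)=m_1$ to $\mu(w_1)=m_0$: since $m_0$ is unmatched under $\nu$ while $w_1$ is an acceptable partner of his (they are matched under $\mu$), if this were false the pair $(m_0,w_1)$ would block $\nu$. The inductive step uses the two stability hypotheses alternately: from ``$w_i$ prefers $m_i$ to $m_{i-1}$'' and stability of $\mu$ one obtains that $m_i$ prefers $w_{i+1}=\mu(m_i)$ to $w_i=\nu(m_i)$, for otherwise $(m_i,w_i)$ blocks $\mu$; and from this together with stability of $\nu$ one obtains that $w_{i+1}$ prefers $m_{i+1}=\nu(w_{i+1})$ to $m_i=\mu(w_{i+1})$, for otherwise $(m_i,w_{i+1})$ blocks $\nu$.

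Because $P$ is finite and $m_0$ is an endpoint, $P$ has another endpoint, which is either a woman $w_t$ unmatched under $\nu$ or a man $m_t$ unmatched under $\mu$. In the first case the claim would force $w_t$ to prefer her (nonexistent) $\nu$-partner to $\mu(w_t)=m_{t-1}$, which is impossible since she is matched to $m_{t-1}$ under $\mu$; concretely, $(m_{t-1},w_t)$ then blocks $\nu$. In the second case the claim gives that $w_t$ prefers $\nu(w_t)=m_t$ to $\mu(w_t)=m_{t-1}$, while $m_t$, being unmatched under $\mu$, prefers $w_t$ to being single, so $(m_t,w_t)$ blocks $\mu$. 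Either outcome contradicts stability. Degenerate short paths — for instance when $\mu(m_0)$ is itself unmatched under $\nu$ — coincide with the base case or a terminal case and need no separate treatment.

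I expect the difficulty to be bookkeeping rather than conceptual: choosing a clean indexing for $P$, getting the direction of the alternating induction right, and covering both types of terminal vertex (and the short paths) without gaps. It is worth recording what is \emph{not} enough on its own: the quick remark that anyone matched in some stable matching has a stable partner only shows that such a person is matched in the man-optimal stable matching (if a man) or in the woman-optimal one (if a woman), which does not equate the matched sets of two arbitrary stable matchings — the symmetric-difference argument is what bridges that gap.
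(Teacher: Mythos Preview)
Your argument is correct and is essentially the standard alternating-path proof of the Rural Hospital Theorem. However, there is nothing to compare it against here: the paper does not give its own proof of this statement. The theorem appears in Section~\ref{prelims} as a cited result from \cite{roth1986allocation, mcvitie1970stable}, stated for later use (notably in the proof of Lemma~\ref{sameDistanceLemma} and in the analysis of Theorem~\ref{smallk}) but not proved. So your proposal supplies a proof where the paper simply invokes the literature.
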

 
\begin{theorem}\cite{immorlica2005marriage} \label{immorlica}
Consider the stable marriage problem with $n$ men and $n$ women.  Suppose the preference lists 
of the women are drawn independently and uniformly at random from the set of all orderings of men. 
For a fixed $k\geq 1$, let the preference lists of the men be drawn independently and uniformly 
at random from the set of all ordered lists of any $k$ women.  (The $k$ women on two different
men's preference lists may be different.)  In this setting, the expected number of women who have 
multiple stable partners is $o(n)$.
\end{theorem}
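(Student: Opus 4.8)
The plan follows the approach of \citet{immorlica2005marriage}. By linearity of expectation and symmetry among the women, it suffices to fix a single woman $w$ and show $\Prob[w \text{ has two or more stable partners}] = o(1)$; multiplying by $n$ then yields the claim. By the Rural Hospital theorem, if $w$ has any stable partner then she is matched in \emph{every} stable matching, so throughout we may assume $w$ is matched.

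First I would run the men-proposing deferred-acceptance algorithm to obtain the man-optimal stable matching $M_0$; recall that $M_0$ pairs each woman with her \emph{worst} stable partner. The structural step is the ``break-marriage'' operation of \citet{mcvitie1970stable}: have $w$ reject $m_0 := M_0(w)$ and resume the proposal process, with the proviso that $w$ refuses $m_0$ and every man she ranks below him while all other rejections propagate exactly as in deferred acceptance. If the resulting rejection chain never delivers to $w$ a proposal from a man she prefers to $m_0$, then---using woman-pessimality of $M_0$ together with the Rural Hospital theorem to exclude ``$w$ unmatched'' as an alternative stable outcome---$w$ has a unique stable partner. Contrapositively: if $w$ has two or more stable partners, the rejection chain must eventually reach $w$. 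Hence it suffices to bound $\Prob[\text{the chain reaches } w]$.

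The probabilistic heart is the following. Since each man's list is a uniform $k$-subset of the $n$ women, a fixed woman lies on no man's list with probability $(1-k/n)^n \to e^{-k}$, and such a woman is unmatched in every stable matching; a bounded-differences concentration argument (changing one man's list moves the count by at most $k$) then shows that, with high probability, at least $cn$ women are unmatched in $M_0$ for a constant $c = c(k) > 0$. Now reveal the randomness in stages: run deferred acceptance to completion, fixing $M_0$, and only then process the chain one proposal at a time, exposing a proposer's next list entry exactly when he makes it. Conditioned on the history, that entry is uniform over the at least $n-k$ women not yet revealed on his list, so (i) it lands on an as-yet-unmatched woman---which instantly terminates the chain---with probability at least $c/2$ for large $n$, and (ii) it equals the specific woman $w$ with probability at most $1/(n-k)$. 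Property (i) gives that the chain length $L$ satisfies $\Prob[L \ge \log n] = o(1)$, and then property (ii) together with a union bound over the first $\log n$ proposals yields $\Prob[\text{the chain reaches } w] \le o(1) + (\log n)/(n-k) = o(1)$, as needed (in fact this gives a bound of $O(\mathrm{polylog}\, n)$, stronger than $o(n)$, for each fixed $k$).

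The main obstacle is carrying out the stagewise revelation cleanly: one must verify that conditioning on the list-membership–dependent event ``$\ge cn$ women are unmatched in $M_0$'' does not spoil the uniformity used in (i) and (ii)---this is why the revelation is arranged so the chain's proposals touch only list entries left untouched by the deferred-acceptance run---and that no man ever re-proposes to $w$ along the chain (true because a proposal pointer only advances, and a man who could later reach $w$ has not yet proposed to her, while a man who has already proposed to her has passed her). A secondary point is checking the break-marriage structural lemma in the presence of incomplete lists, where the Rural Hospital theorem is exactly what rules out ``$w$ unmatched'' as an unaccounted-for alternative.
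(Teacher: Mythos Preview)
The paper does not give its own proof of this theorem: it is quoted from \citet{immorlica2005marriage} as background in Section~\ref{prelims}, with no proof supplied in the text or the appendix. So there is nothing in the paper to compare your attempt against directly.

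That said, your sketch is a faithful reconstruction of the Immorlica--Mahdian argument, and the paper does invoke exactly this machinery elsewhere: in the proof of Lemma~\ref{unbalanced} (Appendix~\ref{hypercubeappendix}) the authors spell out ``Algorithm~I''---the break-marriage procedure you describe---and carry out the same stagewise-revelation analysis, bounding the probability that the rejection chain returns to $w$ before it terminates. In that lemma the terminating events are proposals to one of the $r$ surplus women (unbalanced market, full random lists), whereas in your setting they are proposals to one of the $\Theta(n)$ women absent from every man's short list; but the structure of the argument is identical. Your identification of the delicate point---that conditioning on the deferred-acceptance run must not disturb the uniformity of later revealed list entries---is exactly right, and is the place where the original paper does the real work.
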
 

\begin{theorem}\cite{ashlagi2014unbalanced} \label{ashlagi}
 Consider a stable marriage problem with $n$ men and $n+k$ women, for arbitrary $k = k(n) \ge 1$.  
 Suppose the preference lists of women are drawn independently and uniformly at random from 
 the set of all orderings of men, and the preference lists of men are drawn independently and 
 uniformly at random from the set of all orderings of women.  The fractions of men and women 
who have multiple stable partners tends to zero, with high probability, as $n\to \infty$.
\end{theorem}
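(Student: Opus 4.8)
The plan is to control the gap between the two extreme stable matchings and to show it is small because the surplus on the women's side makes the market competitive for the men; my strategy follows that of \citet{ashlagi2014unbalanced}, so I sketch only the structure. First I would reduce the claim to a first-moment estimate. By the structure of deferred acceptance recalled above, its man-proposing run produces the man-optimal stable matching $\mu_M$ (each man's best, each woman's worst stable partner) and the woman-proposing run produces the woman-optimal stable matching $\mu_W$ (the reverse); together with the Rural Hospital theorem this implies that an agent $x$ has more than one stable partner if and only if $\mu_M(x)\neq\mu_W(x)$. Since $\mu_M$ and $\mu_W$ match exactly the same people, the symmetric difference $\mu_M\triangle\mu_W$ is a disjoint union of alternating cycles, each with equally many men and women, so it suffices to bound the number of women $w$ with $\mu_M(w)\neq\mu_W(w)$; by Markov's inequality it is then enough to show that this expected count is $o(n)$, uniformly in $k=k(n)\ge 1$.

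Second, I would show that man-proposing deferred acceptance is ``shallow''. Using the principle of deferred decisions (Knuth; McVitie--Wilson), one simulates the algorithm by repeatedly taking an unmatched man and having him propose to a uniformly random woman he has not yet proposed to, who accepts iff she is currently unmatched or ranks the proposer above her current partner (an independent uniform comparison). With $n+k\ge n+1$ women there is always an unmatched woman, and this spare woman absorbs rejection chains; a coupon-collector estimate then gives that the total number of proposals $B$ is $O(n\log^2 n)$ with high probability, so the algorithm reveals only $O(n\log^2 n)=o(n^2)$ of the $n(n+k)$ man--woman comparisons and, with high probability, every man's partner in $\mu_M$ has rank $O(\log^2 n)$ on his list. (One cannot simply truncate the men's lists to this length and invoke Theorem \ref{immorlica}: that theorem needs constant-length lists, and truncation can alter the set of stable matchings.)

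Third --- the heart of the argument --- I would bound the number of women who can improve on $\mu_M$. Passing from $\mu_M$ to $\mu_W$ amounts to successively eliminating rotations, equivalently running a woman-proposing phase initialized at $\mu_M$, and a woman strictly improves only by triggering a rejection chain among the men. Because the men's lists have been explored only to depth $O(\log^2 n)$, and because of the ever-present spare woman, such improving chains both start rarely and die out fast: I would couple the rotation-elimination dynamics to a rapidly absorbing (subcritical) urn process and, via a martingale (Azuma/Doob) or Wald-type estimate along the proposal sequence together with a union bound over the $O(n\log^2 n)$ events that could trigger an improvement, conclude that the expected number of women who strictly improve is $o(n)$. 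Enlarging $k$ only adds spare women and can only help, so the bound is uniform in $k$, with $k=1$ the tightest case.

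The hard part is this third step: making rigorous the claim that even a single extra woman forces all but $o(n)$ agents to have a unique stable partner, uniformly in $k$. The proposal process has strongly dependent increments, so the needed concentration must be extracted along the natural filtration rather than by a direct Chernoff bound, and one must verify that the union bound over candidate rotations does not swamp the gain from subcriticality; the first two steps are, by comparison, routine coupon-collector estimates.
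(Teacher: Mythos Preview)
The paper does not contain a proof of this theorem: it is stated in Section~\ref{prelims} as background, with the citation \cite{ashlagi2014unbalanced}, and no proof is given or claimed. So there is nothing in the paper to compare your argument against directly.

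That said, the paper does prove a closely related statement, Lemma~\ref{unbalanced} in Appendix~\ref{hypercubeappendix}, and its method is rather different from (and much lighter than) the three-step programme you outline. Instead of analyzing the symmetric difference $\mu_M\triangle\mu_W$ globally and controlling rotation-elimination dynamics, the paper fixes a single woman $w$, runs the McVitie--Wilson ``divorce'' algorithm (Algorithm~I) starting from $\mu_M$, and observes that each time the rejection chain reaches a woman in $\{w\}\cup U$ (where $U$ is the set of $r$ unmatched women), the chain terminates with probability at least $r/(r+1)$ and finds a new stable partner for $w$ with probability at most $1/(r+1)$. This immediately gives $\Prob(N(x)>1)\le 1/(r+1)$ and $\Expect(N(x))\le 1+1/r$, with no coupon-collector or concentration step. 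This is sharper and simpler than your Steps~2--3 when the imbalance $r$ is large, but it is \emph{weaker} than Theorem~\ref{ashlagi} when $r$ is bounded: for $r=1$ it yields only $\Prob(N(x)>1)\le 1/2$, not $o(1)$. So the paper's lemma does not recover the full cited theorem, and your heavier machinery (or the original argument in \cite{ashlagi2014unbalanced}) is genuinely needed for the $k=1$ case.

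Your sketch itself is a reasonable summary of the strategy in \cite{ashlagi2014unbalanced}; the vagueness you flag in Step~3 is exactly where the real work lies, and you are right that a naive union bound over $O(n\log^2 n)$ candidate improvements against a merely constant subcriticality margin would not close. The actual proof in \cite{ashlagi2014unbalanced} handles this by a more careful accounting of the rejection chains rather than a bare union bound, so if you pursue this route you should track total proposals in the improvement phase directly (as in the paper's Lemma~\ref{unbalanced}, but with finer bookkeeping) rather than bounding each potential rotation separately.
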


\noindent 
The independence of the randomly drawn preference lists is the key assumption in the analysis of 
both Theorem \ref{immorlica} and Theorem \ref{ashlagi}. Under this assumption, Theorem \ref{immorlica} shows 
that if the preference lists of one side of the market is limited to a fixed $k \ge 1$ entries, the fraction
of men and women with multiple stable partners is vanishingly small. Theorem \ref{ashlagi}
proves the same result for unbalanced markets where there is a size $k \ge 1$ discrepancy  
between the number of men and women. In the following section, we derive similar results
for the matching markets with correlated preference lists where each person reveals $k \ge 1$
bits of information about their preference by answering $k$ yes/no questions.

\section{Stable Matching on Hypercubes} \label{hypercubesection}
\label{discrete}

Consider a dating site with $n$ men and $n$ women, represented by sets $\cM$ and $\cW$.  
Let $\cS = \cM \cup \cW$ and let $k$ be a positive integer.  For each $x\in \cS$, let the 
$k$-bit vector representing their profile be denoted by
$\ba_k(x) = \bigl(a_1(x), \dots, a_k(x)\bigr) \in \{0,1\}^k$, 
where $a_i(x) = 0$ if $x$'s answer to the $i^\text{th}$ question is ``no'', and $a_i(x) = 1$ 
otherwise.  Thus, each profile is a point on the $k$-dimensional hypercube, $Q_k = \{0,1\}^k$.
For simplicity, we shall suppress the subscript $k$ from $\ba_k$ whenever it can be inferred.

In this setting, participants prefer to be matched to someone with a similar profile.
Similarity is measured using two metrics on $Q_k$: The  Hamming distance and the Weighted 
Hamming distance.
The Hamming distance $d_h(\ba, \ba')$ between $\ba$ and
$\ba'$ equals
\begin{align}
d_h(\ba, \ba') \triangleq \sum_{i=1}^k \ones(a_i\neq a_i'). \nonumber
\end{align}
The Hamming distance assumes that all questions have the same weight.  However, some questions 
may have higher importance than others.  For example, "Are you allergic to cats?" will likely 
outweigh "Do you like caramel?". The Weighted Hamming distance, 
\begin{align}
d_w(\ba, \ba') \triangleq \sum_{i=1}^k {2^{-i}}\ones(a_i\neq a_i'), \nonumber
\end{align}
addresses this by assigning different weights to different questions.  

\smallskip

\noindent {\bf Remark.} Our results for the Weighted Hamming distance (Theorem \ref{weightedHamming})
can be extended to any exponentially decaying weights. 

\smallskip
\noindent{\bf Remark.}  When making statements which apply to both metrics we shall use the notation 
$d(.,.)$.  We shall use $d(x,y)$ to denote the distance between the profiles of participants
$x$ and $y$.  

\smallskip
\noindent The preference list of $x$ is arranged according to distance, as follows:
for $x, y, y' \in \cS$,
\begin{align}
y \succeq_x y'\quad \Longleftrightarrow \quad d(x, y) \le d(x, y').
\nonumber
\end{align}
Since distances are not necessarily distinct, a tie-breaking rule is needed to strictly order 
preference lists.  As mentioned in the Introduction, participant $x$ uses their ``tie-breaking list'',
$T_x$, to break ties.  Thus, each woman $w$, ranks men in increasing order of their distance to her 
and arranges men at the same distance according to their order in her tie-breaking list, 
$T_w$.\footnote{If $w$ breaks ties at random, then $T_w$ is a random ordering of all the men.}   
For any $x$ and $y$ in $\cS$, $T_x$ is not necessarily equal to $T_y$.  Let the final {\em strict} 
preference list of user $x$ be denoted by $P_x$.  We shall use $\succ_x$ to indicate ordering in this 
list.  We are now ready to state 

\smallskip

\noindent{\bf The Profile Matching Problem (PMP).}  Given $n$ men and $n$ women and their strict preference
lists, the profile matching problem seeks to find a stable matching between the men and the women.

\smallskip
A priori, it seems there may be many stable matchings and
multiple stable partners for some women and men.  However, we shall see in Lemma \ref{sameDistanceLemma} 
that the multiple stable matchings, should they exist, are all essentially equal in quality.
Suppose $\mu$ is a stable matching for the PMP.  Let $d_\mu(x)$ be the distance between $x$ and $\mu(x)$, 
 \begin{align}\label{eq1}
 d_\mu(x) \triangleq d\big(x, \mu(x)\big). \nonumber
\end{align}

\begin{lemma}\label{sameDistanceLemma}
Let $\mu_1$ and $\mu_2$ be two stable matchings for the Profile Matching Problem. 
Then $d_{\mu_1}(x) = d_{\mu_2}(x)$ for every $x\in\cS$.
\end{lemma}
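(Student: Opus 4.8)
The plan is to exploit the fact that the preference lists $P_x$ are refinements of the distance-based weak orders $\succeq_x$, so that any blocking-pair argument can be "lifted" from strict preferences to distances. The key observation is: if $d(x,y) < d(x,z)$, then $y \succ_x z$ strictly, regardless of tie-breaks. Consequently, if a man $m$ strictly prefers $w$ to his partner $\mu(m)$ in the refined list, then either $d(m,w) < d(m,\mu(m))$ or $d(m,w) = d(m,\mu(m))$. I would argue by contradiction: suppose there is some $x \in \cS$ with $d_{\mu_1}(x) \ne d_{\mu_2}(x)$. Among all such participants, pick one, say a man $m$, for whom $\min\{d_{\mu_1}(m), d_{\mu_2}(m)\}$ is smallest; without loss of generality $d_{\mu_1}(m) < d_{\mu_2}(m)$. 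Let $w = \mu_1(m)$, so $d(m,w) = d_{\mu_1}(m) < d_{\mu_2}(m)$, which means $w \ne \mu_2(m)$ and $m$ strictly prefers $w$ to $\mu_2(m)$ in $P_m$ (strict distance inequality forces strict preference). Since $\mu_2$ is stable, $(m,w)$ is not a blocking pair for $\mu_2$, so $w$ must weakly prefer $\mu_2(w)$ to $m$ in $P_w$; in particular $d(w, \mu_2(w)) \le d(w, m) = d(m,w) = d_{\mu_1}(m)$.

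Now I would use the minimality of the choice of $m$. We have $d_{\mu_2}(w) = d(w,\mu_2(w)) \le d_{\mu_1}(m) < d_{\mu_2}(m)$, and also $d_{\mu_1}(w) = d(w,m) = d_{\mu_1}(m)$. So $\min\{d_{\mu_1}(w), d_{\mu_2}(w)\} \le d_{\mu_1}(m) = \min\{d_{\mu_1}(m), d_{\mu_2}(m)\}$. If this inequality is strict, then $w$ is a participant with $d_{\mu_1}(w) \ne d_{\mu_2}(w)$ (note $d_{\mu_2}(w) < d_{\mu_1}(m) = d_{\mu_1}(w)$ would make them unequal) and smaller $\min$ value, contradicting minimality — unless in fact $d_{\mu_1}(w) = d_{\mu_2}(w)$, which would force $d_{\mu_1}(m) = d_{\mu_2}(w) \le d_{\mu_1}(m)$ with equality throughout, i.e. $d_{\mu_2}(w) = d_{\mu_1}(w) = d_{\mu_1}(m)$, again giving $w$ the same $\min$ as $m$ but with $w$ not a counterexample. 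The delicate part is handling the equality case $\min\{d_{\mu_1}(w),d_{\mu_2}(w)\} = \min\{d_{\mu_1}(m),d_{\mu_2}(m)\}$: here I would need a secondary argument, re-examining whether $w$ being non-anomalous is actually consistent with the chain of inequalities, and iterating the blocking-pair deduction starting now from $w$ and $\mu_2(w)$ to produce either a contradiction directly or a participant strictly violating minimality.

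A cleaner route, which I would pursue in parallel, is to bypass the "minimal counterexample" bookkeeping by summing. Since the profile matching problem's strict lists $P_x$ are obtained from a single common weak order (distance), I would try to show that $\sum_{m \in \cM} d_{\mu}(m)$ is the same for every stable matching $\mu$, by a direct exchange/augmenting-path argument on the symmetric difference $\mu_1 \triangle \mu_2$: this symmetric difference decomposes into alternating cycles, and along each cycle the stability of $\mu_1$ and of $\mu_2$ forces the distances on the $\mu_1$-edges and the $\mu_2$-edges to interleave in a way that makes them pairwise equal going around the cycle. Concretely, on an alternating cycle $m_1 - w_1 - m_2 - w_2 - \cdots - m_t - w_t - m_1$ (with $m_i w_i \in \mu_1$, $w_i m_{i+1} \in \mu_2$), stability of $\mu_2$ applied to the non-edge $(m_i, w_i)$ and stability of $\mu_1$ applied to $(m_{i+1}, w_i)$ yield a cyclic chain of distance inequalities that must collapse to equalities; this gives $d_{\mu_1}(m_i) = d_{\mu_2}(m_i)$ for each $i$ on the cycle, and hence for all participants. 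I expect the main obstacle in either approach to be exactly this: ensuring the inequalities derived from the two stability conditions close up into a cycle of equalities rather than leaving a strict gap — this is where the fact that ties are broken by a fixed list $T_x$ (so that equal distances can still give either strict preference) must be shown not to matter, because the argument only ever needs weak preference on one side and that is implied by a weak distance inequality.
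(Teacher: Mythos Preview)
Your second approach (alternating cycles) is correct and can be made to work: once you invoke the standard ``opposition of interests'' fact---that along any cycle of $\mu_1 \triangle \mu_2$ all men prefer the same matching and all women the other---the chain $b_i \le a_i \le b_{i-1}$ (in your notation) closes cyclically and collapses to equalities. Your first approach (minimal counterexample) also works, but the equality case you flag as ``delicate'' is really just the cycle argument in disguise: iterating the blocking-pair deduction walks you around the cycle with all distances pinned at $d_{\mu_1}(m)$ until you return to $m$ and discover $d_{\mu_2}(m) = d_{\mu_1}(m)$ after all.

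The paper takes a quite different and shorter route. It fixes a threshold $r$ strictly between $d_{\mu_1}(z)$ and $d_{\mu_2}(z)$, truncates every preference list $P_x$ to the sublist $P'_x$ of partners at distance $\le r$, and observes that deleting from any stable matching $\mu$ all pairs at distance $> r$ yields a stable matching for the truncated instance (because strict distance inequality forces strict preference, exactly the point you noted). The Rural Hospital Theorem then says the set of unmatched agents in the truncated problem is the same across all its stable matchings---but $z$ is unmatched in the truncation of $\mu_1$ and matched in the truncation of $\mu_2$, a contradiction. This avoids any cycle-chasing or minimality bookkeeping by outsourcing the combinatorics to a known structural theorem; your approach is more self-contained but longer, and requires either the lattice structure of stable matchings or the explicit cycle traversal you sketched.
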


\begin{proof}
See appendix \ref{hypercubeappendix}.
\end{proof}

\noindent According to Lemma \ref{sameDistanceLemma}, $d_\mu(x)$ does not depend on $\mu$.  Hence, we shall 
simply denote $d_\mu(x)$ by $d(x)$ and call $d(x)$ the {\it matching distance} of $x$.  Let 
the random variable $X$ denote the matching distance of a randomly chosen participant $x$. We analyze $X$
in the following section.

\subsection{ The Random Profile Matching Problem (RPMP)}  
We now analyze the PMP under certain distributional assumptions of preference lists and profiles
when the number of participants grows large.  Our main goals are to understand the following 
questions:  How many questions are needed to find a unique partner for each participant without 
resorting to tie-breaking? What is the matching 
distance, $X$?  These questions will be answered under the Hamming and the Weighted Hamming metrics.    

\smallskip

\noindent{\bf Probabilistic assumptions.}  We assume each participant answers each of the $k$
questions equally likely with a ``yes'' or a ``no''.  Further, the answers to all questions by
all the participants are independent.  Geometrically, this assumption places the $k$-bit profile
vector of each participant (or, equivalently, the participant) at one of the $2^k$ vertices of 
$Q_k$, independently and uniformly at random.  The preference lists are then generated based
on the distances induced by the above placement and the tie-breaking lists $T_x, x\in\cS$.  
We assume each $T_x$ is generated independently and uniformly at random from the set of
all orderings of men (or women, depending on $x$).  

\smallskip

\noindent{\bf The RPMP-$k$.}  Given $n$ men and $n$ women, each of whose preference lists are generated
according to the above probabilistic assumptions, the RPMP-$k$ aims to find a stable matching between
the men and the women.

\smallskip

\noindent{\bf Remark.}  Note that RPMP-0 is equivalent to the standard stable matching problem with 
randomly generated preference lists.

\subsection{{Our results}}
 In this section we present our main results for the RPMP-$k$. Theorem \ref{smallk} considers 
 the case where $k \le \lfloor \log n \rfloor$ and Theorem \ref{hamming} and Theorem \ref{weightedHamming}
 study larger values of $k$. Due to page limitation we moved all the proofs to appendix. 
 
\begin{theorem} \label{smallk}
Consider the RPMP-$k$ for $k \ge 1$.  
Fix $\epsilon > 0$.  Under any metric on $Q_k$, the following statements hold 
with high probability:
\begin{enumerate}[label=(\roman*)]
\item if $k < (1-\epsilon)\log n$, the fraction of users with multiple stable partners
tends to zero as $n \to \infty$, so long as tie-breaking is used; and
\item if $k = \lfloor \log n \rfloor$, there are $O(n)$ users with multiple stable partners 
and there are exponentially many stable matchings. 
\end{enumerate}
\end{theorem}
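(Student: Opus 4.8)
The plan is to handle both parts through a single structural reduction that localises the analysis to the occupied vertices of $Q_k$. Fix the profiles and, for a vertex $v$ carrying $M_v$ men and $W_v$ women, let $\mathcal{S}_v$ be the (random, generally unbalanced) marriage instance on exactly those $M_v$ men and $W_v$ women, where each person ranks the opposite side of $\mathcal{S}_v$ by restricting their tie-breaking list to $v$. First I would record a one-line blocking-pair observation: if in some stable matching a man at $v$ and a woman at $v$ were both matched outside $v$, they would block; hence the smaller side of $v$ is matched entirely at distance $0$ and exactly $|M_v-W_v|$ members of the larger side are ``exported'' to distance $\ge1$. Using this together with Lemma~\ref{sameDistanceLemma} and the Rural Hospital theorem, I would upgrade this to the sharper claim that the restrictions to $v$ of the stable matchings of the global RPMP-$k$ are \emph{exactly} the stable matchings of $\mathcal{S}_v$, that the exported members are the same in all of them, and --- the crux --- that one may replace the $v$-part of a global stable matching by an arbitrary stable matching of $\mathcal{S}_v$, simultaneously over all $v$, without creating a blocking pair anywhere. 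This last point is where I expect the real work: it needs a short case analysis of potential new blocking pairs, using that everyone outside $v$ is at distance $\ge1$ from everyone in $v$, that an internally matched person (at distance $0$) never prefers an outsider, and that exported people retain their partners under the replacement.

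Given the reduction, part~(i) is a counting argument. For $k<(1-\epsilon)\log n$, every vertex carries $n2^{-k}\ge n^{\epsilon}=\omega(1)$ participants of each gender, with high probability, by a Chernoff bound and a union bound over the $2^k$ vertices. The number of participants with $d(x)\ge1$ is at most $\sum_v|M_v-W_v|$, whose mean is $O(2^k\sqrt{n2^{-k}})=O(\sqrt{n2^k})=O(n^{1-\epsilon/2})=o(n)$ and which concentrates by McDiarmid's inequality. For participants with $d(x)=0$, the reduction says that the number of them at $v$ with multiple stable partners equals the number of participants of $\mathcal{S}_v$ with multiple stable partners, which I would bound in expectation and sum over $v$. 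The balanced vertices ($M_v=W_v$) form only a $O((n2^{-k})^{-1/2})$ fraction of all vertices, so even the trivial per-vertex bound $2\,n2^{-k}$ makes their total contribution $O(\sqrt{n2^k})=o(n)$; for each imbalanced vertex, $\mathcal{S}_v$ is a uniform random unbalanced market on $\Theta(n2^{-k})\to\infty$ participants with surplus $\ge1$, so Theorem~\ref{ashlagi} (with the two sides swapped if needed) gives an $o(1)$ fraction with multiple stable partners. Summing, the expected number of users with multiple stable partners is $o(n)$, and Markov's inequality yields the high-probability statement. Two things need care here: that conditioning on the profiles leaves each $\mathcal{S}_v$ distributed as a uniform random unbalanced market --- which holds because the tie-breaking lists are independent of the profiles --- and that the argument is carried through expectations, so that no union bound over the growing family of sub-markets is required.

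For part~(ii), take $k=\lfloor\log n\rfloor$, so $2^k\in(n/2,n]$ and each vertex carries $\Theta(1)$ participants of each gender. A balls-in-bins computation shows the expected number of vertices carrying \emph{exactly} three men and three women is $\Theta(n)$, and this count concentrates (McDiarmid again; moving one participant changes it by $O(1)$), so with high probability there are at least $c_1 n$ such vertices for some constant $c_1>0$. Each such vertex is balanced, hence has no exported members, and the reduction becomes completely clean there: all six of its participants are matched internally in every global stable matching, and one may install at each such vertex, independently, any stable matching of its internal $3\times3$ marriage instance. That internal instance is a uniform random $3\times3$ marriage problem --- its preferences are the independent tie-breaking lists restricted to the vertex --- and by direct enumeration it has two or more stable matchings with some absolute probability $p_0>0$; as these events are independent across the $\Theta(n)$ vertices, with high probability at least $c_2 n$ of the vertices admit at least two internal stable matchings. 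Each such vertex contributes an independent binary choice, so the number of global stable matchings is at least $2^{c_2 n}$ --- exponentially many --- and all six participants of such a vertex, hence $\Omega(n)$ users overall, have multiple stable partners; the matching $O(n)$ upper bound is immediate since there are only $2n$ users. The only delicate ingredient is, once more, the no-new-blocking-pair verification underlying the reduction, which I would prove once in the general form above and then specialise to balanced vertices, where it is simplest.
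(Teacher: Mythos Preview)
Your plan is correct and follows the same skeleton as the paper: localise to each profile vertex, observe that the short side is matched internally (the paper's Lemma~A.2), and analyse the induced unbalanced sub-market at each vertex. The paper's proof of part~(i), however, does not invoke Theorem~\ref{ashlagi} as a black box. Instead it proves a short quantitative lemma (Lemma~A.4): in a uniform random market with surplus $r\ge 1$, every participant has $\Prob(N(x)>1)\le 1/(r+1)$. It then conditions on the event that the surplus at the vertex of a fixed user lies in $[C_2\sqrt{n},C_3\sqrt{n}]$ (via a CLT, the paper's Lemma~A.3), so the $1/(r+1)$ bound is $O(1/\sqrt{n})$ and the exported-probability is $O(r/|\mathcal S_{\mathbf a}|)=O(1/\sqrt{n})$ as well. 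This is a bit cleaner than your route through Theorem~\ref{ashlagi}, which requires extracting a uniformity-in-surplus statement from an asymptotic theorem; your expectation/Markov packaging is fine, but the paper's explicit $1/(r+1)$ avoids that step entirely. For part~(ii) the paper uses $2\times 2$ balanced vertices (probability $e^{-2}/4$ via the Poisson limit, with a $1/8$ chance of two stable matchings) rather than your $3\times 3$; both work, and your explicit mix-and-match justification is actually more carefully stated than what appears in the paper.
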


\begin{theorem} \label{hamming}
Under the Hamming distance, with high probability, we have the following:
\begin{enumerate}[label=(\roman*)]
\item if $k(n) = \Omega(n^3)$, the fraction of users with multiple stable partners 
tends to zero as $n \to \infty$;
\item  if $k(n) = \Omega(n^6)$, the stable matching is unique without resorting to tie-breaking; and
\item for any $\beta > 1$, 
\begin{align}
\Prob\left( X < \frac{k}{2} - \sqrt{\beta k\log n} \right) \le n^{1-\beta}. \nonumber
\end{align}
\end{enumerate}
\end{theorem}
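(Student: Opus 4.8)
The plan is to prove (iii) by a routine concentration estimate, and (i)--(ii) by combining a structural description of the set of stable matchings with first-moment computations.

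\emph{Part (iii) and the structural set-up.} Fix a participant $x$; by symmetry take $x\in\cM$. Since profiles are i.i.d.\ uniform on $Q_k$, for every $y\in\cW$ the distance $d_h(x,y)$ is $\mathrm{Bin}(k,\tfrac12)$, and $X=d_h(x,\mu(x))\ge\min_{y\in\cW}d_h(x,y)$. Hoeffding's inequality gives $\Prob\bigl(d_h(x,y)\le\tfrac k2-t\bigr)\le e^{-2t^2/k}$, so a union bound over the $n$ women yields $\Prob\bigl(X\le\tfrac k2-t\bigr)\le n\,e^{-2t^2/k}$, and $t=\sqrt{\beta k\log n}$ gives $n^{1-2\beta}\le n^{1-\beta}$; averaging over the uniform choice of $x$ changes nothing. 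For (i)--(ii) I would use the ``iterated mutual nearest neighbour'' description of stable matchings: repeatedly pick a globally closest remaining opposite-sex pair $(m,w)$, match it, and delete both. Being mutually nearest among the survivors, $(m,w)$ is matched together in every stable matching (else it blocks), so by induction the greedy matching $\mu^\ast$ is contained in every stable matching and is itself stable; hence it is \emph{the} stable matching whenever the globally closest remaining pair is unique at every step, and then no tie-break is ever consulted. Conversely, two distinct stable matchings force a rotation $(m_0,w_0),\dots,(m_{r-1},w_{r-1})$, and there $w_i,w_{i+1}$ are both stable partners of $m_i$, so $d(m_i,w_i)=d(m_i)=d(m_i,w_{i+1})$ by Lemma~\ref{sameDistanceLemma}; chaining these equalities around the rotation shows its $2r$ pairs form a cycle $C_{2r}$ on $2r$ distinct participants all of whose edge-distances are equal --- a \emph{monochromatic} $2r$-cycle --- and in particular every participant with more than one stable partner lies on such a cycle.

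\emph{Part (i).} It thus suffices to show $\Expect\bigl[\#\{\text{participants on a monochromatic cycle}\}\bigr]=o(n)$, which I bound by $\sum_{\ell\ge2}2\ell\cdot\Expect[\#\text{ monochromatic }2\ell\text{-cycles}]$. The key estimate is, for a fixed rectangle $(m_0,m_1,w_0,w_1)$, the probability that $d_h(m_0,w_0)=d_h(m_0,w_1)=d_h(m_1,w_0)=d_h(m_1,w_1)$: coordinate by coordinate the four difference-indicators range uniformly over the even-weight vectors of $\{0,1\}^4$, so three of the distances are independent $\mathrm{Bin}(k,\tfrac12)$ and the fourth is a parity sum of the same coordinates, and a local-limit/Stirling estimate gives that all four agree with probability $\Theta(k^{-3/2})$. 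Since there are $O(n^4)$ rectangles, $\Expect[\#\text{rectangles}]=O(n^4k^{-3/2})$; more generally, the difference-indicators along a $C_{2\ell}$ range over its cut space (the even-weight subspace of $\{0,1\}^{2\ell}$), which yields $\Prob(\text{a fixed }2\ell\text{-cycle is monochromatic})=O\bigl(k^{-(2\ell-1)/2}\bigr)$ and hence $\Expect[\#\text{ monochromatic }2\ell\text{-cycles}]=O\bigl(n^{2\ell}k^{-(2\ell-1)/2}\bigr)$. For $k=\Omega(n^3)$ these bounds decay geometrically in $\ell$ with sum $O(n^{-1/2})=o(n)$, so Markov's inequality gives that the fraction of participants with multiple stable partners tends to $0$ (in fact there are none, whp).

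\emph{Part (ii).} Here I additionally rule out that tie-breaking is ever invoked. Two facts suffice: (a) whp every matching distance lies in $[\tfrac k2-c\sqrt{k\log n},\tfrac k2+c\sqrt{k\log n}]$ for a suitable constant $c>1$ --- the lower bound is (iii) union-bounded over the $\le n^2$ possible matched pairs, and the upper bound follows from $d(x)\le\max_{y}d_h(x,y)$ together with a union bound over participants --- and (b) whp no participant is at equal distance from two members of the other side lying within distance $\tfrac k2+c\sqrt{k\log n}$ of it. Two fixed members of the opposite side are equidistant from a given participant with probability $\binom{2k}{k}2^{-2k}=\Theta(k^{-1/2})$, so the union bound over the $\Theta(n^3)$ triples (participant, unordered pair) shows that (b) fails with probability $O(n^3k^{-1/2})$, which is $o(1)$ when $k=\Omega(n^6)$. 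Under (a) and (b) the greedy process always has a unique globally closest remaining pair without consulting any tie-break, so $\mu^\ast$ is the unique stable matching regardless of the tie-breaking lists.

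\emph{Main obstacle.} The real work is the first-moment estimate in (i): establishing the $\Theta(k^{-3/2})$ rectangle probability (and the analogous $O(k^{-(2\ell-1)/2})$ for $2\ell$-cycles) rigorously via a local limit theorem --- the naive ``drop the closing edge'' bound only gives $O(k^{-(\ell-1)})$, which is too weak at $\ell=2$, so one must control the conditional distribution of the last edge-distance given the others (e.g.\ by exhibiting it as an affine image of a hypergeometric after revealing all but one coordinate family) --- and then checking that the combinatorial count of $2\ell$-cycles is dominated by that probability uniformly over $2\le\ell\le n$, so no long rotation escapes the union bound. Verifying that the $2\ell$ participants of a rotation are genuinely distinct (hence the cycle is a true $C_{2\ell}$) is routine from Lemma~\ref{sameDistanceLemma} but should be spelled out.
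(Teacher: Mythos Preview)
Your Part~(iii) is essentially the paper's argument (you even get a slightly sharper $n^{1-2\beta}$ from Hoeffding versus the paper's $n^{1-\beta}$ from the weaker Chernoff form). For Part~(ii) you arrive at the same $O(n^3k^{-1/2})$ bound the paper does, but by a longer path: the paper simply invokes Lemma~\ref{sameDistanceLemma} to conclude that if, for every woman $x$, the distances $\{d_h(x,m):m\in\cM\}$ are all distinct, then every woman has a unique stable partner, hence the stable matching is unique; the union bound over $n\binom{n}{2}$ equidistance events, each of probability $\binom{2k}{k}2^{-2k}=\Theta(k^{-1/2})$, finishes. Your item~(a) about the concentration window is never used in your own union bound and can be dropped; your greedy mutual-nearest-neighbour description is correct but is not needed once one has the ``distinct distances $\Rightarrow$ unique partner'' lemma.

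The real divergence is in Part~(i). Your plan is to exhibit a full \emph{monochromatic} alternating cycle whenever a participant has two stable partners, and then bound $\sum_{\ell\ge 2} 2\ell\cdot\Expect[\#\text{monochromatic }C_{2\ell}]$. This is correct in outline, but---as you yourself flag---the rectangle ($\ell=2$) term needs a genuine multivariate local limit estimate to get $\Theta(k^{-3/2})$; the naive ``three independent coincidences'' bound is not available because the four edge-distances are dependent through the cycle-parity constraint. The paper avoids this obstacle completely by observing that a cycle is overkill: if $x$ has two stable partners $y,y'$, then $y$ in turn has a second stable partner $x'\neq x$, and Lemma~\ref{sameDistanceLemma} gives
\[
d_h(x,y')=d_h(x,y)=d_h(x',y),
\]
i.e.\ a \emph{path} of three equal-distance edges on four people. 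Conditioning on $d_h(x,y)$, the two equalities are conditionally independent (since $y'$ and $x'$ have fresh profiles), each with probability at most $\max_j\binom{k}{j}2^{-k}=\Theta(k^{-1/2})$. A union bound over the $O(n^3)$ choices of $(y,y',x')$ gives $\Prob(x\text{ has multiple stable partners})=O(n^3/k)$, which tends to $0$ for $k=\Omega(n^3)$---no local limit theorem and no sum over cycle lengths. So your ``main obstacle'' is an artifact of going for the cycle rather than the path; the paper's route is both shorter and elementary. Your approach does buy something, though: once you have the $\ell=2$ estimate, the geometric decay in $\ell$ shows that whp there is \emph{no} monochromatic cycle at all, hence the stable matching is actually unique already at $k=\Omega(n^3)$---a stronger conclusion than the paper draws from its path argument (the paper only gets uniqueness at $k=\Omega(n^4)$ by union-bounding its path estimate over all $x$).
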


\begin{theorem} \label{weightedHamming}
Fix $\epsilon > 0$. Under the Weighted Hamming distance, with high probability, we have the following:
\begin{enumerate}[label=(\roman*)]
\item if $k(n) > (1+\epsilon) \log n$, the fraction of users with multiple stable 
partners tends to zero as $n \to \infty$. Moreover,
\begin{align}
\frac{\log {X}}{\log n} \xrightarrow{~p~} -1, \nonumber
\end{align}
where $\xrightarrow{~p~}$ represents convergence in probability; and
\item if $k(n) > (2+\epsilon) \log n$, the stable matching is unique, without resorting to 
tie-breaking. 
\end{enumerate}
\end{theorem}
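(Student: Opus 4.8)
\smallskip
\noindent\textbf{Proof proposal.} My plan is to reduce everything to one structural fact about the Weighted Hamming distance and then run first‑moment arguments. Writing $S(x,y)=\{i:a_i(x)\neq a_i(y)\}$, we have $d_w(x,y)=\sum_{i\in S(x,y)}2^{-i}$, so by uniqueness of binary expansions (a) $d_w(x,y)=d_w(x,y')$ iff $S(x,y)=S(x,y')$ iff $\ba(y)=\ba(y')$, and (b) if $x$ and $y$ agree in their first $\ell$ coordinates then $d_w(x,y)<2^{-\ell}$, and otherwise $d_w(x,y)\ge 2^{-\ell}$. Grouping $\cS$ by the first $\ell$ coordinates of the profiles into the $2^{\ell}$ ``blocks'' of the level‑$\ell$ sub‑hypercube, (b) says every participant strictly prefers every member of its own block to every participant in any other block. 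I also use throughout that every participant is matched in every stable matching: preference lists are complete, so a perfect stable matching exists, and then the Rural Hospital theorem applies.

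\smallskip
\noindent\emph{The first‑moment parts.} For (ii), a union bound over pairs gives that when $k>(2+\epsilon)\log n$ all $2n$ profiles are distinct with probability at least $1-2n^{-\epsilon}$. Distinct profiles and (a) make the values $d_w(x,\cdot)$ pairwise distinct for each $x$, so the preference lists are already strict and tie‑breaking is never consulted; and if $\mu_1,\mu_2$ were stable with $\mu_1(x)\neq\mu_2(x)$, then Lemma~\ref{sameDistanceLemma} gives $d_w(x,\mu_1(x))=d(x)=d_w(x,\mu_2(x))$, whence $\ba(\mu_1(x))=\ba(\mu_2(x))$ by (a), contradicting distinctness — so the stable matching is unique. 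For the multiple‑partner count in (i): if $x$ has two stable partners $y\neq y'$ then $d_w(x,y)=d(x)=d_w(x,y')$ forces $\ba(y)=\ba(y')$, so $x$'s stable partners are among the ``colliders'' (participants sharing a profile with another participant); since a fixed stable matching sends distinct men to distinct women, the number of men with multiple stable partners is at most the number of colliding women, which has expectation $\le n(n-1)2^{-k}=O(n^{1-\epsilon})$, and Markov's inequality finishes. The same estimate gives the lower tail of $X$: for $\delta\in(0,\epsilon)$, $d(x)<n^{-1-\delta}$ forces $\mu(x)$ to agree with $x$ in its first $\lceil(1+\delta)\log n\rceil$ coordinates (or to duplicate $x$'s profile), and the expected number of $x$ for which any opposite‑sex participant does so is $2n\cdot n\cdot n^{-(1+\delta)}+o(n)=O(n^{1-\delta})$; dividing by $2n$ (the distribution of $X$ is the expected empirical distribution of $\{d(x)\}_{x\in\cS}$) gives $\Prob(X<n^{-1-\delta})=O(n^{-\delta})\to 0$.

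\smallskip
\noindent\emph{The upper tail of $X$.} Fix $\delta\in(0,\epsilon)$ and set $\ell_0=\lfloor(1-\delta)\log n\rfloor$, so $\ell_0\le k$ and $2^{-\ell_0}=O(n^{-(1-\delta)})$; split $\cS$ into the $2^{\ell_0}\approx n^{1-\delta}$ blocks of the level‑$\ell_0$ sub‑hypercube. By the block‑preference fact, in a stable matching a man and a woman of the same block $B$ cannot both be matched outside $B$ (they would block), so $B$ contains exactly $\min(a_B,b_B)$ matched pairs, where $a_B,b_B$ are its numbers of men and women; hence the number of participants matched outside their own block is exactly $\sum_B|a_B-b_B|$, and every other $x$ has $d(x)<2^{-\ell_0}=O(n^{-1+\delta})$. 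Since $a_B,b_B$ are $\mathrm{Binomial}(n,2^{-\ell_0})$ with mean $\Theta(n^{\delta})$, we have $\Expect|a_B-b_B|=O(n^{\delta/2})$ and $\Expect\sum_B|a_B-b_B|=O(n^{1-\delta}\cdot n^{\delta/2})=o(n)$, so $\Prob(X>n^{-1+2\delta})\le\frac{1}{2n}\Expect\sum_B|a_B-b_B|\to 0$. Combining with the lower tail and letting $\delta\downarrow 0$ yields $\log X/\log n\xrightarrow{p}-1$.

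\smallskip
\noindent\emph{Main obstacle.} The three first‑moment arguments are routine; the real content is the upper tail of $X$. It requires noticing that every stable matching respects the sub‑hypercube hierarchy — so a participant's partner is as close as the local imbalance of the two populations permits — and then bounding the total imbalance $\sum_B|a_B-b_B|$. The delicate choice is the level $\ell_0$: it must be pushed up close to $\log n$ (so that $2^{-\ell_0}\approx 1/n$) while keeping the blocks large enough that relative imbalances are $o(1)$ on average, and it is precisely the hypothesis $k>(1+\epsilon)\log n$ — equivalently $2^{k}\gg n$, so $\ell_0$ can be as large as $\sim\log n\le k$ and collisions stay negligible — that makes this possible.
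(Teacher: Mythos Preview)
Your proposal is correct and follows the same approach as the paper: facts (a) and (b) about the Weighted Hamming metric drive profile-collision union bounds for part~(i), part~(ii), and the lower tail of $X$, and the sub-hypercube block decomposition at level $\approx(1-\delta)\log n$ together with a variance bound on the block imbalances $|a_B-b_B|$ handles the upper tail. The only noteworthy difference is that for part~(ii) you argue directly from Lemma~\ref{sameDistanceLemma} (distinct profiles $\Rightarrow$ distinct distances $\Rightarrow$ unique stable partner for every $x$), whereas the paper detours through a rotation-cycle argument before arriving at the same $n^{2}2^{-k}$ union bound; your route is the cleaner one and is in fact what the paper's own Lemma~\ref{uniqueStablePartner2} already sets up.
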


\smallskip
According to Theorem \ref{smallk}, in large instances of the RPMP-$k$, if users answer even one question 
($k=1$), the preference lists become skewed so that, with high probability, any given participant has a 
unique stable partner.  This contrasts starkly with the case $k=0$, where \citet{pittel1992likely} showed 
that each participant has, on average, $\Theta(\log n)$ stable partners.  In Theorem \ref{hamming} and
\ref{weightedHamming} we distinguish the statements ``the fraction of participants with 
a unique stable partner goes to 1 with high probability'' from the statement ``there is a unique stable 
matching'', since the former does not imply the latter.  Moreover, our method of proving the latter consists 
of proving the following two steps: (i) if the distances of each man from a given woman are distinct, then 
she will have a unique stable partner (see Lemma \ref{uniqueStablePartner2}); and (ii) if this holds for all the 
women (or all the men), then the stable matching is unique. From a market design perspective the uniqueness 
of the stable matching is important to achieve a shape prediction of the market. Theorem \ref{hamming} shows 
that under the Hamming distance, if $k(n)=\Omega(n^6)$, with high probability, there exists a unique stable 
matching without resorting to tie-breaking. However, asking that many questions from users is not feasible. 
On the bright side, Theorem \ref{weightedHamming} shows that if the answers to questions carry different
weights, we can achieve a unique stable matching with 
$k(n) = O(\log n)$ questions. 

These theorems also study the matching distance, $X$. It will be clear from the proof of Theorem \ref{smallk} 
that $X=0$, with high probability, when $k < (1-\epsilon)\log n$. Theorem \ref{hamming} establishes an upper bound on 
the matching distance $X$. Theorem \ref{weightedHamming} covers the case of the Weighted Hamming metric. 

\smallskip
\noindent \textbf{Remark.} All above theorems can be extended to unbalanced markets with $n$ men and
$n+r$ women.

\section{Stable Matching On the Line} \label{continuous}
Consider the problem of matching passengers and cabs on a street.  Suppose the passengers 
and cabs are represented as blue and red points, respectively, on $\reals$.  Let $\blues$ 
and $\reds$ denote the set of blue and red points, respectively.  Let $\cS = \blues \cup \reds$.  
A matching between $\blues$ and $\reds$ is a mapping $\cM$ from $\cS$ to $\cS \cup \{\infty\}$, 
such that for every red point $r$, $\cM(r) \in \blues \cup \{\infty\}$, for every blue point 
$b$, $\cM(b) \in \reds \cup \{\infty\}$, and for every $b, r \in \cS$, $\cM(r) = b$ implies 
$\cM(b)=r$. A point $x\in \cS$ is unmatched if $\cM(x) = \infty$.  The preference list of 
each point is based on its Euclidean distance to the points with a different color, closest
first.  A matching $\cM$ is stable if there is no pair $(b,r) \in \blues \times \reds$ such that
\begin{align}
b \not{=} \cM(r) ~~\textnormal{and}~~ |r - b| < \min\bigl(|r-\cM(r)|, |b-\cM(b)|\bigr).
\nonumber
\end{align}
For any matching $\cM$ and any point $x \in \cS$, 
let $I_\cM(x) \subset \reals$ denote the open interval which has $x$ and $\cM(x)$ at its 
end-points, and let $d_\cM(x)$ represent the 
length of $I_\cM(x)$, {\it i.e.}, $d_\cM(x) = |x - \cM(x)|$. We call $I_\cM(x)$ the
{\it matching segment} of $x$, and $d_\cM(x)$ the {\it matching distance} of $x$ in $\cM$.

With the above definitions, suppose that points in $\blues$ and $\reds$ occur according to independent 
Poisson processes with rates $\mu$ and $\lambda$, respectively, where $\lambda \le \mu$.  We call the 
matching problem defined above as the {\it Poisson Matching} problem and denote it by $PM(\lambda, \mu)$.
As mentioned in the Introduction, \citet{holroyd2009poisson} studied translation-invariant matchings between 
two $d$-dimensional Poisson processes with the same intensities; in particular, they studied stable matchings.
They showed that the following algorithm finds a unique stable matching: Each blue point simultaneously emits 
two rays, one in each direction, such that at any time $t$, each ray is at distance $t$ from its emitter.
Once a ray hits an unmatched red point $r$, the emitter $b$ will be matched to $r$, and both points leave 
the system.  Denote the unique stable matching by $\cM_s$ and let $x\in \blues$ be an arbitrary blue point. 
Define the random variable $X$ to be $x$'s matching distance in $\cM_s$, {\it i.e.}, 
$X \triangleq d_{\cM_s}(x)$.  \citet{holroyd2009poisson} proved that if $\mu = \lambda$, $\Expect (X^{1/2}) = \infty$.

\begin{theorem}\cite{holroyd2009poisson} \label{holroydTheorem}
Let $\blues$ and $\reds$ be independent 1-dimensional Poisson processes of intensity $1$, and 
let $X$ represent the matching distance of an arbitrary point in the stable matching between 
$\blues$ and $\reds$. We have,
\begin{align}
\Expect( X^{1/2}) = \infty \quad \text{and} \quad \Prob(X>r ) \le Cr^{1/2} \quad \forall r > 0,
\nonumber
\end{align}
for some constant $C\in (0, \infty)$.
\end{theorem}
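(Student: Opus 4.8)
The plan is to reduce both claims to the standard excursion/first-passage asymptotics of a symmetric random walk, working through the ray-shooting description of the stable matching $\cM_s$. Since $\blues$ and $\reds$ are independent Poisson processes of intensity $1$, translation invariance and the Mecke (refined Campbell) formula give
\[
\Prob(X>r)\;=\;\Expect\bigl[\,\#\{\,b\in\blues\cap[0,1]:\ d_{\cM_s}(b)>r\,\}\,\bigr],
\]
so it suffices to analyse a blue point placed at the origin (its Palm version) and its matching distance $X=\min(X^{+},X^{-})$, where $X^{+}$ (resp.\ $X^{-}$) is the distance the rightward (resp.\ leftward) ray of $0$ travels before it reaches a red point still unmatched at that instant. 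The quantity that governs the $X^{\pm}$ is a discrepancy process of the form $D(s)=\#(\blues\cap(0,s])-\#(\reds\cap(0,s])$: since $\blues$ and $\reds$ are independent Poisson of equal rate, $D$ is a continuous-time symmetric random walk started at $0$, and the only probabilistic inputs I need are the reflection-principle estimate $\Prob\bigl(D\ge 0\text{ on }(0,r]\bigr)=\Theta(r^{-1/2})$ and the fact that the first passage of such a walk to a level of order $\sqrt r$ exceeds $r$ with probability bounded away from $0$, uniformly in $r$.

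For the upper bound, I would first establish a monotonicity lemma: whether a red point is available to $0$ when $0$'s ray reaches it is, up to a bounded look-ahead, governed by $D$ — a red point that is ``in surplus'' in a neighbourhood of $0$ cannot be claimed away entirely by nearby blue points — and this forces $\{X>r\}$ into an event of the type $\{D\ge 0\text{ on }(0,cr]\}$, the constant $c$ absorbing the bounded look-ahead by blue points lying just beyond the ray. Then $\Prob(X>r)\le\Prob\bigl(D\ge 0\text{ on }(0,cr]\bigr)\le C\,r^{-1/2}$. Equivalently, one can phrase $X^{+}$ as essentially dominated by a busy period of the critically loaded LCFS-PR queue whose arrivals are the blue points and whose departures are the red points, and that busy period has the same $r^{-1/2}$ tail.

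For $\Expect(X^{1/2})=\infty$, I would use $\Expect(X^{1/2})=\int_0^{\infty}\Prob(X>s^{2})\,ds$ and prove the matching lower bound $\Prob(X>r)\ge c\,r^{-1/2}$ for all large $r$. Condition on $\{X^{+}>r\}$, an event of probability $\Theta(r^{-1/2})$; on it the walk $D$ is conditioned to stay nonnegative on $(0,r]$, hence sits at height $\Theta(\sqrt r)$ at time $r$, so there are $\Theta(\sqrt r)$ blue points in $(0,r]$ that must be matched to red points outside $(0,r]$. A positive fraction of these ``unserved'' blue points are matched to the left of $0$, which inflates the demand the leftward ray of $0$ must overcome: $X^{-}$ becomes, in effect, the first passage of the independent left-side discrepancy walk to a level of order $\sqrt r$, and by the second probabilistic input this exceeds $r$ with probability bounded away from $0$. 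Hence $\Prob(X^{-}>r\mid X^{+}>r)\ge c_1>0$, so $\Prob(X>r)=\Prob(X^{+}>r,\,X^{-}>r)\ge c_1\,\Prob(X^{+}>r)\ge c\,r^{-1/2}$, and the integral diverges.

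The hard part will be turning the informal couplings with $D$ into rigorous statements. Whether a given red point is available to $0$ is not a function of a single half-line: it depends on the whole chain of ``who claims whom,'' and a red point in $(0,s]$ can be captured by a blue point lying somewhat to the right of $s$. Controlling this effect — so that both the inclusion used for the upper bound and the imported-demand mechanism used for the lower bound survive up to bounded-range corrections — is the crux, and it rests on a careful stability/monotonicity lemma comparing $\cM_s$ with the stable matching restricted to a half-line (the principle that inserting points can only postpone a match, applied along each ray). With that lemma in hand, the random-walk estimates invoked above are routine.
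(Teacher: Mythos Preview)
This theorem is not proved in the paper. It is stated with an explicit citation to \cite{holroyd2009poisson} as prior work, and no argument for it appears in the body or the appendix; the paper only uses it as background before turning to the unbalanced case $\lambda<\mu$. There is therefore no ``paper's own proof'' against which to compare your proposal.

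Two side remarks on the proposal itself. First, the inequality as printed, $\Prob(X>r)\le C r^{1/2}$, is vacuous for large $r$; the intended bound (and the one you are actually arguing toward) is $\Prob(X>r)\le C r^{-1/2}$. Second, your upper-bound step asserts an inclusion of the form $\{X>r\}\subseteq\{D\ge 0\text{ on }(0,cr]\}$, but this is not correct as written: a red point in $(0,r]$ can be claimed by a blue point well to the right of $r$, so the one-sided discrepancy need not stay nonnegative on a fixed window. What does follow from stability is that every red point in $(-r,r)$ is matched to a blue point in $(-2r,2r)\setminus\{0\}$, which yields a counting inequality rather than a sample-path inclusion; the Holroyd--Pemantle--Peres--Schramm proof proceeds from that counting constraint. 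Your ``hard part'' is thus not a technicality but the whole content of the upper bound, and the event you propose to bound would need to be replaced.
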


\noindent In this section we analyze the 1-dimensional $PM(\lambda,\mu)$ problem for $\lambda < \mu$.  This models the situation in 
which there are fewer passengers than cabs and sheds light on the time it would take for a passenger to be
picked up by the nearest cab that is assigned to pick up the passenger.\footnote{Note the nearest cab may 
not be able to pick up a passenger since it may be assigned to pick up another passenger who is nearer to 
the cab than the first passenger.  Hence, stable matchings are quite natural in this setting.}  Thus, we
shall be interested in the distribution (Theorem \ref{bound}) and the expected value (Theorem 
\ref{Xupperbound}) of $X$.  However, in order to get at these quantities, we need to introduce various
ideas such as the relationship among $PM(\lambda,\mu)$, last-come-first-served preemptive-resume 
(LCFS-PR) queue, and nested matchings.  We believe these ideas are interesting in their own right.

\subsection{Queue Matching}
Red partners in a stable matching may be either to the left or to the right of the corresponding
blue points.  However, in queue matchings they are either only on the left or only on the right.
Consider $PM(\lambda,\mu)$ with the constraint that each blue point can only be matched to 
red points that are on its {\em right}.  In the passenger-cab scenario, this constraint can 
be the result of having a one-way street or a road divider, where each cab can only pick up 
passengers on its left.  In order to find the stable matching, all the blue points simultaneously
emit a ray to their right at time 0.  Once a ray hits an unmatched red point $r$, the emitter $b$ 
will be matched to $r$.  It is clear that this algorithm is equivalent to running an LCFS-PR queue 
where the time of job arrivals and departures in this queue are represented as blue points and red 
points, respectively.  The arrival rate is $\lambda$ and the service rate is $\mu$ (the service times
are i.i.d.\ exponentials of rate $\mu$). We call the resulting stable matching, $\cM_s^+$, the {\it forward queue} 
matching, corresponding to running the queue forward in time.  Similarly, we can define a {\it backward 
queue} matching, $\cM_q^-$, where each blue point is matched to a red point on its left, and can be 
found by running the LCFS-PR queue backward in time.  Figure $1$ shows $\cM_s^+$,
$\cM_s^-$, and $\cM_s$ for an instance of the problem.
\begin{figure}[t]
\begin{center}
\includegraphics[width=.45 \textwidth]{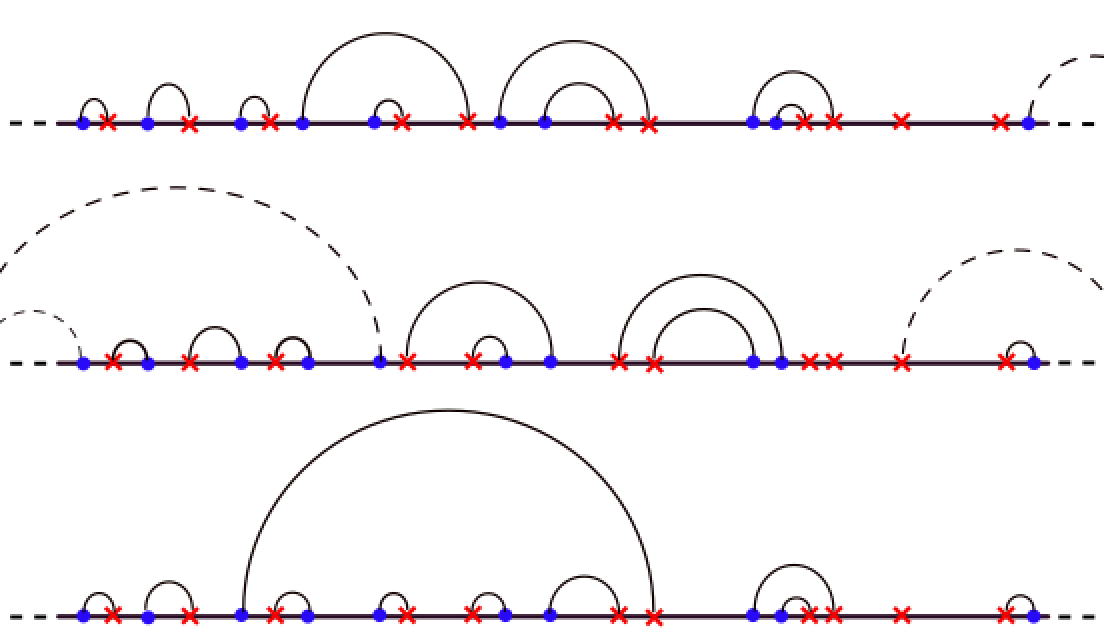}
\end{center}
\caption{Different matchings between two point processes.
Top: forward queue matching $\cM_s^+$,
Middle: backward queue matching $\cM_s^-$,
Bottom: stable matching $\cM_s$.}
\label{f-nopsfrag}
\end{figure}
The following are well-known facts about LCFS-PR queues with rate $\lambda$ Poisson arrivals and  
rate $\mu>\lambda$ i.i.d.\ exponential service times which are independent of the arrival process.  
Since $\lambda < \mu$, the queue is stable and each blue point in $\blues$ almost surely has a 
partner in $\reds$.  Let $x\in \blues$ be an arbitrary blue point and let $X^+$ be $x$'s matching 
distance in $\cM_s^+$.  It is clear that $X^+$ has the same distribution as the busy cycle in the
corresponding LCFS-PR queue, where the {\it busy cycle} is the duration of time from the arrival
of a job at an empty queue to the time the job leaves the queue.  It is known \cite{gross1998fundamentals} 
that the probability density function of the busy cycle is given by
\begin{align}
f_{\lambda, \mu}(t) = \frac{1}{t\sqrt{\rho}}e^{(\lambda + \mu)t}I_1(2t\sqrt{\lambda\mu}),
\quad t>0,
\nonumber
\end{align}
where $\rho = \frac{\lambda}{\mu}$, and $I_1$ is the modified Bessel function of the first kind. 
Let $B(\lambda, \mu)$ represent this distribution.
The average busy cycle duration is $\Expect(X^+) = 1/(\mu - \lambda)$.  In the following section we 
introduce a class of matchings which includes both stable and queue matchings.  
 
\subsection{Nested Matching}
For any interval $I\in \reals$, represent its closure by $\bar{I}$.  A matching $\cM$ is said 
to be \emph{nested} if for any $x,y \in \cS$, $x \in I_\cM(y)$ implies 
$\cM(x) \in \overline{I}_\cM(y)$.  Therefore, in any nested matching if $I_\cM(x) \cap I_\cM(y) \neq \emptyset$, 
then one of the matching segments is nested inside the other one.

\smallskip

\noindent {\bf Remark}. Since the matching segment of an unmatched point $x$ is $(x,\infty)$, 
there is no matching segment of a matched point in a nested matching which contains an unmatched point.

\smallskip
\noindent From the discussion in the previous section it is easy to see that any queue matching is nested.  The following lemma proves 
that the stable matching $\cM_s$ is also nested.

\begin{lemma}\label{stableisnested}
The stable matching $\cM_s$ is nested.
\end{lemma}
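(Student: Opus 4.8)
The plan is to argue by contradiction: suppose the stable matching $\cM_s$ is not nested, so there exist matched points $x,y \in \cS$ with $x \in I_{\cM_s}(y)$ but $\cM_s(x) \notin \overline{I}_{\cM_s}(y)$. This means exactly one endpoint of the matching segment $I_{\cM_s}(x)$ lies inside $I_{\cM_s}(y)$ (namely $x$) and the other endpoint $\cM_s(x)$ lies strictly outside it, so the segment $I_{\cM_s}(x)$ ``crosses'' one of the two endpoints of $I_{\cM_s}(y)$ — either $y$ itself or $\cM_s(y)$. Since $x$ is strictly between $y$ and $\cM_s(y)$, the points $x$ and $y$ have opposite colors, and likewise $x$ and $\cM_s(x)$, and $y$ and $\cM_s(y)$.

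First I would set up coordinates: without loss of generality assume $y < \cM_s(y)$, so $y < x < \cM_s(y)$, and suppose $\cM_s(x) < y$ (the case $\cM_s(x) > \cM_s(y)$ is symmetric). Now compare the four points $\cM_s(x) < y < x < \cM_s(y)$. The key step is to locate a blocking pair. Consider the two "inner" points $y$ and $x$: they have opposite colors (one blue, one red) since $x \in I_{\cM_s}(y)$. I claim $(x,y)$ — or rather the appropriately colored ordered pair among them — is a blocking pair for $\cM_s$. Indeed, $|x - y| < |x - \cM_s(x)|$ because $\cM_s(x) < y < x$ puts $\cM_s(x)$ farther from $x$ than $y$ is, and $|x - y| < |y - \cM_s(y)|$ because $y < x < \cM_s(y)$ puts $\cM_s(y)$ farther from $y$ than $x$ is. Hence $|x-y| < \min(|x - \cM_s(x)|, |y - \cM_s(y)|)$, so this pair blocks $\cM_s$, contradicting stability. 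This is the heart of the argument.

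The one subtlety I would be careful about is the degenerate possibility that $\cM_s(x) = y$ or $\cM_s(y) = x$, i.e.\ the two segments actually share an endpoint rather than genuinely crossing; but if $\cM_s(x) = y$ then $\cM_s(x) \in \overline{I}_{\cM_s}(y)$ trivially (it is an endpoint), contradicting our assumption, and similarly for $\cM_s(y)=x$, so these cases are vacuous. I would also note that all four points are distinct almost surely under the Poisson model, though the combinatorial argument above does not actually need genericity — it only needs the strict inequalities coming from the assumed non-nestedness. So the main (and really only) obstacle is just the bookkeeping of which of the two endpoints of $I_{\cM_s}(y)$ the segment $I_{\cM_s}(x)$ escapes past, and checking both symmetric cases produce the blocking pair $(x,y)$; no estimation or probabilistic input is required.
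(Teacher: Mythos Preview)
Your argument has a genuine gap. You assert that ``since $x$ is strictly between $y$ and $\cM_s(y)$, the points $x$ and $y$ have opposite colors,'' but this does not follow from anything. The interval $I_{\cM_s}(y)$ may well contain points of the same color as $y$ --- for instance, a blue $y$ matched to a red point far to its right can easily have another blue point $x$ sitting inside that interval. Your blocking-pair computation $|x-y|<\min(|x-\cM_s(x)|,|y-\cM_s(y)|)$ is correct, but it only produces a blocking pair when $x$ and $y$ have opposite colors, so the same-color case is unaddressed.

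The paper's proof treats the two color cases separately. When $x$ and $y$ have the same color (say, in your coordinates, $\cM_s(x)<y<x<\cM_s(y)$ with $x,y$ both blue), neither $(x,y)$ nor $(\cM_s(x),\cM_s(y))$ is even an eligible pair; instead one must look at the cross pairs $(y,\cM_s(x))$ and $(x,\cM_s(y))$. Neither of these is automatically blocking, but a short comparison of $|\cM_s(y)-x|$ and $|\cM_s(x)-y|$ shows that whichever is smaller yields a blocking pair. Concretely, writing $a=\cM_s(x)<b=y<c=x<d=\cM_s(y)$, if $(y,\cM_s(x))$ fails to block then $b-a\ge d-b$, and if $(x,\cM_s(y))$ fails to block then $d-c\ge c-a$; adding gives $b\ge c$, a contradiction. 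This extra step is what your proposal is missing.
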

 \begin{proof}
See appendix \ref{stableisnestedproof}. 
\end{proof}

\noindent Let $\cA$ be the set of all nested matchings between points in 
$\blues$ and $\reds$. We say a red point $r$ is a {\it potential match} for a blue point $b$, if there
exists a nested matching in which $b$ is matched to $r$.
For any blue point $b\in \blues$ define $\cP(b)$ to be the set of all potential matches of $b$,
\begin{align}
\cP(b) = \{r\in \reds: \exists \cM\in \cA \text{~s.t.~} \cM(b) = r\}. \nonumber
\end{align}

\noindent The following lemma shows that the set of potential matches of any two blue points are either 
disjoint or the same.

\begin{lemma} \label{potentialmatches}
For any  $b_1, b_2 \in \blues$, $\cP(b_1) \cap \cP(b_2) \neq \emptyset$ implies  
$\cP(b_1) = \cP(b_2)$.
\end{lemma}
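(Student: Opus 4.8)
The plan is to characterize the potential matches of a blue point purely through a ``balance'' function and then argue combinatorially. For $x<y$ in $\reals$, let
\[
g(x,y)=\bigl|\{r\in\reds:r\in(x,y)\}\bigr|-\bigl|\{b\in\blues:b\in(x,y)\}\bigr|,
\]
and set $g(y,x)=g(x,y)$. I claim that for $b\in\blues$ and $r\in\reds$ one has $r\in\cP(b)$ if and only if $g(b,r)=0$. The lemma then falls out of simple arithmetic with $g$.

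For the ``only if'' direction, suppose $\cM\in\cA$ with $\cM(b)=r$. Since $\cM$ is nested, every point of $\cS$ strictly between $b$ and $r$ is matched to a point in the closed interval with endpoints $b,r$; as $b$ and $r$ are matched to each other, such a point is in fact matched to a point strictly between $b$ and $r$, so these interior points are matched among themselves, forcing equally many red and blue ones, i.e.\ $g(b,r)=0$. For the ``if'' direction I would construct a nested matching sending $b$ to $r$ in three pieces: first nest-match the finitely many points strictly between $b$ and $r$ among themselves, which is possible because $g(b,r)=0$ makes that set colour-balanced and any finite colour-balanced configuration on a line admits a nested perfect matching (a stack / parenthesis-matching argument); then match $b$ to $r$; finally handle the points outside $[b,r]$ by deleting the interval $[b,r]$, identifying its endpoints, and taking the forward queue matching on the resulting configuration, which matches every blue point since $\lambda<\mu$. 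Pulling this matching back to $\reals$ and verifying that the three pieces nest correctly gives the desired matching.

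Granting the characterization, the lemma is short. Suppose $r^\ast\in\cP(b_1)\cap\cP(b_2)$ with $b_1\neq b_2$, so $g(b_1,r^\ast)=g(b_2,r^\ast)=0$. Splitting intervals at the points $b_1,b_2$ (blue) and $r^\ast$ (red) and doing a case analysis on their relative order, each case yields $g(b_1,b_2)=1$. Now take any $r\in\cP(b_1)$, so $g(b_1,r)=0$; splitting intervals at $b_1,b_2$ and at $r$ in each of the cases $r<\min(b_1,b_2)$, $\min(b_1,b_2)<r<\max(b_1,b_2)$, $r>\max(b_1,b_2)$ and using $g(b_1,b_2)=1$ gives $g(b_2,r)=0$, hence $r\in\cP(b_2)$ by the characterization. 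Thus $\cP(b_1)\subseteq\cP(b_2)$, and $\cP(b_2)\subseteq\cP(b_1)$ by symmetry, so $\cP(b_1)=\cP(b_2)$.

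The main obstacle is the ``if'' direction of the characterization, specifically making the gluing step rigorous: one must check that a matching segment of the outside piece that straddles $[b,r]$ remains nested once $[b,r]$ together with its internal matching is reinserted, and that the forward queue matching on the glued configuration indeed matches every blue point (which uses $\lambda<\mu$ and the LCFS-PR queue facts already recorded). The remaining ingredients — the ``only if'' direction, the stack argument for finite balanced sets, and the two case analyses with $g$ — are routine.
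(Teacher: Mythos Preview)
Your approach is essentially the paper's: characterize $\cP(b)=\{r\in\reds:g(b,r)=0\}$ via the same balance function $g$, then run the identical case analysis on the relative order of $b_1,b_2,r^\ast$ to get $g(b_1,b_2)=1$ and conclude $\cP(b_1)=\cP(b_2)$. The only difference is in the ``if'' construction, where the paper avoids your quotient-space gluing entirely: it simply takes the \emph{stable} matching on the points strictly between $b$ and $r$ and the \emph{stable} matching on the points outside $\{b,r\}\cup(b,r)$ (both nested by Lemma~\ref{stableisnested}, whose proof uses no Poisson hypothesis), and observes that stitching these with the pair $(b,r)$ is nested; this sidesteps the identification-of-endpoints and pullback checks you flagged as the main obstacle.
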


 \begin{proof}
See appendix \ref{continuousproofs}. 
\end{proof}

\noindent Now define the relation $\sim$ on points in $\blues$ as follow:
\begin{align}
b_1 \sim b_2 \Longleftrightarrow \cP(b_1) \cap \cP(b_2) \neq \emptyset.
\nonumber
\end{align}
According to Lemma \ref{potentialmatches}, for any $b_1, b_2 \in \blues$, if $\cP(b_1)$ and 
$\cP(b_2)$ are not disjoint, then they are the same. Therefore, $\sim$ is an equivalence relation on $\blues$. 
For any blue point $b \in \blues$, define $[b]$ to be $b$'s equivalence class in $\blues$, {\it i.e.}, 
\begin{align}
[b] \triangleq \{b' \in \blues: \cP(b') =  \cP(b)\}.
\nonumber
\end{align} 
Let $N(b)$ represent the size of $b's$ equivalence class, {\it i.e.}, $N(b) = |[b] |$.
Also define $N^+(b) = |\{b'\in [b], b'\ge b\}|$ and $N^-(b) = |\{b'\in [b], b'\le b\}|$. It is clear
that $N(b) = N^+(b) + N^-(b) - 1$.
In the following lemma we prove some facts about the structure of the equivalence classes. 

\begin{lemma}\label{equivalenceClass}
Suppose $\lambda < \mu$ and let $\blues$ and $\reds$ represent the set of blue and red points in a Poisson matching problem $PM(\lambda, \mu)$, respectively. For any given blue point $b \in \blues$
we have
\begin{enumerate}[label=(\roman*)]
\item there exist $r_1, r_2 \in \cP(b)$ such that $r_1 > b$ and $r_2 < b$, almost surely;
\item there exists exactly one potential red point $r \in \cP(b)$ between every two consecutive 
blue points in $[b]$;
\item $|\cP(b)| = N(b)+1$ on $\{|\cP(b)| < \infty\}$; and 
\item $N^+(b)$ and $N^-(b)$ are independent geometric random variables with parameter $1-\lambda/\mu$.
\end{enumerate}
\end{lemma}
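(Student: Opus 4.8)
The plan is to push the whole problem onto a single integer-valued random walk indexed by the points of $\cS$ and then read off (i)--(iv) from elementary crossing properties of that walk. Since $\blues$ and $\reds$ are independent Poisson processes, $\cS=\blues\cup\reds$ is a Poisson process whose points, enumerated in increasing order as $\dots<x_{-1}<x_0<x_1<\dots$, carry i.i.d.\ colours, red with probability $p:=\mu/(\lambda+\mu)$ and blue with probability $q:=\lambda/(\lambda+\mu)$. Define the walk $W$ by $W_n-W_{n-1}=+1$ if $x_n\in\reds$ and $W_n-W_{n-1}=-1$ if $x_n\in\blues$ (anchoring $W$ at any convenient index). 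Because $\lambda<\mu$ we have $p>q$, so $W$ is a doubly infinite simple random walk with positive drift; almost surely $W_n\to+\infty$ as $n\to+\infty$ and $W_n\to-\infty$ as $n\to-\infty$. The first ingredient I would isolate is the combinatorial characterization behind Lemma~\ref{potentialmatches}: for $b\in\blues$ and $r\in\reds$, $r\in\cP(b)$ iff the open interval with endpoints $b$ and $r$ contains equally many red and blue points. (``If'': pair $b$ with $r$, nest the interior points among themselves, and leave everything else unmatched; this is a nested matching. ``Only if'': in any nested matching every point strictly inside $I_\cM(b)$ is matched to another point strictly inside $I_\cM(b)$, so the interior contains equally many of each colour.) Writing $i$ and $j$ for the indices of $b$ and $r$, this condition becomes $W_j=W_i+1$, and the computation is the same whether $j>i$ or $j<i$. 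Hence $\cP(b)$ is precisely the set of red points at level $W_i+1$, equivalently the up-steps of $W$ across the barrier separating heights $W_i$ and $W_i+1$; and $\cP(b)\neq\emptyset$ a.s., since the walk, sitting at level $W_i$ just after the down-step at $b$ and diverging to $+\infty$, must eventually perform an up-step from $W_i$ to $W_i+1$.

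Next I would identify the equivalence classes. For a blue point $b'$, Lemma~\ref{potentialmatches} gives $b'\in[b]\iff\cP(b')\cap\cP(b)\neq\emptyset$; since $\cP(b')$ and $\cP(b)$ are the sets of red points at levels $W_{i(b')}+1$ and $W_i+1$ respectively, since a red point has a single level, and since $\cP(b)\neq\emptyset$, this holds iff $W_{i(b')}=W_i$. Thus $[b]$ is exactly the set of blue points at level $W_i$, i.e.\ the down-steps of $W$ across the same barrier (and $b$ itself is one such down-step). As the walk is below the barrier near $-\infty$ and above it near $+\infty$, it crosses the barrier only finitely often a.s., its first and last crossings are up-crossings, and up- and down-crossings strictly alternate; writing $N:=N(b)$ for the number of down-crossings there are $N+1$ up-crossings, and in left-to-right order the two sets interleave as $r_0<b_1<r_1<b_2<\dots<b_N<r_N$ with $\{b_1,\dots,b_N\}=[b]$ and $\{r_0,\dots,r_N\}=\cP(b)$. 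This interleaving is exactly (iii), $|\cP(b)|=N(b)+1$, and (ii), a unique element of $\cP(b)$ strictly between consecutive elements of $[b]$; and since $b=b_t$ for some $1\le t\le N$, the elements $r_0<b$ and $r_N>b$ of $\cP(b)$ give (i).

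For (iv), observe that $N^+(b)-1$ is the number of down-crossings of the barrier strictly to the right of $b$. Conditioning on $b$ being a point of $\blues$ (Slivnyak), the portion of $W$ to the right of $b$ is a fresh positive-drift simple random walk started at level $W_i$ (just after the down-step at $b$), and the portion to the left is an independent fresh walk; since $N^+(b)$, respectively $N^-(b)$, is a function of the configuration on $(b,\infty)$, respectively $(-\infty,b)$, the two are independent by the independence of a Poisson process over disjoint sets. For the law of $N^+(b)-1$ I would use the strong Markov property to set up a geometric recursion: from level $W_i$ the walk a.s.\ first rises to $W_i+1$ (an up-step, not counted); from there it returns to level $W_i$, performing exactly one counted down-crossing, with probability $q/p=\lambda/\mu=:\rho$, and otherwise escapes upward forever; each successful return restarts the same experiment from level $W_i$. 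Hence $\Prob(N^+(b)-1=m)=(1-\rho)\rho^{m}$ for $m\ge0$, i.e.\ $N^+(b)$ is geometric with parameter $1-\lambda/\mu$; reflection invariance of the law about $b$ then gives $N^-(b)\stackrel{d}{=}N^+(b)$, finishing (iv). As a byproduct $N(b)=N^+(b)+N^-(b)-1<\infty$ a.s., so the conditioning event in (iii) has probability one.

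I expect the main obstacle to be making the identification ``$[b]=\{$blue points at level $W_i\}$'' airtight --- in particular pinning down the nested-matching characterization of $\cP(b)$ and carefully converting the deterministic crossing picture into the almost-sure statements of the lemma --- together with, in (iv), verifying that $N^+(b)$ and $N^-(b)$ really are measurable with respect to disjoint halves of the line so that Poisson independence applies cleanly.
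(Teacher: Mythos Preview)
Your proposal is correct and follows essentially the same approach as the paper: both encode the coloured point process as a biased simple random walk on $\integers$ (red $=+1$, blue $=-1$), identify $\cP(b)$ and $[b]$ with the level sets given by Lemma~\ref{potentialmatches}/\ref{potentialmatches1}, and then read off parts (i)--(iv) from drift-to-$+\infty$, alternation of crossings, and the classical return probability $q/p=\lambda/\mu$. Your barrier-crossing formulation is a bit cleaner than the paper's explicit tracking of $g_t=g(x_{-1},x_t)$, and you handle the independence of $N^+(b)$ and $N^-(b)$ more carefully than the paper (which invokes only symmetry), but the substance is the same.
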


\begin{proof}
See appendix \ref{third}.
\end{proof}

\noindent Let $b \in \blues$ be an arbitrary blue point. Since $\lambda<\mu$, then almost surely $|\cP(b)|<\infty$. 
From Lemma \ref{equivalenceClass}, we can conclude that blue and red points in $[b] \cup \cP(b)$ 
form a finite sequence $\{w_i\}$, for $-2N^-(b) +1 \le i \le 2N^+(b) -1$, where 
$b_0 = b$, $[b] = \{w_i: i \text{~is even} \}$, and $\cP(b)= \{w_i: i \text{~is odd} \}$. 
In other words, this sequence starts with 
a potential red point, alternates between points in $[b]$ and $\cP(b)$, and ends with 
another potential red point. We call the sequence $\{w_i\}$ 
$b$'s \emph{potential wave} and denote it by $\mathcal{W}(b)$. 
Figure $2$ shows potential waves of an instance of $PM(\lambda, \mu)$.
\footnote{This instance is the same as the instance in Figure 1. As we can see in all the 
matchings shown in Figure $1$, each point is matched to a point within the same potential wave 
shown in Figure $2$.} 
A key observation here is that in any nested matching, any blue point in $[b]$ should be matched to a red point in $\cP(b)$. Therefore, 
a nested matching first partitions $\cS$ into potential waves and then matches points 
within each wave, separately. In the following section we present our results on the analysis of the matching distance $X$ in the stable
matching $\cM_s$.

\begin{figure}[t]
\begin{center}
\includegraphics[width=.45\textwidth]{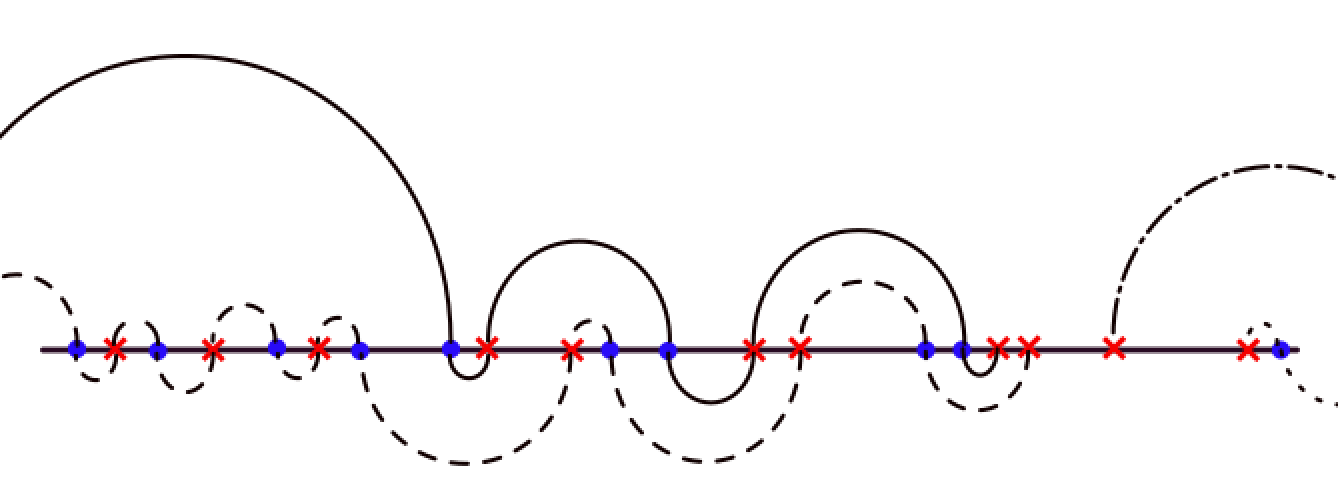}
\end{center}
\caption{Different potential waves of a sample problem.}
\label{f-nopsfrag}
\end{figure}

\subsection{Matching distance, $X$}
The following theorem, proves bounds on the distribution of $X$, in terms of busy cycles.

\begin{theorem} \label{bound}
Consider an instance of a Poisson matching problem $PM(\lambda, \mu)$, where $\lambda < \mu$.
Then we have
\begin{align}
\min\{B_1, B_1'\} < X < \max \left\{ \sum_{i=1}^{2N-1} B_i , \sum_{i=1}^{2N'-1} B_i' \right\}
\nonumber
\end{align}
where $B_i$ and $B_i'$ are i.i.d.\ random variables with distribution $B(\lambda, \mu)$, and 
$N$ and $N'$ are i.i.d.\ geometric random variables with parameter $1-{\lambda}/{\mu}$. 
\end{theorem}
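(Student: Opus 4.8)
The plan is to use the nestedness of $\cM_s$ (Lemma~\ref{stableisnested}) to locate $\cM_s(b)$ inside $b$'s potential wave $\cW(b)$, and then to show that, read outward from $b$, the gaps between consecutive points of $\cW(b)$ are exactly matching distances in the forward and backward queue matchings --- hence i.i.d.\ busy cycles $B(\lambda,\mu)$ --- whose number is governed by the geometric variables $N^{\pm}(b)$ of Lemma~\ref{equivalenceClass}(iv). I would work under the Palm distribution of $\blues$ at $b$ and exploit the reflection $x\mapsto -x$, which maps $PM(\lambda,\mu)$ to an identically distributed copy and interchanges ``right of $b$'' with ``left of $b$''; it therefore suffices to analyse the event $\{\cM_s(b)>b\}$, the mirror argument handling $\{\cM_s(b)<b\}$, and the two combining into the $\min$ (lower bound) and $\max$ (upper bound) in the statement.

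Since $\cM_s\in\cA$, we have $\cM_s(b)\in\cP(b)$, so on $\{\cM_s(b)>b\}$ one has $\cM_s(b)=w_j$ for some odd $j$ with $1\le j\le 2N^{+}(b)-1$, and hence $w_1-w_0\le X=w_j-w_0\le w_{2N^{+}(b)-1}-w_0$. This reduces everything to the span of the right half $w_0<w_1<\dots<w_{2N^{+}(b)-1}$ of the wave, whose even-indexed points lie in $[b]$ and whose odd-indexed points lie in $\cP(b)$ (Lemma~\ref{equivalenceClass}). Writing $G_m:=w_m-w_{m-1}$, the key claim is that each $G_m$ is a queue matching distance: using the characterisation $\cP(x)=\{r:|\reds\cap(x,r)|=|\blues\cap(x,r)|\}$ (a strengthening of Lemma~\ref{potentialmatches} that one would first establish) together with the fact that in the forward (resp.\ backward) queue matching a blue point is matched to its nearest potential match on the right (resp.\ left), one obtains $\cM_s^{+}(w_{2\ell-2})=w_{2\ell-1}$ and $\cM_s^{-}(w_{2\ell})=w_{2\ell-1}$. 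Thus $G_{2\ell-1}=d_{\cM_s^{+}}(w_{2\ell-2})$ and $G_{2\ell}=d_{\cM_s^{-}}(w_{2\ell})$, so by the identity $X^{+}\sim B(\lambda,\mu)$ recorded for the forward queue matching (and its reflection for the backward one) each $G_m$ is distributed as the busy cycle $B(\lambda,\mu)$.

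For the independence and the count I would introduce the c\`adl\`ag walk $S_t:=|\reds\cap(b,t)|-|\blues\cap(b,t)|$ started at $0$ at $b$. Its zeros on $(b,\infty)$ are precisely $w_1,w_3,\dots$; since $\mu>\lambda$ it drifts to $+\infty$, so it makes only finitely many excursions above $0$, and $N^{+}(b)$ is their number --- geometric with parameter $1-\lambda/\mu$, consistently with Lemma~\ref{equivalenceClass}(iv). The interval $[w_{m-1},w_m]$ is the $m$-th ascent segment (on which $S$ goes from $0$ to $1$) or finite upper excursion (on which $S$ returns from $1$ to $0$), so the strong Markov property shows that, given $N^{+}(b)$, the $G_m$ are i.i.d.\ $\sim B(\lambda,\mu)$ and independent of $N^{+}(b)$ --- in particular the event ``the current upper excursion is infinite'', which pins down $N^{+}(b)$, is independent of the durations of the completed cycles. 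Hence $w_{2N^{+}(b)-1}-w_0\overset{d}{=}\sum_{i=1}^{2N-1}B_i$ with $N$ geometric$(1-\lambda/\mu)$ and the $B_i$ i.i.d.\ $B(\lambda,\mu)$, independent of $N$, giving $X\le\sum_{i=1}^{2N-1}B_i$ on $\{\cM_s(b)>b\}$. For the lower bound I use only $X\ge w_1-w_0=G_1\sim B(\lambda,\mu)$, which is a function of $\blues\cup\reds$ restricted to $(b,\infty)$; by reflection, on $\{\cM_s(b)<b\}$ one has $X\ge w_0-w_{-1}$, a function of the restriction to $(-\infty,b)$. Under the Palm measure these two restrictions are independent, so $(w_1-w_0,\,w_0-w_{-1})$ are i.i.d.\ $B(\lambda,\mu)$ and $X\ge\min\{B_1,B_1'\}$ always (strictly one gets ``$\ge$'', with equality precisely when $\cM_s$ pairs $b$ with its nearest potential match). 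The same independence of the left and right half-lines under the Palm measure makes the left family $(N^{-}(b),\{G^{-}_m\})$ independent of the right family $(N^{+}(b),\{G_m\})$, delivering the full independence asserted, and taking the worse of the two sides produces the $\max$.

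The step I expect to be the main obstacle is this core identification: proving cleanly that the consecutive wave-gaps are genuine queue matching distances --- which requires the ``$\cP(b)=\{g=0\}$'' characterisation and the fact that the forward and backward queue matchings each select the nearest potential match, i.e.\ exactly the material suppressed in the commented-out lemma --- and then that the resulting gaps form an i.i.d.\ $B(\lambda,\mu)$ sequence whose (geometric) length is independent of them. The excursion/regeneration argument through $S$ is what makes both the independence and the precise ``geometric number of i.i.d.\ busy cycles'' shape fall out, but the strong Markov decomposition must be set up with care, particularly the independence between the completed cycles and the event that terminates the wave.
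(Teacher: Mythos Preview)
Your proposal is correct and follows essentially the same route as the paper: both arguments place $\cM_s(b)$ in $\cP(b)$ via nestedness, identify the consecutive wave gaps $w_m-w_{m-1}$ as busy cycles $B(\lambda,\mu)$, invoke Lemma~\ref{equivalenceClass}(iv) for the geometric counts $N^{\pm}(b)$, and read off the bounds from the nearest and farthest points of $\cP(b)$ on each side. The paper packages the ``gaps are i.i.d.\ busy cycles'' step into a separate Lemma~\ref{busycycle}, whose proof is essentially the one-line observation that each gap is a forward (or, by reflection, backward) queue matching distance; your excursion/strong-Markov argument through the walk $S_t$ is a more explicit justification of the same fact, and in particular of the independence among the gaps and between the gaps and $N^{\pm}(b)$, which the paper asserts but does not spell out. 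Your remark that the lower bound is really ``$\ge$'' rather than ``$>$'' is well taken.
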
 
\begin{proof}
See appendix \ref{boundproof}.
\end{proof}

\noindent Using Theorem \ref{bound} we can find the following upper bound for the expected matching 
distance $\Expect(X)$.
\begin{corollary} \label{upperbound}
\begin{align}
\mathbb{E}(X) \le \left(1 + \frac{\mu+\lambda}{\mu-\lambda}\right)\frac{1}{\mu - \lambda}.
\nonumber
\end{align}
\end{corollary}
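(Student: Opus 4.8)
The plan is to take expectations in the upper bound of Theorem~\ref{bound} and evaluate the right-hand side, the one genuinely delicate point being the handling of the maximum. Write $S=\sum_{i=1}^{2N-1}B_i$ and $S'=\sum_{i=1}^{2N'-1}B_i'$, so that $X<\max\{S,S'\}$ and hence $\Expect(X)\le\Expect(\max\{S,S'\})$. The first ingredient is the value of $\Expect(S)$: by the facts about LCFS-PR busy cycles recalled earlier in this section, each $B_i\sim B(\lambda,\mu)$ has mean $\Expect(B_i)=1/(\mu-\lambda)$; by Lemma~\ref{equivalenceClass}(iv), $N$ is geometric with parameter $1-\lambda/\mu$, so $\Expect(N)=\mu/(\mu-\lambda)$; and $N$, being determined by the combinatorics of the equivalence class $[b]$, is independent of the busy-cycle lengths $B_i$. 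Wald's identity then gives
\begin{align}
\Expect(S)=\Expect(2N-1)\,\Expect(B_1)=\Bigl(\tfrac{2\mu}{\mu-\lambda}-1\Bigr)\tfrac{1}{\mu-\lambda}=\frac{\mu+\lambda}{(\mu-\lambda)^2}, \nonumber
\end{align}
so the target bound $\bigl(1+\tfrac{\mu+\lambda}{\mu-\lambda}\bigr)\tfrac{1}{\mu-\lambda}$ is precisely $\Expect(B_1)+\Expect(S)$.

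It therefore remains to prove $\Expect(\max\{S,S'\})\le\Expect(B_1)+\Expect(S)$, i.e.\ $\Expect\bigl((S'-S)^+\bigr)\le\Expect(B_1)$. Here care is needed: the crude bound $\max\{S,S'\}\le S+S'$ only yields $2\Expect(S)$, and even the exact value $\Expect(\max\{S,S'\})$ overshoots the target by a constant factor (it is of order $\tfrac32\Expect(S)$ as $\lambda\uparrow\mu$), because $S$ -- a geometric mixture of sums of busy cycles -- is not concentrated: its standard deviation is of the same order as its mean. So one cannot treat $X$ merely as a variable dominated by $\max\{S,S'\}$; one must use the structure. Recall from Lemma~\ref{equivalenceClass} and the surrounding discussion that $S$ and $S'$ measure the extent of $b$'s potential wave $\cW(b)$ to its right and to its left, that these two extents are functions of the two \emph{independent} one-sided Poisson processes and of the independent geometric counts $N^+(b),N^-(b)$, and that the stable partner $\cM_s(b)\in\cP(b)$ lies on exactly one side, so that $X$ is at most that side's extent. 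Conditioning on which side $\cM_s(b)$ lies on and using the reflection symmetry of $PM(\lambda,\mu)$ (which makes ``left'' and ``right'' exchangeable), it suffices to control the correlation between the side chosen by $\cM_s$ and that side's extent: a non-positive correlation would already give $\Expect(X)\le\Expect(S)$, and the weaker claimed bound $\Expect(B_1)+\Expect(S)$ leaves an extra busy cycle of slack to absorb a cruder control and the passage from the ``$\max$'' form of Theorem~\ref{bound} to the one-sided estimate. Combining this with the display yields $\Expect(X)\le\Expect(B_1)+\Expect(S)=\bigl(1+\tfrac{\mu+\lambda}{\mu-\lambda}\bigr)\tfrac{1}{\mu-\lambda}$.

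The step I expect to be the main obstacle is exactly this decoupling: establishing that, conditionally on the Poisson process on one side, the probability that $\cM_s(b)$ lies on that side does not increase with that side's wave extent -- a monotonicity (FKG-type) property of the stable matching restricted to a potential wave. Should that be awkward to prove directly, an alternative I would pursue is a sharper pointwise bound inside $\cW(b)$ via nestedness (Lemma~\ref{stableisnested}): every blue point of $[b]$ strictly between $b$ and $\cM_s(b)$ is matched by $\cM_s$ to a red point of $\cP(b)$ in the same span, so $X$ is a sum of $K$ consecutive busy cycles of $\cW(b)$; the difficulty there is that $K$ -- the position of $\cM_s(b)$ within the wave -- is \emph{not} independent of the cycle lengths, since the stable matching favours short cycles, so one must first dominate $K$ by a variable independent of the $B_i$ with mean $1+\Expect(2N-1)$ before Wald's identity can be applied.
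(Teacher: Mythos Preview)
Your proposal has a genuine gap. You correctly diagnose that the distributional bound of Theorem~\ref{bound} alone cannot yield the corollary: as you note, $\Expect(\max\{S,S'\})$ is of order $\tfrac{3}{2}\Expect(S)$ as $\lambda\uparrow\mu$, whereas the target $\Expect(B_1)+\Expect(S)=(1+o(1))\Expect(S)$. So the inequality $\Expect(\max\{S,S'\})\le\Expect(B_1)+\Expect(S)$ that you set out to prove is in fact \emph{false} in that regime, and the ``combining this with the display yields\ldots'' line is not justified. The two rescue routes you sketch --- an FKG-type negative correlation between the side chosen by $\cM_s$ and that side's extent, and a Wald-type bound on the number $K$ of busy cycles spanned --- are both left as heuristics; neither is carried through, and you flag them yourself as the main obstacle.

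The paper sidesteps all of this with a short averaging argument that does not pass through Theorem~\ref{bound}. Fix the potential wave $\cW(b)=\{w_i\}$ and set $u_i=w_{i+1}-w_i$. For each gap $(w_i,w_{i+1})$ let $c_i$ be the number of matching segments of $\cM_s$ that cross it; then, since $\cM_s$ is nested and every $b'\in[b]$ is matched within $\cW(b)$,
\[
\sum_{b'\in[b]} d_{\cM_s}(b') \;=\; \sum_i c_i\,u_i .
\]
Nestedness also forces $c_i\le a_i\mathrel{:=}\min\bigl(\#\{\text{wave points}\le w_i\},\,\#\{\text{wave points}\ge w_{i+1}\}\bigr)$, and a direct count gives $\sum_i a_i=|[b]|\bigl(|[b]|+1\bigr)$. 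Averaging over $b'\in[b]$ (which, by exchangeability of positions within the wave, computes $\Expect(X\mid |[b]|)$) and using $\Expect(u_i)=1/(\mu-\lambda)$ from Lemma~\ref{busycycle} yields
\[
\Expect\bigl(X\,\big|\,|[b]|\bigr)\;\le\;\bigl(|[b]|+1\bigr)\frac{1}{\mu-\lambda},
\]
and then $\Expect|[b]|=\Expect(N^+)+\Expect(N^-)-1=(\mu+\lambda)/(\mu-\lambda)$ from Lemma~\ref{equivalenceClass}(iv) finishes the proof. The trick you were missing in your second alternative is exactly this: rather than control $K$ for a single $b$ (where $K$ and the $u_i$ are entangled), sum $d_{\cM_s}$ over the whole equivalence class and bound the crossing counts $c_i$ by quantities $a_i$ that depend only on the wave's combinatorics, not on the $u_i$.
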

\begin{proof}
See appendix \ref{upperboundproof}
\end{proof}

\noindent \textbf{Remark}. Note that the results of Theorem \ref{bound} and Corollary \ref{upperbound} also hold if $X$ is the matching 
distance in any nested matching. 

\smallskip
\noindent In the next theorem we improve the upper bound given in Corollary \ref{upperbound} for the
expected matching distance $\Expect(X)$. The proof of this theorem is extensive and
requires some detailed combinatorial arguments. For more details see appendix \ref{forth}.

\begin{theorem} \label{Xupperbound}
For the stable matching $\cM_s$, we have
\begin{align}
\mathbb{E}(X) \le \left(1 + \ln\bigl(\frac{\mu+\lambda}{\mu-\lambda}\bigr)\right)\frac{1}{\mu - \lambda}.
\nonumber
\end{align}
\end{theorem}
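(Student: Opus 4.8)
The plan is to localise the whole computation inside the potential wave of a single arbitrary blue point $b\in\blues$, to recognise the stable matching restricted to that wave as a ``merge the shortest gap'' process, and to bound $\Expect(X)$ by the expected number of wave-gaps separating $b$ from its stable partner. By Lemma~\ref{equivalenceClass}, $\cW(b)=(w_i)$ is almost surely finite, $b=w_0$, the potential reds $\cP(b)$ occupy the odd indices, and $\cM_s(b)=w_J$ for an odd $J$ with $1\le J\le 2N^+(b)-1$ or $-(2N^-(b)-1)\le J\le -1$ according to which side the partner lies on. Set $L=|J|$, so $X=\sum_{i=1}^{L}g_i$, where $g_1,g_2,\dots$ are the successive gaps of $\cW(b)$ read outward from $b$ towards the partner. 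As in the analysis behind Theorem~\ref{bound}, each $g_i$ is a nearest potential match on one side, hence coincides with a forward or backward queue match, so $g_i\sim B(\lambda,\mu)$ and $\Expect(g_i)=1/(\mu-\lambda)$; moreover, applying the strong Markov property of the Poisson processes successively at the points $w_i$ (which are stopping times for the outward scan) shows that the gaps on each arm are i.i.d., that the two arms are independent, and that $N^+(b),N^-(b)$ are independent geometric$(1-\rho)$ variables independent of the gaps, and that conditioning on $(N^+(b),N^-(b))$ leaves the arm-gaps i.i.d.\ $B(\lambda,\mu)$.

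Having conditioned on the arm lengths $(N^+(b),N^-(b))$, the wave has a fixed combinatorial shape and the mutually-closest-pairs procedure producing $\cM_s$, restricted to it, is the process: repeatedly resolve the (locally) shortest present gap by matching its two — necessarily opposite-coloured — endpoints, delete them, and fuse the three surrounding gaps into one. Then $X$ is the final gap incident to $b$ when $b$ is deleted, $L$ is the number of original gaps fused into that gap, and $X=\sum_i g_i\,\ones(g_i\text{ fused into }b\text{'s gap})$, $L=\sum_i\ones(g_i\text{ fused into }b\text{'s gap})$. The two quantities to control are $\Expect(X)$ against $\Expect(L)$, and $\Expect(L)$ itself; the target inequality will follow from $\Expect(X)\le \Expect(L)/(\mu-\lambda)$ together with $\Expect(L)\le 1+\ln\frac{\mu+\lambda}{\mu-\lambda}$.

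The combinatorial heart is the bound $\Expect(X)\le \Expect(L)/(\mu-\lambda)$. It should hold because the process always resolves the \emph{shortest} available gap, so the gaps fused into $b$'s matching gap are biased to be short rather than long; I would make this precise by showing that, with all gaps but $g_i$ frozen and the wave shape fixed, the event $\{g_i\text{ is fused into }b\text{'s gap}\}$ is a non-increasing function of $g_i$. Conditioning on the other gaps (which are independent of $g_i$) and applying the elementary Chebyshev/Harris inequality then gives $\mathrm{Cov}\bigl(g_i,\ones(g_i\text{ fused})\bigr)\le 0$, so that
\[
\Expect(X)=\sum_i\Expect\bigl(g_i\,\ones(g_i\text{ fused})\bigr)\le\sum_i\Expect(g_i)\,\Prob(g_i\text{ fused})=\frac{\Expect(L)}{\mu-\lambda},
\]
and averaging over $(N^+(b),N^-(b))$ keeps this. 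Already the toy case in which $b$'s wave is a single red--blue--red triple, where $L=1$ and $X$ is the \emph{smaller} of the two gaps at $b$, shows that this bound genuinely uses the shortest-gap rule (there $\Expect(X)=\Expect(\min(B_1,B_2))<1/(\mu-\lambda)$). Pushing the required monotonicity through the cascading deletions of the merging process — via a coupling/FKG argument on the gap-length sequence conditioned on the wave shape — is the step I expect to be hardest.

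Finally I would bound $\Expect(L)$. Since $L$ is odd, $\Expect(L)=2\sum_{k\ge1}\Prob(L\ge 2k-1)-1$ and $\Prob(L\ge1)=1$, so it suffices to show $\sum_{k\ge2}\Prob(L\ge 2k-1)\le \tfrac12\ln\frac{1+\rho}{1-\rho}$. For $b$'s partner to lie at least $2k-1$ gaps away, the wave must first extend that far on one side, which by the geometric law of $N^\pm(b)$ already costs a factor $\rho^{\,k-1}$; given that, the shortest-gap dynamics must steer $b$ that deep into the wave, which a combinatorial analysis shows contributes a further factor reciprocal in $k$, yielding a bound of the form $\Prob(L\ge 2k-1)\le \rho^{\,2k-3}/(2k-3)$ for $k\ge2$. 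Summing, $\sum_{k\ge2}\rho^{\,2k-3}/(2k-3)=\sum_{n\ge1}\rho^{\,2n-1}/(2n-1)=\tfrac12\ln\frac{1+\rho}{1-\rho}$, hence $\Expect(L)\le 1+\ln\frac{\mu+\lambda}{\mu-\lambda}$, and combining with the previous step gives $\Expect(X)\le\bigl(1+\ln\frac{\mu+\lambda}{\mu-\lambda}\bigr)/(\mu-\lambda)$. The delicate parts are the monotonicity of the third paragraph and the tail estimate for $L$ of this one — both amount to a careful reading of how ``merge the shortest gap'' acts on a wave whose two arms have geometric length, which is exactly the extensive combinatorial argument the statement alludes to; the conceptual gain over Corollary~\ref{upperbound} is that the expected arm length $\frac{\mu+\lambda}{\mu-\lambda}$ now enters only through its logarithm.
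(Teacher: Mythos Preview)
Your overall architecture---localise to the potential wave, view the stable matching on the wave as the ``resolve the shortest gap and fuse three gaps into one'' process, and separate the contribution of the gap lengths from the combinatorics---matches the paper's. But the key analytic step you propose, the Harris/Chebyshev inequality based on the claim that for each gap $g_i$ the event ``$g_i$ lies between $b$ and $\cM_s(b)$'' is non-increasing in $g_i$ with the other gaps frozen, is \emph{false}. A concrete counterexample on a wave $r_1\,b\,r_2\,b_2\,r_3$ with gaps $(u_1,u_2,u_3,u_4)=(10,3,2,u_4)$: for $u_4=1$ the shortest gap is $u_4$, so $b_2$--$r_3$ match first, after which $b$ faces gaps $10$ and $3$ and matches $r_2$; hence $u_4$ is \emph{not} between $b$ and its partner. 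For $u_4=4$ the shortest gap is $u_3=2$, so $r_2$--$b_2$ match first, after which $b$ faces gaps $10$ and $3+2+4=9$ and matches $r_3$; hence $u_4$ \emph{is} between $b$ and its partner. The indicator is $0$ on $(0,2)$, $1$ on $(2,5)$, $0$ on $(5,\infty)$ as a function of $u_4$, so no monotone coupling can give the covariance inequality you need. Your tail bound $\Prob(L\ge 2k-1)\le \rho^{2k-3}/(2k-3)$ is also asserted without argument; the geometric factor is clear, but the claimed $1/(2k-3)$ from the dynamics is exactly the ``extensive combinatorial argument'' and needs a proof.

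The paper replaces your pointwise monotonicity by a \emph{global concavity} of the expected \emph{total} matching distance in the wave as a function of the gap vector. Conditioning on $m=|[b]|$ and the multiset of gaps $\xx=(x_1,\dots,x_{2m})$, it defines $E(\xx)$ as the average over all permutations of $\xx$ of the sum of the $m$ matching distances; since the gaps are i.i.d.\ this gives $\Expect(X\mid m,\xx)=E(\xx)/m$. An explicit expansion of $E(\xx)$ as a nonnegative combination of functions of the form $\min\bigl(\sum_{j\in S_1}x_j,\dots,\sum_{j\in S_r}x_j\bigr)$ shows $E$ is concave; together with permutation symmetry this yields $E(\xx)\le E(\mathbf{1}_{2m})\cdot\bar{x}$. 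A one-step recursion then gives $E(\mathbf{1}_{2m})\le m(1+\ln m)$, and Jensen on $\ln m$ with $\Expect m=(\mu+\lambda)/(\mu-\lambda)$ finishes. So the logarithm enters through the recursion for the all-ones configuration, not through a tail bound on your $L$; and the passage from gap lengths to $1/(\mu-\lambda)$ uses concavity of $E(\xx)$ in the whole vector rather than coordinate-wise monotonicity of indicators.
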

\begin{proof}
See appendix \ref{forth}.
\end{proof}

In order to evaluate the goodness of the bound in Theorem \ref{Xupperbound}, note that for large values of $\mu$ ($\mu \gg \lambda$), with a high probability, each blue point will be matched to the closest red point to it. Therefore,
as $\mu/\lambda \to \infty$, $X$ converges to an exponential distribution with rate $2\mu$ (minimum of two i.i.d.\ exponentials with rate $\mu$) and $\Expect(X) \sim \frac{1}{2\mu}$. However, 
from Theorem \ref{Xupperbound}, in the limit as $\mu/\lambda \to \infty$, $\Expect(X)$
is upper bounded by $ \frac{1}{\mu}$.

\section{Conclusion}
\label{conclusion}
This paper introduced a model for studying matching markets in which preference 
lists are drawn according to distances in appropriate metric spaces, either between 
the profiles of participants or between the participants themselves.  The model
naturally captures several aspects of real world matching markets.  Various results
regarding the uniqueness and quality of stable matchings were obtained.  Specifically,
for matchings on the hypercube under the Hamming and Weighted Hamming distances, 
lower and upper bounds were obtained on the dimension of the hypercube (equal to 
the number of questions a participant in a dating site needs to answer) so as to
obtain unique stable partners or stable matchings.  Furthermore, bounds on the
distribution and the average value of the matching distance of a typical participant
(a measure of the quality of the stable matching) were obtained for stable matchings
on the hypercube and on the real line.  

We view this work as a first step in studying matching markets in the metric space
setting.  Several obvious next steps suggest themselves, notably studying the problem
under dynamic inputs; i.e., as participants arrive and depart.   
     
\bibliographystyle{ACM-Reference-Format}

\appendix
\section{Appendix}
\subsection{\textbf{Proofs omitted from section \ref{hypercubesection}}} \label{hypercubeappendix}
\textbf{Proof of Lemma \ref{sameDistanceLemma}:}
Assume, by contradiction, that there exist stable matchings $\mu_1$ and $\mu_2$ so that for
some $z\in\cS$, $r_1 = d_{\mu_1}(z) > d_{\mu_2}(z) = r_2$.  Let $r = (r_1+r_2)/2$.  
For each $x\in\cS$, let $P'_x = \{y\in P_x: d(x,y)\leq r\}$ be the preference list $P_x$ 
truncated to contain only those participants who are at a distance less or equal to $r$ 
from $x$.  Let the ordering in the truncated list $P'_x$ be denoted by $\succ_x'$.  
Call the PMP restricted to the truncated preference lists as the ``truncated matching problem''.
In the truncated matching problem, each person prefers to remain unmatched than to match with 
a person at a distance greater than $r$ from them.  Let $\mu$ be a stable matching for the PMP
which has stable partners with a matching distance greater than $r$.  Construct the partial 
matching $\mu'$ from $\mu$ by removing all pairs with a distance greater than $r$. 
We show that $\mu'$ is a stable matching for the truncated matching problem.  Suppose $m$ and $w$ 
are not matched to each other in $\mu'$.  If $d(m,w) > r$, then clearly $(m,w)$ cannot 
form a blocking pair for the truncated matching problem.  Suppose $d(m,w) \le r$.
Since $m$ and $w$ are not matched to each other in $\mu'$, they cannot be matched to 
each other in $\mu$.  Moreover, since $\mu$ is stable, either $\mu(m) \succ_m w$, or 
$\mu(w) \succ_w m$.  Without loss of generality, assume $\mu(m) \succ_m w$. Therefore, 
\begin{align}
d\bigl(m,\mu(m)\bigr) \le d(m,w)\le r \Rightarrow \mu(m) \succ_m' w.
\nonumber
\end{align}
This implies $m$ is also matched to $\mu(m)$ in $\mu'$; \emph{i.e.}, $\mu'(m) = \mu(m)$. 
Therefore, $\mu'(m) \succ_m' w$ and $(m,w)$ cannot be a blocking pair for $\mu'$. 
This proves that $\mu'$ is a stable matching for the truncated matching problem.
Now define $S_r(\mu)$ to be the set of all users who are matched to someone at a 
distance greater than $r$ in $\mu$,
$$S_r(\mu) \overset{\Delta}{=} \left\{x\in S: d_\mu(x) > r\right\}.$$
It is clear that $S_r(\mu)$ is the same set of users who are not matched in $\mu'$.
By the Rural Hospital Theorem, the set of unmatched men and women in the 
truncated matching problem is the same in all stable matchings. 
This implies $\cS_r(\mu)$ does not depend on $\mu$.
This contradicts our initial assumption, since $z\in\cS_r(\mu_2)$ but $z\notin\cS_r(\mu_1)$, 
proving the lemma.
\epr

Consider the RPMP-$k$.  Let $\cM$ and $\cW$ represent the set of $n$ men and $n$ women, respectively. 
For any profile $\ba = (a_1, ..., a_k) \in Q_k$, let $\cM_\ba$ and $\cW_\ba$ be the sets of all men 
and women whose profiles equal $\ba$, respectively.  Define $\cS_\ba = \cM_\ba \cup \cW_\ba$. 

\begin{lemma}\label{claim}
Fix $\ba \in Q_k$ and without loss of generality assume $|\cM_\ba| \le |\cW_\ba|$.  We claim 
that in every stable matching, each man in $\cM_\ba$ will be matched to a woman in $\cW_\ba$. 
\end{lemma}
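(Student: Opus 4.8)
The plan is to fix a profile $\ba$ with $|\cM_\ba| \le |\cW_\ba|$ and show that a blue-colored... wait, let me recompute. The plan is to fix a profile $\ba \in Q_k$ with $|\cM_\ba| \le |\cW_\ba|$ and argue that in any stable matching $\mu$, every man $m \in \cM_\ba$ is matched to some woman in $\cW_\ba$.

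The key observation is that for any man $m \in \cM_\ba$, the women in $\cW_\ba$ are exactly the women at distance $0$ from $m$, so every such woman sits at the very top of $m$'s preference list (ahead of everyone at positive distance, regardless of tie-breaking), and symmetrically each man in $\cM_\ba$ is at the top of every $\cW_\ba$-woman's list. So the men and women sharing profile $\ba$ all mutually rank each other first. The proof then proceeds by a counting/deferred-acceptance argument. First, I would suppose for contradiction that some man $m^\star \in \cM_\ba$ is matched under $\mu$ to a woman $w^\star$ with $d(m^\star, w^\star) > 0$ (the case $\mu(m^\star) = m^\star$ is handled the same way, treating self-matching as ``distance $\infty$''). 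Since $|\cM_\ba| \le |\cW_\ba|$, by pigeonhole there must be a woman $w' \in \cW_\ba$ who is not matched under $\mu$ to any man in $\cM_\ba$; hence either $\mu(w')$ is a man outside $\cM_\ba$ (so $d(w', \mu(w')) > 0$) or $w'$ is unmatched. In either case $w'$ strictly prefers $m^\star$ (distance $0$) to $\mu(w')$, and $m^\star$ strictly prefers $w'$ (distance $0$) to $w^\star$ (positive distance). Thus $(m^\star, w')$ is a blocking pair, contradicting stability of $\mu$.

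An alternative, perhaps cleaner, route is to apply Lemma \ref{sameDistanceLemma} together with the Rural Hospital Theorem restricted to the ``distance-$0$ sub-market'': restrict every preference list to distance-$0$ candidates only and observe that the resulting truncated matching problem among $\cS_\ba$ is just a complete bipartite instance on $\cM_\ba \cup \cW_\ba$, whose every matching saturating the smaller side is stable; by Rural Hospital the set of matched people is the same across all stable matchings, forcing all of $\cM_\ba$ to be matched within $\cW_\ba$. Either way, the main point — and the only thing requiring care — is the bookkeeping that a man sharing profile $\ba$ has \emph{all} of $\cW_\ba$ strictly above every positive-distance woman in his strict list $P_{m}$, which is immediate from the definition $y \succeq_x y' \iff d(x,y) \le d(x,y')$ and the tie-breaking refinement. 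The obstacle, such as it is, is purely notational: handling unmatched participants uniformly (which the ``distance $\infty$'' convention dispatches) and making sure the pigeonhole is applied to the right side given the assumption $|\cM_\ba| \le |\cW_\ba|$.
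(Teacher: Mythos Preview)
Your primary argument is correct and is essentially identical to the paper's proof: assume some $m \in \cM_\ba$ has $\mu(m) \notin \cW_\ba$, use $|\cM_\ba| \le |\cW_\ba|$ and pigeonhole to find $w \in \cW_\ba$ with $\mu(w) \notin \cM_\ba$, and observe that $(m,w)$ is a blocking pair since $d(m,w)=0$. The paper's proof is just a terser version of your first route; your alternative via the truncated sub-market is not needed.
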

\begin{proof}
Suppose to the contrary that there is a stable matching $\mu$ and an $m\in \cM_\ba$ such that 
$\mu(m) \notin \cW_\ba$.  Since $|\cM_\ba| \le |\cW_\ba|$, there should also exist a woman 
$w \in \cW_\ba$ such that $\mu(w) \notin \cM_\ba$.  However, since $d(m, w) = 0$, $w\succ_m \mu(m)$ 
and $m \succ_w \mu(w)$.  Therefore, $(m,w)$ forms a blocking pair for $\mu$, which is a contradiction.
\end{proof}

Thus, for any $\ba \in Q_k$, every stable matching should first try to match men in $\cM_\ba$ with
women in $\cW_\ba$ according to their tie-breaking preference lists.  Any one unmatched woman in 
$|\cW_\ba|$ will be matched to someone at a further distance.  Define $\cU_\ba$ to be the set of 
all users with profile $\ba$, which are matched to someone with a  profile different from $\ba$.  
Note that according to the Rural Hospital Theorem, $\cU_\ba$ is the same for all stable matchings 
and $|\cU_\ba| = \big{|} |\cM_\ba| - |\cW_\ba|\big{|}$.  Define $\cS_\ba = \cM_\ba \cup \cW_\ba$. 
The following lemma shows that if $k$ is constant, then for any $\ba \in Q_k$, $|\cU_\ba| = O(\sqrt{n})$.
\begin{lemma}\label{uasize}
For any arbitrary profile $\ba \in Q_k$, as $n \to \infty$,
\begin{align}
\frac{|\cM_\ba| - |\cW_\ba|}{\sqrt{2p(1-p)n}} \xrightarrow{~d~} Z_1 \quad \text{and} \quad
\frac{|\cS_\ba| - 2np}{\sqrt{2p(1-p)n}} \xrightarrow{~d~} Z_2, \nonumber
\end{align}
where $p = 2^{-k}$, $Z_1$ and $Z_2$ are independent standard normal--$\cN(0,1)$--random variables,
and $\xrightarrow{~d~}$ represents convergence in distribution. 
\end{lemma}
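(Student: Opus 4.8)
The plan is to recognize $|\cM_\ba|$ and $|\cW_\ba|$ as a pair of independent binomial random variables and to read off both limits from a single application of the multivariate central limit theorem, with the independence of $Z_1$ and $Z_2$ emerging from a short covariance computation.

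First I would note that, under the probabilistic assumptions of the RPMP-$k$, each of the $n$ men is placed at a vertex of $Q_k$ independently and uniformly at random, so $|\cM_\ba| = \sum_{m\in\cM}\ones\bigl(\ba_k(m)=\ba\bigr)$ is a sum of $n$ i.i.d.\ Bernoulli$(p)$ variables with $p=2^{-k}$; hence $|\cM_\ba|\sim\mathrm{Binomial}(n,p)$. The same holds for $|\cW_\ba|$, and the two counts are independent because the men's and women's profiles are drawn independently. In particular $\Expect|\cM_\ba|=\Expect|\cW_\ba|=np$, $\mathrm{Var}|\cM_\ba|=\mathrm{Var}|\cW_\ba|=np(1-p)$, and $|\cS_\ba|=|\cM_\ba|+|\cW_\ba|\sim\mathrm{Binomial}(2n,p)$.

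Next I would apply the central limit theorem to the two-dimensional vector $\tfrac{1}{\sqrt n}\bigl(|\cM_\ba|-np,\ |\cW_\ba|-np\bigr)$. Its coordinates are independent sums of i.i.d.\ mean-zero terms of variance $p(1-p)$, and $np(1-p)\to\infty$ (which holds whenever $k$ is fixed, and more generally as long as $2np(1-p)\to\infty$), so this vector converges in distribution to a centered bivariate Gaussian with diagonal covariance matrix $p(1-p)I_2$. Since $\bigl(|\cM_\ba|-|\cW_\ba|,\ |\cS_\ba|-2np\bigr)$ is the image of $\bigl(|\cM_\ba|-np,\ |\cW_\ba|-np\bigr)$ under the fixed linear map $(x,y)\mapsto(x-y,\ x+y)$, the continuous mapping theorem shows that $\tfrac{1}{\sqrt n}\bigl(|\cM_\ba|-|\cW_\ba|,\ |\cS_\ba|-2np\bigr)$ converges to a centered bivariate Gaussian whose parameters are: variance $2p(1-p)$ for each coordinate, and cross-covariance $\tfrac1n\mathrm{Cov}\bigl(|\cM_\ba|-|\cW_\ba|,\ |\cM_\ba|+|\cW_\ba|\bigr)=\tfrac1n\bigl(\mathrm{Var}|\cM_\ba|-\mathrm{Var}|\cW_\ba|\bigr)=0$. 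Thus the limit is $\cN(\mathbf 0,\,2p(1-p)I_2)$, whose coordinates are independent; dividing through by $\sqrt{2p(1-p)}$ yields the stated joint convergence of $\tfrac{|\cM_\ba|-|\cW_\ba|}{\sqrt{2p(1-p)n}}$ and $\tfrac{|\cS_\ba|-2np}{\sqrt{2p(1-p)n}}$ to independent standard normals $Z_1$ and $Z_2$.

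The argument is essentially routine, and the one point deserving care is the independence of $Z_1$ and $Z_2$ in the joint limit. This is not obvious beforehand, since $|\cM_\ba|-|\cW_\ba|$ and $|\cM_\ba|+|\cW_\ba|$ are not independent as random variables; it is forced in the Gaussian limit precisely because the two binomials $|\cM_\ba|$ and $|\cW_\ba|$ share the same variance, making the relevant covariance vanish. I would single this out as the substantive ingredient; the remainder is bookkeeping with binomial moments and the multivariate CLT.
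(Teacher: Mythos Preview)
Your proposal is correct and follows essentially the same route as the paper: recognize $|\cM_\ba|$ and $|\cW_\ba|$ as independent $\mathrm{Binomial}(n,p)$ variables, invoke the CLT, and pass to the difference and sum via a linear map. The paper phrases the last step as defining $Z_1=(N_1-N_2)/\sqrt{2}$ and $Z_2=(N_1+N_2)/\sqrt{2}$ from independent standard normals $N_1,N_2$ and noting that this orthogonal rotation preserves independence, whereas you compute the limiting covariance directly; these are the same argument in slightly different packaging.
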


\begin{proof}
First note that $|\cM_\ba|$ and $|\cW_\ba|$ are i.i.d.\ with a Binomial$(n,p)$ distribution. From Central Limit Theorem (CLT) we have that as $n \to \infty$,
\begin{align}
\frac{|\cM_\ba| - np}{\sqrt{np(1-p)}} \xrightarrow{~d~} {N}_1, \quad
\frac{|\cW_\ba| - np}{\sqrt{np(1-p)}} \xrightarrow{~d~} {N}_2,\nonumber
\end{align}
where $N_1$ and $N_2$ are two independent random variables with a standard normal
distribution, {\it i.e.}, $N_1, N_2 \sim \mathcal{N}(0,1)$. Therefore, as $n \to \infty$,
\begin{align}
\frac{|\cM_\ba| - |\cW_\ba|}{\sqrt{2p(1-p)n}} \xrightarrow{~d~} \frac{N_1 - N_2}{\sqrt{2}},  \quad
\frac{|\cS_\ba| - 2np}{\sqrt{2p(1-p)n}} \xrightarrow{~d~}  \frac{N_1 + N_2}{\sqrt{2}}. \nonumber
\end{align}
Define $Z_1=({N_1 - N_2})/{\sqrt{2}}$ and $Z_2 = (N_1 + N_2)/{\sqrt{2}}$. It is clear that 
$Z_1, Z_2 \sim \cN(0,1)$. Moreover, since $N_1$ and $N_2$ are independent, $Z_1$ and 
$Z_2$ are also independent. This completes the proof. 
\end{proof}

Since Lemma \ref{claim} requires each stable matching $\mu$ to first match men and women in 
$\cS_\ba$ using their tie-breaking preference lists, the $O(\sqrt{n})$ discrepancy between the number of 
men and women in $\cS_\ba$ makes this sub-problem significantly unbalanced.  Using the approach
of \citet{ashlagi2014unbalanced}, we prove some useful bounds on the number of 
stable partners in unbalanced matching problems which is true for every $n$.
\begin{lemma} \label{unbalanced}
Let $r \ge 1$ and consider an unbalanced two-sided matching problem with $n$ men and $n+r$ women 
represented by $\cM$ and $\cW$, respectively.
Suppose the men's preference lists are generated independently and uniformly at random from the 
set of all orderings of women in $\cW$. Similarly, suppose the women's preference lists are generated 
independently and uniformly at random from the set of all orderings of men in $\cM$. For any given
 $x \in \cM \cup \cW$, let $N(x)$ represent the number of $x$'s stable partners. We have that
\begin{align}
 \Prob\big{(} N(x) > 1 \big{)} \le \frac{1}{r+1}  \quad \text{and} \quad
 \Expect\big{(} N(x) \big{)} \le 1+ \frac{1}{r}. \nonumber
\end{align}
\end{lemma}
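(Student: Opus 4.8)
It suffices to bound the two quantities conditional on $x$ being matched in the stable matchings: if $x$ is unmatched in one stable matching it is unmatched in all of them (Rural Hospital Theorem), and then $N(x)=0$ and both inequalities are trivial. I would phrase the argument for $x=m$ a man; the case of a woman is handled by the symmetric argument, with the two sides of the market and the two extreme (man- and woman-optimal) stable matchings interchanged. Run the men-proposing deferred-acceptance algorithm to produce the man-optimal stable matching $\mu_M$. Exactly $r$ women are left unmatched, and by the Rural Hospital Theorem these same $r$ women are unmatched in \emph{every} stable matching; call them \emph{permanently unmatched}. Note that $N(m)>1$ precisely when $m$'s partner differs between $\mu_M$ and $\mu_W$, i.e.\ when $m$ is moved by at least one rotation on the path from $\mu_M$ to $\mu_W$.

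The combinatorial skeleton is the classical break‑marriage procedure (\citet{mcvitie1970stable}) for enumerating the stable partners of a fixed man. Starting from a stable matching in which $m$ is matched to $w$, we force $w$ to reject $m$ and let $m$ resume proposing down his list below $w$; the ensuing sequence of rejections forms a chain. The chain terminates either (i) with a proposal to one of the $r$ permanently-unmatched women, in which case one obtains a \emph{new} stable matching in which $m$ is matched strictly below $w$ — so $m$ gains a further stable partner — or (ii) the chain loops back to $w$, in which case $w$ is $m$'s worst stable partner so far and no new stable partner is produced in that direction. Iterating this operation enumerates all stable partners of $m$, so $m$ has an $(j{+}1)$-st stable partner if and only if the $j$-th such chain terminates in case (i) rather than case (ii).

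The probabilistic heart is to show that, conditionally on the history, each chain ends in case (ii) — the loop‑back — with probability at most $1/(r+1)$, so that $\Prob\bigl(N(m)\ge j+1 \mid N(m)\ge j\bigr)\le 1/(r+1)$. I would set this up through the principle of deferred decisions, revealing entries of the preference lists only as the algorithm and the chains query them. At the instant a chain is about to terminate, some man is making a first-ever proposal to one of the currently-unmatched women, of which there are exactly $r+1$: the $r$ permanently-unmatched women together with $w$, whom we ejected $m$ from at the start of this chain. Since the preference entries that determine which of these $r+1$ women receives that terminal proposal have not yet been consulted, that proposal is equally likely to hit each of them, and case (ii) is exactly the event that it hits $w$ — hence probability $\le 1/(r+1)$ conditionally on the past. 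The main obstacle is making this last step fully rigorous: conditioning on the entire trajectory of deferred acceptance and of the earlier chains biases the still-unrevealed portions of the preference lists, so one must argue carefully — in the style of \citet{immorlica2005marriage} and \citet{ashlagi2014unbalanced} — that the specific entries governing \emph{where the current chain terminates} are genuinely untouched, and so uniform over a set of at least $r+1$ women, throughout.

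Granting this conditional estimate, $\Prob\bigl(N(m)\ge j+1\bigr)\le (r+1)^{-j}$ for all $j\ge 0$. Taking $j=1$ gives $\Prob\bigl(N(m)>1\bigr)\le 1/(r+1)$, and summing the tail,
\begin{align}
\Expect\bigl(N(m)\bigr)=\sum_{j\ge 0}\Prob\bigl(N(m)\ge j+1\bigr)\le 1+\sum_{j\ge 1}(r+1)^{-j}=1+\frac{1}{r},\nonumber
\end{align}
which are the two claimed bounds. The same reasoning with the women proposing (and $\mu_M$ replaced by $\mu_W$) yields the bounds when $x$ is a woman, completing the proof.
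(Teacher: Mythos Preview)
Your approach---break-marriage plus deferred decisions yielding a geometric tail---is exactly the paper's. Two points need correction, however.

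First, you have the two chain outcomes reversed. If the chain reaches one of the $r$ permanently-unmatched women, the resulting matching leaves $w$ unmatched while matching a permanently-unmatched woman; by the Rural Hospital Theorem such a matching cannot be stable, so this case yields \emph{no} new stable partner. It is the loop-back to $w$ that produces a new stable matching (this is the McVitie--Wilson fact you cite), and hence a further stable partner for $m$. Your deferred-decisions calculation correctly targets the loop-back event and bounds it by $1/(r+1)$, so once the labels are swapped the inequality $\Prob\bigl(N(m)\ge j+1\mid N(m)\ge j\bigr)\le 1/(r+1)$ and the tail sum go through.

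Second, your ``symmetric argument'' for a woman $x=w$ does not work as stated. Interchanging the two sides means starting from $\mu_W$ and letting women propose; but men are the short side, so there are \emph{no} unmatched men for the chain to hit, every chain loops back, and you get no $1/(r+1)$ bound. The paper avoids this by running the \emph{same} men-proposing break-marriage from $\mu_M$ to enumerate a fixed woman $w$'s stable husbands directly---each loop-back to $w$ delivers her next stable husband---which gives $\Prob(N(w)>1)\le 1/(r+1)$ and $\Expect(N(w))\le 1+1/r$; it then transfers the bounds to men via the equivalence $\{N(m)>1\}=\{N(\mu_M(m))>1\}$ and the counting identity $\sum_{m\in\cM}N(m)=\sum_{w\in\cW}N(w)$. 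Since your corrected chain argument already handles men directly, you can simply rerun the same men-proposing procedure with $w$ as the anchor (rather than $m$) to cover the women's case.
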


\begin{proof}
Let $\mu_\cM$ represent the men-optimal stable matching found by running the 
men-proposing deferred acceptance algorithm, and let $U$ be the set of all women 
who are not matched in $\mu_\cM$. 
According to the Rural Hospital Theorem, the set of women who are unmatched is the same
as $U$ for all stable matchings. Let $w \in \cW \backslash U$ be an arbitrary woman. 
In order to find all the stable partners of $w$, 
we employ the same algorithm that is used in \citet{mcvitie1970stable}, 
\citet{immorlica2005marriage}, and \citet{ashlagi2014unbalanced}. 
It has been proved by \citet{immorlica2005marriage} that the following algorithm outputs
all the stable partners of $w$.
\\
\\
\emph{Algorithm I}
\begin{enumerate}
\item Run the men-proposing algorithm to find the men-optimal stable matching $\mu_\cM$. If
	$w$ is unmatched in $\mu_\cM$, output $\emptyset$. Initialze $\mu = \mu_\cM$ .
\item Set $m=\mu(w)$ and output $m$ as one of the stable partners of $w$. Then have $w$ reject $m$ and remove the pair $(m,w)$ from $\mu$. Set $u=m$. ($u$ represents the current unmatched
man.)
\item Let $w'$ be the next woman in $u$'s preference list whom he has not proposed to yet. If 
$w'$ is unmatched in $\mu_\cM$, terminate the algorithm.
\item
\begin{enumerate}
\item If $w'$ has already received a proposal from someone better than, she simply 
rejects $u$ and the algorithm continues to step $3$.
\item If not, $w'$ accepts $u$'s proposal. If $w = w'$, the algorithm continues to
step $2$. Otherwise, set $u = \mu(m')$ and the algorithm continues to step $3$.
\end{enumerate} 
\end{enumerate}
\vspace{8pt}
In order to analyze algorithm I, we use the principle of deferred decision which assumes 
that the random preference lists are not known in advance and rather unfold step by step
in the algorithms when a proposal/rejection happens.
Let $t_i$ be the time of the $i^{\text th}$ visit of the algorithm at step $3$, and define $u_i$
and $w_i'$ to be the unmatched man and the next woman who $u_i$ wants to propose to at 
time $t_i$. Also define $X_i$ to be the set of all women who $u_i$ has not proposed to yet at 
time $t_i$.
Since we are using the principle of deferred decision, at any time $t_i$, 
rankings of women in $X_i$ are not yet unfolded in $u_i$'s preference list. 
Therefore, at any time $t_i$, every woman in $X_i$ has the same chance of 
${1}/{|X_i|}$ to receive the next proposal from $u_i$. Define the events 
$E_i = \left\{w_i' \in \{w\} \cup U\right\}$. 
Since the algorithm has not been terminated by time $t_i$, $U \subseteq X_i$. Therefore,
given $E_i$, the probability that $u_i$ proposes to $w$ is at most $1/(r+1)$, and the probability
that the algorithm terminates is at least $r/(r+1)$, {\it i.e.}\ ~,
\begin{align} 
\Prob(w_i' = w ~ | ~ E_i) \le \frac{1}{r+1},  \nonumber
\end{align}
and
\begin{align}
\Prob(\text{The algorithm terminates at~} t_i ~ | ~ E_i) \ge \frac{r}{r+1}. \nonumber 
\end{align}
As the algorithm progresses, woman $w$ finds a new stable partner only if she receives a 
proposal from an unmatched man at step $3$ of the algorithm. Let $V_i$ be the total number 
of proposals received by woman $w$ from time $t_1$ to time $t_i$. 
If $E_i$ does not occur then $V_{i+1} = V_i$, and if $E_i$ occurs then $V_{i+1} = V_i+1$
with a probability of at most $1/(r+1)$ and the algorithm terminates with a probability at least 
$r/(r+1)$. Therefore, if $V$ represents the total number of proposals received by $w$ after 
time $t_1$, V is stochastically dominated by a geometric random variable 
with rate $p = r/(r+1)$. Thus,
\begin{align}
\Prob(V > 0) \le \frac{1}{r+1},  \nonumber
\end{align}
 and
 \begin{align}
 \Expect(V) \le \frac{1-p}{p} = \frac{1}{r}. \nonumber
 \end{align}
Since $\cN(w) \le 1 + V$, the proof is complete for any $w\in \cW$. It remains to prove the 
inequalities for $x \in \cM$. Fix $x\in \cM$. Note that the two events $\{\cN(x) > 1\}$ and 
$\{\cN(\mu_\cM(x)) > 1\}$ are equivalent. Therefore, Since $\mu_\cM(x) \in \cW$,
\begin{align}
\Prob\big{(} \cN(x) > 1 \big{)} =  \Prob\big{(} \cN(\mu_\cM(x)) > 1 \big{)} \le \frac{1}{r+1}. 
\nonumber
\end{align}
Moreover, since $\sum_{m \in \cM} \cN(m) = \sum_{w\in \cW\backslash U} \cN(w)$ (both are 
equal to the total number of stable partner pairs), from symmetry we have,

\begin{align}
\Expect\left(N(x)\right) = \frac{1}{n} \sum_{m \in \cM} \Expect\left(\cN(m)\right) = 
\frac{1}{n} \sum_{w\in \cW\backslash U} \Expect\left(\cN(w) \right) \le 1 + \frac{1}{r}. \nonumber
\end{align}
\end{proof}


\noindent
We now prove Theorem \ref{smallk} by using Lemmas \ref{uasize} and \ref{unbalanced}.

\smallskip

\noindent \textbf{Proof of Theorem \ref{smallk}:}
\textbf{Part (i)}.
We prove this part of the theorem only for constant profile size $k\ge 1$. The proof for arbitrary 
profile size $k \le (1-\epsilon) \log n$ is similar. Fix $n$ and consider an instance of the random profile matching 
problem with $n$ men and $n$ women represented by $\cM^{(n)}$ and $\cW^{(n)}$, respectively.
Let $x \in \cM^{(n)} \cup \cW^{(n)}$ be an arbitrary user and let $\ba\in Q_k$ represent his/her profile. Define 
$\cM_\ba^{(n)}$, $\cW_\ba^{(n)}$, and $\cS_\ba^{(n)}$ as before. Also define
$Z_1^{(n)} = {(|\cS_\ba^{(n)}| - 2np)}/{\sqrt{2np(1-p)}}$ and 
$Z_2^{(n)} = {(|\cW_\ba^{(n)}| - |\cM_\ba^{(n)}|)}/{\sqrt{2np(1-p)}}$. 
According to Lemma \ref{uasize}, as $n \to \infty$, $Z_1^{(n)} \xrightarrow{~d~} Z_1$ and 
$Z_2^{(n)} \xrightarrow{~d~} Z_2$, where $Z_1, Z_2 \sim \cN(0,1)$, and $Z_1$ and $Z_2$ are independent.
For any $\epsilon>0$ define subsets $A_\epsilon, B_\epsilon \subseteq \reals$ as follows,
\begin{align}
A_\epsilon =  [-\infty, -\frac{1}{\epsilon}] \cup [\frac{1}{\epsilon}, +\infty], \quad
B_\epsilon =[0, \epsilon] \cup [\frac{1}{\epsilon}, +\infty]. \nonumber
\end{align}
Let $\delta >0$ be an arbitrary positive number. Choose $\epsilon >0$ small enough to have,
${\Prob}\left( Z_1\in A_\epsilon\right) \le \frac{\delta}{8}$ and 
${\Prob}\left(  |Z_2|\in B_\epsilon\right) \le \frac{\delta}{8}$. 
Since $Z_1^{(n)}$ and $Z_2^{(n)}$ converge in distribution to $Z_1$ and $Z_2$, respectively, there exists a large number $N_1$ such that for any $n > N_1$,
\begin{align}
\bigl|\Prob( Z_1\in A_\epsilon) - \Prob( Z_1^{(n)}\in A_\epsilon)\bigr| < \frac{\delta}{8},
\quad
\bigl|\Prob( Z_2\in B_\epsilon) - \Prob( Z_2^{(n)}\in B_\epsilon)\bigr| < \frac{\delta}{8}.
\nonumber
\end{align}
Therefore, for any $n>N_1$ we have,
\begin{align}
\Prob\left( Z_1^{(n)} \in A_\epsilon \text{~~or~~} Z_2^{(n)} \in B_\epsilon \right) & \le \nonumber
\Prob( Z_1^{(n)} \in A_\epsilon ) + \mathbb{P}(Z_2^{(n)} \in B_\epsilon )\\ \nonumber
& \le {\Prob}( Z_1 \in A_\epsilon) + \bigl|\Prob( Z_1^{(n)}\in A_\epsilon) - \Prob( Z_1\in A_\epsilon)\bigr|\\
\nonumber
& + {\Prob}( Z_2 \in B_\epsilon) + \bigl|\Prob( Z_2^{(n)}\in B_\epsilon) - \Prob( Z_2\in B_\epsilon)\bigr|\\ 
\nonumber
& \le \frac{\delta}{2}. \nonumber
\end{align}
Therefore, with a probability of at least $1-{\delta}/{2}$, the following event occurs:
\begin{align}
E = \Big{\{}|\cS_\ba^{(n)}| \in [2np-C_1\sqrt{n}, 2np+C_1\sqrt{n}] \text{~~and~~} 
\big{|}|\cW_\ba^{(n)}| - |\cM_\ba^{(n)}|\big{|} 
\in [C_2\sqrt{n}, C_3\sqrt{n}] \Big{\}}, \nonumber
\end{align} 
where
$C_1 = \frac{\sqrt{2p(1-p)}}{\epsilon}$, $C_2 = \epsilon\sqrt{p(1-p)}$, and $C_3 = \frac{\sqrt{p(1-p)}}{\epsilon}$. 
Without loss of generality assume $|\cM_\ba^{(n)}| \le |\cW_\ba^{(n)}|$ and define 
$r = |\cW_\ba^{(n)}| - |\cM_\ba^{(n)}|$. 
The problem of matching men in $\cM_\ba^{(n)}$ and women in $\cW_\ba^{(n)}$ according to 
preference lists given by $\succ^P$ is an unbalanced matching problem with a discrepancy 
equal to $r$ between the number of men and the number of women. 
Let $\cU_a^{(n)}$ represent the set of unmatched women in the men-optimal stable matching 
for this unbalanced matching problem. Therefore, if $N^{(n)}(x)$ represents the number of 
$x$'s stable partners,  we have,
\begin{align}
{\Prob}\big{(} N^{(n)}(x) > 1 \big{)} 
& \le {\Prob}{(} E^o {)} + {\Prob}\big{(} N^{(n)}(x) > 1 \big{|} E\big{) } \nonumber \\ \nonumber
  & = {\Prob}{(} E^o {)} + 
  {\Prob}\big{(} x \in \cU_a^{(n)} \big{|} E\big{)} + 
  {\Prob}\big{(} N^{(n)}(x) > 1 \big{|} E, x \notin \cU_a^{(n)} \big{)}\\ \nonumber
  & \le {\Prob}{(} E^o {)} + \frac{r}{|\cS_\ba^{(n)}|} + \frac{1}{r+1}\\ \nonumber
  & \le \frac{\delta}{2} + \frac{C_2\sqrt{n}}{2np - C_1\sqrt{n}} + \frac{1}{C_1\sqrt{n}+1}, \nonumber
\end{align}
 where in the first inequality we used the results of the Lemma \ref{unbalanced}, 
 and in the last inequality we used the bounds on $r$ and $|\cS_\ba^{(n)}|$ given by
 the event $E$. Pick $N_2$ large enough to have,
 \begin{align}
 \frac{C_2\sqrt{n}}{2np - C_1\sqrt{n}} + \frac{1}{C_1\sqrt{n}+1} < {\delta}/{2}, \quad \forall n \ge N_2. 
 \nonumber
 \end{align}
 Define $N = \max\{N_1,N_2\}$. Therefore, for any $n>N$, 
 $ {\Prob}\big{(} N^{(n)}(x) >1 \big{)} \le \delta$.
\noindent Since $\delta >0$ is arbitrary,
\begin{align}
\Prob\big{(} N^{(n)}(x) =1 \big{)} \to 1 \text{~~as~~ } n\to \infty. \nonumber
\end{align}
This implies that with high probability the fraction of users with multiple stable partners
tends to zero as $n \to \infty$. \\
\textbf{Part (ii)}. 
In order to prove the second part of the theorem, 
note that since $|\cM_\ba|$ and $|\cW_\ba|$ are Binomial random variables 
with parameters $n$ and $p = 2^{-k} = 1/n$, according to the well-known Poisson 
limit theorem, both converge to the Poisson$(1)$ distribution as $n$ goes to infinity. 
Therefore, in the limit, with a positive probability of $c = {e^{-2}}/{4}$ there
are exactly two men and two women whose profiles are equal to $\ba$. 
On the other hand, it is easy to see that in a random stable matching problem with two 
men and two women, the probability of having exactly two stable matchings 
is equal to $1/8$.
 Therefore, for any given profile $\ba \in Q_k$, with a positive probability of $\beta = {c}/{8} >0$, 
there are exactly two men and two women with profile $\ba$ who have multiple 
stable partners. This proves that the expected number of users with multiple stable partners
is $O(n)$. Moreover, since the number of such profiles is $O(n)$, in expectation there are 
exponentially many stable matchings.
\epr
The following lemma shows that if the preference list of a user is uniquely identified by profile 
distances and no further tie-breaking is required, then he/she has a unique stable partner. 
\begin{lemma} \label{uniqueStablePartner2}
In a profile matching problem, if the distances of a given user $x$ from all the members 
of the opposite sex are distinct, then $x$ has a unique stable partner.
\end{lemma}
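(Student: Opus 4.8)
The plan is to show that if all distances from $x$ to the opposite side are distinct, then the matching distance $d(x)$ — which is well-defined by Lemma \ref{sameDistanceLemma} — is achieved by exactly one member of the opposite side, hence $x$'s stable partner is forced. So first I would invoke Lemma \ref{sameDistanceLemma}: in every stable matching $\mu$, the distance $d(x,\mu(x))$ equals the same value $r := d(x)$. Next, because the distances of $x$ from all opposite-side members are pairwise distinct, there is at most one member $y$ with $d(x,y) = r$; since some stable partner of $x$ realizes distance $r$, there is exactly one such $y$, and necessarily $\mu(x) = y$ in every stable matching. Thus $x$ has a unique stable partner.

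There is a subtlety I would address head-on: a priori the value $r = d(x)$ could itself depend on $x$ being matched at all. If $x$ is unmatched in some stable matching, then by the Rural Hospital Theorem $x$ is unmatched in all of them, and the statement ``$x$ has a unique stable partner'' holds vacuously (or one interprets ``no stable partner'' as the unique outcome). So I would split into the case $x$ matched (handled by the argument above) and the case $x$ unmatched (handled by Rural Hospital), making the claim unambiguous. In the matched case, the key point is simply that the set $\{y : d(x,y) = r\}$ is a singleton by the distinctness hypothesis, combined with the fact that this set is nonempty because $\mu(x)$ lies in it.

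The only place where any real work hides is the appeal to Lemma \ref{sameDistanceLemma}, which is already established in the excerpt and which I am free to use. Given that lemma, the argument is essentially a one-line counting observation: a constant matching distance plus injectivity of $d(x,\cdot)$ forces a unique partner. So I do not expect a genuine obstacle here — the main care is in stating the conclusion correctly when $x$ may be unmatched, and in being explicit that $\mu(x)$ witnesses $r$ being attained. I would close by remarking that the same reasoning, applied to every woman (resp. every man) simultaneously, is what later upgrades ``almost every participant has a unique partner'' to ``the stable matching is unique,'' which is the use made of this lemma in the proofs of Theorems \ref{hamming} and \ref{weightedHamming}.
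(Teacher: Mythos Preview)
Your proposal is correct and follows essentially the same approach as the paper: both arguments invoke Lemma~\ref{sameDistanceLemma} to conclude that any two stable partners of $x$ lie at the same distance from $x$, and then use the distinctness hypothesis to force uniqueness. The paper phrases this as a two-line contradiction argument and omits the unmatched case (which cannot arise in the balanced PMP), but your additional care there and the closing remark about how the lemma is used downstream are both accurate.
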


\begin{proof}
By contradiction, suppose $x$ has two different stable partners $y_1$ and $y_2$. According
to Lemma \ref{sameDistanceLemma}, $y_1$ and $y_2$ should be at the same distance from 
$x$. But, this contradicts with the assumption that $x$ has different distances from $y_1$ and $y_2$. 
Therefore, $x$ has a unique stable partner.
\end{proof}
In order to apply Lemma \ref{uniqueStablePartner2}, $k$ should be large enough 
to have a unique stable matching without resorting to tie-breaks. 
Now we prove Theorems \ref{hamming} and \ref{weightedHamming}.

\smallskip

\noindent \textbf{Proof of Theorem \ref{hamming}:}
\textbf{Part (i)}.
Let $x$ be an arbitrary user and without loss of generality, assume $x\in \cW$.
Suppose $x$ has multiple stable partners and let $y$ and $y'$ be two different stable partners of $x$. 
Since $x$ has multiple sable partners, $y$ also has another stable partner $x'$ (different from $x$). 
According to Lemma \ref{sameDistanceLemma}, $d_h(x,y') = d_h(x,y) = d_h(x',y)$. For any $z\in \cM$ define the following event 
$$E_z = \bigl\{\exists x'\in \cW\backslash \{x\}, \exists z'\in \cM\backslash \{z\} 
;d_h(x,z') = d_h(x,z) = d_h(x',z)\bigr\}.$$
Using the union bound we have,
\begin{align}
\Prob(E_z) &= \Prob\Bigl( \exists x'\in \cW\backslash \{x\}; d_h(x,z) = d_h(x',z)\Bigr)
\Prob\Bigl( \exists z'\in \cM \backslash \{z\}; d_h(x,z) = d_h(x,z') \Bigr) \nonumber \\
&\le\Bigl((n-1) \Prob\bigl(d_h(x,m) = d_h(x,z)\bigr) \Bigr)
\Bigl((n-1) \Prob\bigl(d_h(w,z) = d_h(x,z)\bigr) \Bigr)  \nonumber \\
&\le n^2 \Prob\bigl(d_h(x,m) = d_h(x,z)\bigr)^2 \nonumber
\nonumber
\end{align}
where $m$ and $w$ are a man and a woman who are chosen randomly from $\cM$ and $\cW$, 
respectively.
Note that in the last inequality we used the existing symmetry in the problem. Since 
$d_h(x,m)$ has a binomial distribution (as a function of the random variable $m$), 
the maximum value of $\Prob\bigl(d_h(x,m)=d_h(x,z)\bigr)$ is at $d_h(x,z) = \lfloor k/2 \rfloor$. Using the 
Sterling approximation we have:
\begin{align}
{\Prob}\bigl( d_h(x,m)= \frac{k}{2}\bigr) = \frac{k!}{ \frac{k}{2}!\frac{k}{2}! } 2^{-k} \simeq 
\frac{\sqrt{2\pi k}(\frac{k}{e})^k}{\pi k(\frac{k}{2e})^k} 2^{-k} = \sqrt{\frac{2}{\pi}} \frac{1}{\sqrt{k}}.
\nonumber
\end{align}
Therefore,
\begin{align}
\Prob(x \text{ has multiple stable partners}) \le \Prob\left(\cup_{z\in \cM} E_z\right) 
 \le n^3 {\frac{2}{\pi k}}. \nonumber 
\end{align}
Since $k = \Omega(n^3)$, the right hand side of the above inequality tends to zero as $n$ goes
to infinity. This implies that with high probability the fraction of users with multiple stable partners tends to
zero as $n$ goes to infinity. Note that using the union bound, we can conclude that if $k = \Omega(n^4)$,
with high probability, there exists a unique stable matching.\\
\textbf{Part (ii)}. 
Fix a woman $x \in \cW$. For any $y\in \cM$, define the event $E_y = \{\exists y' \in \cM \backslash \{y\};
d_h(x,y) = d_h(x,y')\}$. Also define $A_x$ to represent the event that the distances of $x$ from all men in $
\cM$ are distinct. Similar to part (i) we have
$$\Prob(A_x^{c}) \le \Prob\left(\cup_{y\in \cM} E_y\right) \le n^2\sqrt{\frac{2}{k\pi}}.$$
According to Lemma \ref{uniqueStablePartner2}, if the event $A_x$ happens for every $x\in \cW$, 
there is a unique stable matching without resorting to tie-breaking. Therefore,
\begin{align}
\Prob (\text{There are multiple stable matchings}) &\le 
\Prob \left( \cup_{x\in\cW} A_x^{c} \right) \nonumber 
\le n^3\sqrt{\frac{2}{k\pi}}. \nonumber
\end{align} 
Since $k = \Omega(n^6)$, the probability that there are multiple stable matchings 
goes to zero as $n$ goes to infinity.\vspace{8pt} \\
\textbf{Part (iii)}.
Without loss of generality assume $x\in \cW$ and let $X_i$ represent the distance of $x$ from man $m_i$, {\it i.e.}, $d_i = d_h(x,m_i)$. Clearly, $X_i$'s are i.i.d.\ with Binomial distribution with 
parameters $k$ and $1/2$. Define $Z = \min_{i}X_i$. Clearly $d_h(x) \ge Z$. Therefore, for any 
positive number $r>0$,
\begin{align}
\Prob(d_h(x)\ge r) \ge \Prob(Z\ge r) = \Prob(\min_i X_i\ge r) = \prod_{i=1}^n \Prob(X_i\ge r) = \Prob(X_1\ge r)^n.
\nonumber
\end{align}
Now, according to the Chernoff's inequality,
\begin{align}
\Prob(X_1<r) \le e^{-\frac{1}{{k}}{(\frac{k}{2}-r)^2}}.
\nonumber
\end{align}
Therefore,
\begin{align}
\Prob(d_h(x)\ge r) \ge (1 -e^{-\frac{1}{{k}}{(\frac{k}{2}-r)^2}} )^n \ge 1-ne^{-\frac{1}{{k}}{(\frac{k}{2}-r)^2}}.
\nonumber
\end{align}
Now if we set $r = k/2 - \sqrt{\beta k\log n}$ we have,
\begin{align}
\Prob\left(d_h(x)\ge k/2 - \sqrt{\beta k\log n}\right) \ge 1-n^{1-\beta}.
\nonumber
\end{align} 
\epr

\smallskip

\noindent \textbf{Proof of Theorem \ref{weightedHamming}:}
\textbf{Part (i)}.
Without loss of generality assume $x \in \cM$ and suppose $y$ is a stable partner for $x$.
If $x$ has multiple stable partners, then $y$ should also have multiple stable partners. Let
$x'$ be another stable partner of $y$ different from $x$. According to Lemma 
\ref{sameDistanceLemma}, $x$ and $x'$ should have the same distance from $y$. 
However, in the weighted hamming distance metric, if
$d_w(\ba(y), \ba(x)) = d_w(\ba(y), \ba(x'))$, then $x$ and $x'$ should have 
the exact same profiles, {\it i.e.}, $\ba(x) = \ba(x')$. Therefore, if a man $x$ has 
multiple stable partners, there should exist another man $x'$ with the same profile 
as him. However, by using union bounds we get
\begin{align}
\Prob\left( \exists x' \in \cM; \ba(x') = \ba(x) \right) \le (n-1) \frac{1}{2^k} \le n^{-\epsilon}. \nonumber
\end{align}
This proves that the probability that $x$ has multiple stable partners is 
vanishingly small.\vspace{8pt} \\
\textbf{Part (ii)}.
We first show that if there are multiple stable matchings, then there are two men (or women) who
have the same profile. If there are multiple stable matchings, there should exists a chain 
$\{c_i\}_{i = 0}^{2s-1}$ of men and women such that $c_i \in \cM$ if $i$ is even, and $c_i \in \cW$ 
if $i$ is odd. Moreover, in this chain
for every $i \in \{0, \dots, 2s-1\}$, $c_{i+1} \succ_{c_i} c_{i-1}$ ($i-1$ and $i+1$ are taken in mode $2s$).
Define $d_i = d_w(c_i, c_{i+1})$, $0\le i < 2s$. Let $j$ be the index at which $d_i$ is minimum. Now, 
if the values of $d_i$ are all distinct, then $c_{j} \succ_{c_{j+1}} c_{j+2}$ which is a contradiction. Therefore,
There should exist an index $i$ such that, $d_i = d_{i+1}$. Following our discussion in part (i), this implies
that $c_i$ and $c_{i+2}$ should have the same profile. However, the probability that two randomly 
selected men (or women) have the same profile is $2^{-k}$. Using union bound we can conclude
that the probability that there are multiple stable matchings is upper bounded by 
$n^22^{-k} \le n^{-\epsilon}$.\\
\\
\textbf{part (iii)}
In order to analyze $X$, fix $n$ and let $\epsilon >0$ be an arbitrary
positive real number. Let $\mu$ be an arbitrary stable matching. With out loss of generality assume
$x \in \cW$ and define $\ba  = \ba(x)$. 
For any positive integer $r$, define 
$\cS_\ba(r)$ as the set of all the users who have the same answers as $x$ for the first $r$ questions,
\begin{align}
\cS_\ba(r) \triangleq \left\{y\in \cM \cup \cW: \ba(y)[1:r] = \ba[1:r]\right\}. \nonumber
\end{align}
First note that for any given users $x, y \in \cS_\ba(r)$ and $y' \notin \cS_\ba(r)$,
\begin{align}
d_w(x, y)\le 2^{-r} <d_w(x,y') \Longrightarrow y\succ_x y'. \nonumber
\end{align}
Therefore, if we define $\cW_\ba(r) = \cS_\ba(r) \cap \cW$ and $\cM_\ba(r) = \cS_\ba(r) \cap \cM$, 
then the number of users in $\cS_\ba(r)$ who are not matched to someone in $\cS_\ba(r)$ in 
$\mu$ is $\big{|}|\cW_\ba(r)|-|\cM_\ba(r)|\big{|}$. 
Set $r = \lfloor (1-\epsilon) \log n\rfloor$ and define $p_r  = 2^{-r} \ge n^{ \epsilon-1}$.
Since $|\cW_\ba|$ and $|\cM_\ba|$ have Binomial distributions with parameters $n$ and $p_r$, 
following the discussions in Lemma \ref{uasize},
$|\cS_\ba(r)| = O(p_rn)$ and $\big{|}|\cW_\ba(r)|-|\cM_\ba(r)|\big{|} = O(\sqrt{p_rn})$. 
Therefore, 
\begin{align}
\Prob\left(d_w(x) > 2^{-r}\right) = \Prob\left( \mu(x) \notin \cS_\ba(r) \right) = 
\frac{\big{|}|\cW_\ba(r)|-|\cM_\ba(r)|\big{|}}{|\cS_\ba(r)|} = 
O(1/\sqrt{p_rn}) = O(n^{-\frac{\epsilon}{2}}).
\nonumber
\end{align}
This proves that,
\begin{align}
\Prob \left( \frac{\log {d_w(x)}}{\log n} > -1+\epsilon \right) \to 0, \quad \text{as} \quad n\to \infty. 
\end{align}
On the other hand, if we set $r' = \lceil (1+\epsilon) \log n \rceil$, then,
\begin{align}
\Prob\left(d_w(x) \le 2^{-r'}\right) = \Prob\left( \mu(x) \in \cS_\ba(r') \right) \le
\Prob\left( |\cM_\ba(r')| > 0\right) \le nP_{r'} = n^{-\epsilon},
\nonumber
\end{align}
where in the last inequality we used the union bound inequality. Therefore,
\begin{align}
\Prob \left( \frac{\log {d_w(x)}}{\log n} < -1-\epsilon \right) \to 0, \quad \text{as} \quad n\to \infty. 
\end{align}
From $(1)$ and $(2)$ we can conclude that ${\log {d_w(x)}}/{\log n}$ converges to $-1$ in 
probability. 
\epr

\subsection{\textbf{Proofs omitted from section \ref{continuous}}} \label{continuousproofs}

\noindent \textbf{Proof of Lemma \ref{stableisnested}:} \label{stableisnestedproof}
Let $x, y \in S$ and suppose $x \in I(y)$. Assume, by contradiction, $\cM_s(x) \notin \overline{I}(y)$. Therefore, 
 either $y \in I(x)$ or $\cM_s(y) \in I(x)$. Without loss of generality, assume $y \in I(x)$.
If $x$ and $y$ have different colors, since 
\begin{align}
|x-y| < |I(x)| = |x - \cM_s(x)|, \quad
|x-y| < |I(y)| = |y - \cM_s(y)|,
\nonumber
\end{align}
$(x,y)$ will form a blocking pair for $\cM_s$ which is a contradiction. 
Now suppose $x$ and $y$ have the same color and without loss of 
generality, assume $|\cM_s(y) - x| \le |\cM_s(x) - y|$. Since
\begin{align}
|\cM_s(y) - x| &< |I(y)| = |\cM_s(y) - y|,  \quad
|\cM_s(y) - x| &\le |\cM_s(x) - y| < |I(x)| = |x - \cM_s(x)|, \nonumber
\end{align}
$(\cM_s(y), x)$ is a blocking pair for $\cM$ which is a contradiction. 
Therefore, $\cM_s(x) \in \overline{I}(y)$ and the proof is complete.
\epr

For any $x, y \in \reals$ where $x < y$, define $g(x,y)$ to be the difference between the number of red 
and blue points in the open interval $(x,y)$,
\begin{align}
g(x,y) \triangleq \left| \{r\in \reds: r \in (x,y) \}\right| -  \left| \{b\in \blues: b \in (x,y) \}\right|.
\nonumber
\end{align}
For simplicity of notation, for $x>y$ let $g(x,y) = g(y,x)$. 
Let $\cM$ be a nested matching and let $b \in \blues$ be an arbitrary blue point which is matched 
under $\cM$. Since $\cM$ is nested, any red or blue point on $x$'s matching segment $I_\cM(x)$, 
should be matched to a point on $I_\cM(x)$. Therefore, there should be an equal 
number of red and blue points on $I_\cM(x)$. This implies that $g\bigl(x,\cM(x)\bigr) = 0$. 

\begin{lemma} \label{potentialmatches1}
For any $b\in \blues$,
\begin{align}
\cP(b) &= \{r \in \reds: g(b,r) = 0\}. \nonumber 
\end{align}
Moreover, for any  $b_1, b_2 \in \blues$, $\cP(b_1) \cap \cP(b_2) \neq \emptyset$ implies  
$\cP(b_1) = \cP(b_2)$ and $g(b_1, b_2) = 1$.
\end{lemma}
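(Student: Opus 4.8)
The plan is to prove the set equality through the two inclusions, and then deduce the ``moreover'' part by rephrasing the condition $g(b,r)=0$ via a contour function. For $\cP(b)\subseteq\{r\in\reds:g(b,r)=0\}$ I would just invoke the observation made immediately before the lemma: if $\cM$ is a nested matching with $\cM(b)=r$, then every point lying in $I_\cM(b)$ is matched to a point of $\overline{I}_\cM(b)$, so $I_\cM(b)$ contains equally many blue and red points, i.e.\ $g(b,\cM(b))=0$.

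For the reverse inclusion I would exhibit, for a fixed $r\in\reds$ with $g(b,r)=0$, an explicit nested matching $\cM^\ast$ with $\cM^\ast(b)=r$. Assume $b<r$ (the case $r<b$ is symmetric). Set $\cM^\ast(b)=r$. Because $g(b,r)=0$, the open interval $(b,r)$ contains the same number $k\ge 0$ of blue and of red points; match these $2k$ points to one another by a nested matching \emph{in which no point is left unmatched}. Such a matching exists by a short induction: among more than one point, not all of the same colour, there is an adjacent blue--red pair, whose matching segment is empty; match it, delete both points, and recurse, noting that deleting two adjacent mutually matched points preserves nestedness of the remainder (an old segment whose endpoints avoid the deleted pair either contains both deleted points or neither). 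Finally declare every point of $\cS$ outside $[b,r]$ unmatched. A direct case check then shows $\cM^\ast$ is nested: every finite matching segment is $(b,r)$ itself or lies inside $(b,r)$, and all unmatched points lie in $(-\infty,b)\cup(r,\infty)$, so the only segments needing attention are the infinite segments $(x,\infty)$ of unmatched $x$ --- but for $x<b$ every point to the right of $x$ is matched to a point of $[b,r]\cup\{\infty\}\subseteq\overline{(x,\infty)}$, while for $x>r$ every point to the right of $x$ is itself unmatched. Hence $r\in\cP(b)$, which completes the set equality.

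For the ``moreover'' part I would introduce the contour (height) function $\phi(x)=\bigl|\{r\in\reds:r<x\}\bigr|-\bigl|\{b\in\blues:b<x\}\bigr|$ for $x\in\reals$, evaluated at a point of $\cS$ by its left limit. Counting the points of each colour in an interval shows that, for $b\in\blues$ and $r\in\reds$ and regardless of whether $b<r$ or $r<b$, one has $g(b,r)=0$ iff $\phi(r)=\phi(b)-1$. Combined with the first part, this gives $\cP(b)=\{r\in\reds:\phi(r)=\phi(b)-1\}$, so $\cP(b)$ depends on $b$ only through the integer $\phi(b)$. Consequently, $r^\ast\in\cP(b_1)\cap\cP(b_2)$ forces $\phi(b_1)-1=\phi(r^\ast)=\phi(b_2)-1$, whence $\phi(b_1)=\phi(b_2)$ and therefore $\cP(b_1)=\cP(b_2)$; and counting the points strictly between $b_1$ and $b_2$, using $\phi(b_1)=\phi(b_2)$, yields $g(b_1,b_2)=1$ (the single surplus point being the blue left endpoint of $[b_1,b_2]$ that the open interval swallows when one compares contour values at the two endpoints).

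The step I expect to be the main obstacle is the nestedness verification in the reverse inclusion, and within it the one delicate point is that the interior matching of $(b,r)$ must leave \emph{no} point unmatched; otherwise such a point's infinite matching segment would extend beyond $r$ and conflict with the pair $b\leftrightarrow r$. This is exactly where the hypothesis $g(b,r)=0$ --- equal numbers of blue and red points in $(b,r)$ --- is used.
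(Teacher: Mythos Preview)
Your argument is correct and reaches the same conclusion, but the route differs from the paper's in two places.

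\smallskip
\textbf{Reverse inclusion.} The paper does not build the interior matching by hand. Instead it takes the \emph{stable} matching on the points strictly between $b$ and $r$ and the stable matching on the complementary set of points, appeals to Lemma~\ref{stableisnested} to know each piece is nested, and then observes that gluing the three pieces (the pair $(b,r)$, the interior stable matching, the exterior stable matching) produces a nested matching with $\cM(b)=r$. Your recursive ``match an adjacent bichromatic pair and delete'' construction, together with declaring all exterior points unmatched, is more elementary in that it does not rely on the earlier lemma about stable matchings; the price is the explicit nestedness verification you carry out. Both work.

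\smallskip
\textbf{The ``moreover'' clause.} The paper argues by a three-case analysis on the position of a common potential match $r^\ast$ relative to $b_1<b_2$, writing $g(b_1,b_2)$ as a signed combination of $g(b_1,r^\ast)$ and $g(b_2,r^\ast)$ plus $1$ in each case; since both of the latter vanish, $g(b_1,b_2)=1$, and then the same identities give $g(b_1,r)=0\Leftrightarrow g(b_2,r)=0$. Your contour-function reformulation $g(b,r)=0\Leftrightarrow \phi(r)=\phi(b)-1$ packages the same bookkeeping more cleanly and avoids the case split; it also makes the equivalence-class structure of $\cP(\cdot)$ transparent (it is constant on level sets of $\phi$ restricted to $\blues$).

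\smallskip
One small polish: as literally written, $\phi(x)=\bigl|\{r\in\reds:r<x\}\bigr|-\bigl|\{b\in\blues:b<x\}\bigr|$ involves two infinite counts in the bi-infinite Poisson setting, so $\phi$ itself is not defined. Since you only ever use differences $\phi(r)-\phi(b)$ and the equivalence $\phi(b_1)=\phi(b_2)$, it suffices to anchor $\phi$ at an arbitrary origin (e.g.\ set $\phi(0)=0$ and define $\phi$ via increments); nothing in your argument changes.
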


\begin{proof}
Let $Q(b)$ represent the right hand side of the equation in the lemma. 
We want to prove that $\cP(b) = Q(b)$. For any $r \in \cP(b)$, there exists a nested matching 
$\cM$ such that $\cM(b)=r$. Therefore, $g(b,r) = 0$ and this implies that $\cP(b) \subseteq Q(b)$. 
Now let $r \in Q(b)$. Let $\cS_1 \subseteq \cS$ represent the set of all the point in $\cS$ 
between $b$ and $r$, and let $\cS' = \cS \backslash \bigl\{\cS_1 \cup \{b,r\} \bigr\}$.
Since $r \in Q(b)$, $\cS_1$ has an equal number of red and blue points. Let $\cM_1$ represent 
the stable matching for points in $\cS_1$, and let $\cM_1'$ represent the stable matching
for points in $\cS_1'$. According to Lemma \ref{stableisnested}, both $\cM_1$ and $\cM_1'$ are
nested. Let $\cM$ be the matching in which $b$ is matched to $r$, and points in $\cS_1$ 
and $\cS_1'$ are matched according to matchings $\cM_1$ and $\cM_1'$, respectively.
It is clear that since $\cM_1$ and $\cM_1'$ are nested, $\cM$ is also nested. 
Therefore, $r$ is $b$'s potential match and $r \in \cP(b)$. 
This proves that $Q(b) \subseteq \cP(b)$, and the proof for the first part of the lemma 
is complete.
Now let $b_1, b_2\in \blues$ be two arbitrary blue points. Without loss of generality assume 
$b_1<b_2$. Let $r \in \reds$ be an arbitrary red
 point. There are three different possibilities for $r$'s placement with respect to $b_1$ and $b_2$:
 \begin{enumerate}
 \item $b_1<b_2<r \Rightarrow g(b_1,b_2) = g(b_1, y)-g(b_2, r) + 1$.
 \item $b_1<r<b_2 \Rightarrow g(b_1,b_2) = g(b_1, r)+g(y, b_2) + 1$.  
 \item $r<b_1<b_2 \Rightarrow g(b_1,b_2) = g(y, b_2)-g(r, b_1) + 1$.  
 \end{enumerate}
 Now if $\cP(b_1) \cap \cP(b_2) \neq \emptyset$, there exists a red point $r^*$ such that $r^* \in \cP(b_1)$ and
 $r^* \in \cP(b_2)$. Therefore, $g(b_1, r^*) = g(b_2, r^*) = 0$. According to the equations described
 above, we should have $g(b_1, b_2) = 1$ (in all there cases). Now since $g(b_1, b_2) = 1$,
 with a same argument, we can conclude that for any $r\in \reds$, 
 $g(b_1, r) = 0$ if and only if $g(b_2, r) =0$. Therefore, $\cP(b_1) = \cP(b_2)$.
\end{proof}

\noindent \textbf{Proof of Lemma \ref{equivalenceClass}:} \label{third}
Let $\{x_t\}_{t\in \integers}$ represent the sequence of all the points in $\cS = \blues \cup \reds$, 
where $x_0 = b$, and for any $t\in \integers^+$, let $x_t$ and $x_{-t}$ represent the $t^{\text{th}}$ 
point to the right and the $t^{\text{th}}$ to the left of $b$, respectively. Since, $\cS$ is the mixture 
of two Poisson processes with rates $\lambda$ and $\mu$, $\{x_t\}$ occurs according to a Poisson 
process with rate $\lambda+\mu$. Moreover for any $t\in \integers$, $x_t$ is red with probability 
$p = \mu/(\lambda+\mu)$ and is blue with probability $q = 1-p = \lambda/(\lambda + \mu)$, 
independent from the color of the other points. 
Note that since $\lambda < \mu$, $p>1/2$.
Define the sequence $\{g_t\}_{t\in \integers^+}$, where $g_t = g(x_{-1}, x_t)$. It is easy to see that,
\begin{align}
g_{t+1} =
\left\{
	\begin{array}{ll}
		g_{t}+1  & \mbox{if $x_t$ is red}  \\
		g_{t}-1 & \mbox{if $x_t$ is blue}
	\end{array}.
\right.
\nonumber
\end{align}
Therefore, the sequence $\{g_t\}$ is equivalent to a random walk on 
$\integers$ that starts at $g_1 = -1$ and moves to the right or left according to probabilities 
$\mathbb{P}(+1) = p$ and $\mathbb{P}(-1) =q $. Suppose this random walk hits $0$ at some 
some time $t \in \integers^+$, {\it i.e.}\ , $g_t = 0$. If $x_{t-1}$ is a red point then since
$g(b, x_{t-1}) = g(x_{-1}, x_t) = g_t = 0$, then $x_{t-1} \in \cP(b)$. If $x_{t}$ is a blue point then since 
$g(b, x_{t}) = g(x_{-1}, x_t) + 1= g_t + 1 = 1$, then $x_{t} \in [b]$. A geometric interpretation of these facts is the following: If the random walk at time $t$ hits zero from below, then 
$x_{t-1} \in \cP(b)$, and if it hits zero from above and then goes below zero, then $x_t \in [b]$. \\
\textbf{Part (i)}.
Since $p>1/2$, $\lim_{t \rightarrow +\infty} g_t = +\infty$, almost surely. Therefore, since 
the random walk starts at $-1$, it hits $0$ at some time $t>0$, almost surely. 
Let $t_1$ be the first time that the random walk hits $0$, {\it i.e.}\ , $t_1 = \min\{t: g_t = 0\}$. 
Define $r_1 = x_{t_1-1}$. According to what we discussed above, $ r_1 \in \cP(b)$. 
Similarly, we can prove the existence of $r_2$. \\
\textbf{Part (ii)},
Let $b_1 = \min \{x\in [b]: x>b\}$. 
Therefore, $b$ and $b_1$ are consecutive points in $[b]$. It is clear that it is sufficient to prove the 
statement for these two points. According to lemma 
\ref{equivalenceClass}, $g(b, b_1) = 1$. Therefore, $g(x_{-1}, b_1) = g(b, b_1) -1 = 0$.
Since $t_1$ (defined in part (i)) is the first time that the random walk hits $0$, then $r_1 < x_{t_1}\le b_1$. 
Therefore, $r_1$ is a potential match between $b$ and $b_1$. If there exists another
potential match $r_2 = x_{t_2} \in \cP(b)$ between $b$ and $b_1$, then since $g_{t_2+1} = 0$,
$r_2 > r_1$. But $g_{t_1} = g_{t_2+1} = 0$ and  $g_{t_2} = -1$. Therefore, there exists
$t_1\le t' < t_2$ such that $g_{t'} = 0$ and $g_{t'+1} = -1$. Therefore, $b' = x_{t'} \in [b]$
which is a contradiction with the fact hat $b$ and $b_1$ are consecutive in [b]. \\
\textbf{Part (iii)}. The third part of the lemma is an immediate result of part (ii). \\
\textbf{Part (iv)} Following the discussion we had at the beginning, the random walk
finds a new blue point in $[b]$ if and only if it hits zero and then goes to $-1$. 
However, starting from zero, the probability that the random walk visits $-1$ again is 
$q/p = \lambda/\mu$ \cite{kelly2014stochastic}. This proves that $N^+(b)$ has a geometric distribution with rate 
$1-\lambda/\mu$. By symmetry, the same holds for $N^-(b)$.
\epr
\begin{lemma} \label{busycycle}
Let $\cW = \{w_i\}$ be a potential wave. Define $u_i = w_i - w_{i-1}$ (for valid
values of $i$). Then $u_i$'s are i.i.d.\ random variables with distribution $B(\lambda, \mu)$.
\end{lemma}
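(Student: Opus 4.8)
The plan is to identify each gap $u_i$ with a matching distance in either the forward or the backward queue matching, and then to read off the i.i.d.\ structure from the strong Markov property of the walk $g(b,\cdot)$.

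Fix a blue point $b$ of the wave $\cW$ and index its points so that $w_0=b$, writing the rightward points as $w_0<w_1<\cdots<w_{2N^+-1}$ and the leftward ones as $w_0>w_{-1}>\cdots>w_{-2N^-+1}$. The first step is structural. By Lemma~\ref{potentialmatches1}, $\cP(b)=\{r\in\reds:g(b,r)=0\}$, and as in the proof of Lemma~\ref{equivalenceClass} the map $x\mapsto g(b,x)$, $x>b$, is a random walk that increases by $1$ at each red point and decreases by $1$ at each blue point, with jump times forming a rate-$(\lambda+\mu)$ Poisson process and positive drift since $p=\mu/(\lambda+\mu)>\tfrac12$. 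From this I would show that $w_{2j-1}$ is precisely the location of the $j$-th step of $g(b,\cdot)$ from level $0$ to level $1$ (these steps occur at the red points of $\cP(b)$), while $w_{2j}$ is the location of the $j$-th step from level $1$ back to level $0$ (the blue points of $[b]$, using part~(ii) of Lemma~\ref{equivalenceClass}). Telescoping $g$ over consecutive wave points then yields $g(w_{i-1},w_i)=0$, and moreover that $w_{i-1}$ is the nearest point on the relevant side with that property; hence for $j\ge 1$ we get $w_{2j-1}=\cM_s^+(w_{2j-2})$ and $w_{2j}=\cM_s^-(w_{2j})$, i.e.\ $u_{2j-1}$ is the forward-queue matching distance of the blue point $w_{2j-2}$ and $u_{2j}$ is the backward-queue matching distance of the blue point $w_{2j}$, with symmetric statements on the left.

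Given this, the distributional claim is immediate. The paper already records that the forward-queue matching distance $X^+$ of an arbitrary blue point has law $B(\lambda,\mu)$, and the reflection $x\mapsto-x$ (which preserves the joint law of $\blues,\reds$ and exchanges $\cM_s^+$ with $\cM_s^-$) gives the same law for the backward-queue matching distance. Since each $w_i$ is a stopping point of the walk $g(b,\cdot)$ and the underlying process is Poisson, conditioning on the location of $w_{i-1}$ and on the already-explored side leaves a fresh Poisson process on the other side, so each $u_i$ has law $B(\lambda,\mu)$. For independence I would invoke the strong Markov property of $g(b,\cdot)$ at the successive times $w_1,w_2,\dots$: the walk restarts afresh (from level $1$ after an upward step $w_{2j-1}$, from level $0$ after a downward step $w_{2j}$) independently of the past, so the successive excursion durations $u_1,u_2,\dots$ are mutually independent; the leftward gaps are treated identically and are independent of the rightward ones since they are determined by the process on the disjoint half-line $(-\infty,b)$. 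Hence the $u_i$ are i.i.d.\ $B(\lambda,\mu)$.

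The main obstacle I anticipate is the bookkeeping forced by the wave being finite. The even-indexed rightward segment $(w_{2j-1},w_{2j})$ exists only when the wave extends past $w_{2j-1}$ — equivalently only when the positively drifting walk $g(b,\cdot)$, after its $j$-th upward step, happens to return to level $0$ — so the identification of $u_{2j}$ with a matching distance is really a statement conditional on that event. The point to verify carefully is that this event depends only on the walk strictly beyond $w_{2j-1}$, hence is independent of $u_1,\dots,u_{2j-1}$ and does not perturb the conditional law of $u_{2j}$ (which, conditioned on the return, is again $B(\lambda,\mu)$ by the same duality that makes $X^-$ and $X^+$ equal in law). Combining this with Lemma~\ref{equivalenceClass}(iv) (that $N^+$ and $N^-$ are independent geometrics) yields that, conditionally on $N=N^++N^--1$, the $2N$ gaps are i.i.d.\ $B(\lambda,\mu)$ with conditional law not depending on $N$ — exactly the form used in the proof of Theorem~\ref{bound}.
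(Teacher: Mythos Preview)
Your approach is essentially the same as the paper's: identify each gap $u_i$ with the busy cycle of the LCFS-PR queue (equivalently, a forward or backward queue-matching distance) and then appeal to the Markov structure for independence. The paper's own proof is a single sentence---it observes that, by symmetry, one may take $w_{i-1}$ blue and $w_i$ red and then $u_i$ is the busy cycle of the forward queue---so your write-up is actually more complete, in that you explicitly argue independence via the strong Markov property and flag the conditioning-on-finite-extent issue, both of which the paper leaves implicit. (One typo: ``$w_{2j}=\cM_s^-(w_{2j})$'' should read $w_{2j-1}=\cM_s^-(w_{2j})$.)
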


\begin{proof}
Due to the existing symmetry, without loss of generality, assume $b = w_i$ is a blue 
point and $r=w_{i+1}$ is a red point. Now if we consider the forward queue matching described 
in section $2$, it is easy to see that $u_i = r - b$ is equivalent to the busy cycle of the 
corresponding LCFS-PR queue.
\end{proof}

\noindent \textbf{Proof of Theorem \ref{bound}:} \label{boundproof}
According to Lemma \ref{stableisnested}, the stable matching $\cM_s$ is nested. 
Let $b \in \blues$ be an arbitrary blue point and let $N = N^+(b)$ and $N' = N^-(b)$ . 
According to Lemma \ref{equivalenceClass}, $N$ and $N'$ are i.i.d.\ geometric random 
variables with parameter $1-{\lambda}/{\mu}$. Now consider $b$'s potential wave 
$\mathcal{W} = \{w_i\}$, $-2N'+1 \le i \le 2N-1$ and define 
$B_i  = w_i - w_{i-1}$ for $i \in \{1, \dots, 2N+1\}$, and $B_i'  = w_i - w_{i-1}$ 
for $i \in \{ -2N', \dots, 0\}$. According to Lemma \ref{busycycle}, $B_i$ and $B_i'$ are i.i.d.\ 
random variables with distribution $B(\lambda, \mu)$. Now since $b$ should be matched with 
one of its potential partners in $\cP(b)$, its matching distance cannot be less than 
$\min\{B_1, B_1'\}$ or more than 
$\max \left\{ \sum_{i=1}^{2N-1} B_i , \sum_{i=1}^{2N'-1} B_i' \right\}$.
\epr

\noindent \textbf{Proof of Corollary \ref{upperbound}:} \label{upperboundproof}
Let $b$ be an arbitrary blue point and let $\cW(b) = {w_i}$, $-2N'+1\le i \le 2N-1$, represent 
its potential wave. Define $u_i = w_{i+1} - w_{i}$ for $-2N'+1\le i \le 2N-2$. Let $c_i$ represent 
the number of matching segments in $\cM_s$ which contain the interval $(w_{i}, w_{i+1})$. 
It is clear that $c_i$ cannot be larger than the total number of points in $\cW(b)$ to the left 
of $w_{i+1}$ or the total number of points in $\cW(b)$ to the right of $w_{i}$. In other words, if we define $ a_i = \min(i+2N',2N-1-i)$ then $c_i \le a_i$. Therefore,
\begin{align}
\Expect\bigl(d_\cM(b) \big{|} |[b]|\bigr) = 
\frac{1}{|[b]|} \sum_{b'\in [b]} \Expect\bigl(d_\cM(b') \big{|} |[b]\bigr) 
\nonumber 
= \frac{1}{|[b]|} \sum_{i = -2N'+1}^{2N-2}  c_i\Expect \left(u_i\right) \nonumber 
&\le \frac{1}{|[b]|} \sum_{i = -2N'+1}^{2N-2} a_i  \frac{1}{\mu - \lambda} \nonumber \\
& = (|[b]|+1) \frac{1}{\mu-\lambda}. \nonumber
\end{align}
Since $\Expect\left([b]\right) = \Expect(N) + \Expect(N') - 1 = 2\mu/(\mu-\lambda)-1$, the proof is complete.
\epr

\subsection{\textbf{Proof of Theorem \ref{Xupperbound}}} \label{forth}
In this section we prove Theorem \ref{Xupperbound}.
Suppose $n  = 2m >0$ be an even positive integer and let $\mathbf{x} = (x_1, ..., x_n) $ 
be a sequence of $n$ positive numbers. 
Let $\mathcal{C}(\mathbf{x})$ represent the configuration of $n+1$ red and blue points 
$P_1, ..., P_{n+1}$ which are placed in order on the real line such that the point $P_1$ is 
at the origin, and for any $1 \le i \le n$, $P_{i+1} = P_{i} + x_i$. 
Moreover, suppose $P_i$ is blue if $i$ is even and it is red if $i$ is odd. 
Therefore, there are $m+1$ red points and $m$ blue points in $\mathcal{C}(\xx)$.
Since there is no assumption on the value of $x_i$'s, the stable matching between points in 
$\blues$ and $\reds$ is not necessarily unique. (as an example consider the case where
 $x_i = 1$). Let $\cM$ be a stable matching for this problem which is generated 
 by the following algorithm. The algorithm repeatedly matches an unmatched blue point $b$ and an
 unmatched red point $r$ which have the minimum distance from each other, among all the
 remaining unmatched points, till no further matching is possible. 
 Let $D_\cM(x)$ represent the sum of all matching distances in 
$\mathcal{M}$,
 $$D_\cM(x) \triangleq \sum_{P \in \mathcal{C}(\mathbf{x}) \text{ is blue}} 
 |P- \cM(P)|.$$
Define $D(\xx)$ to be the expected value of $D_\cM(x)$, where the expectation is taken with
respect to the random stable matchings $\cM$ which is generated according to the algorithm.
Let $\prod_{n}$ represent the set of all permutations of n items. For any $\pi \in \prod_n$, let
$D_\pi(\xx) = D\bigl(\pi(\xx)\bigr) = D\bigl(\pi(x_1), ..., \pi(x_n)\bigr)$. Define $E(\mathbf{x})$ to be the 
expected value of $D_\pi(\xx)$, where the expectation is taken with respect to the random permutation 
$\pi$:
$$E(\xx) = \frac{1}{n!} \sum_{\pi \in \Pi_n} D_\pi(\xx).$$
Figure 1 shows $\mathcal{C}(\mathbf{x})$ for $n = 4$ and $\mathbf{x} = (.1, .2, .3, .4)$. 
We can easily see that here $D(\xx) = .4$ and $E(\xx) = {11}/{30}$. 
We will show later that $E(\xx)$ is a concave function of $\xx$. First consider the following definitions.

\begin{figure}[t]
\begin{center}
\includegraphics[width=0.48\textwidth]{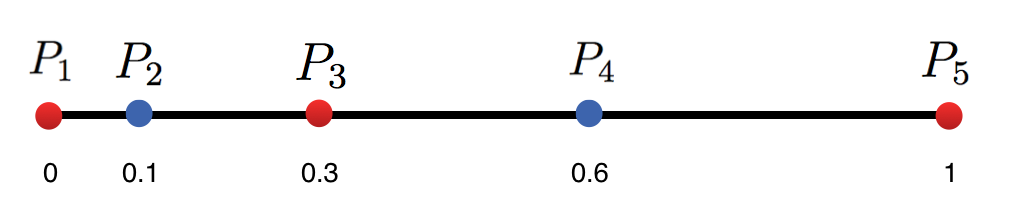}
\end{center}
\caption{$\mathcal{C}(.1, .2, .3, .4)$}
\label{f-nopsfrag}
\end{figure}

\begin{definition} 
For any positive integer $k$, Define $\mathcal{O}_k$ to be
the set of all sets of odd positive integers $\mathbf{o} = \{o_1, ..., o_r\}$ where $\sum_{i=1}^r{o_i} = k$.
\end{definition}
\begin{definition} \label{def2}
Let $\mathbf{o} = \{o_1, ..., o_r\} \in \mathcal{O}_k$. Define the function 
$P: \mathcal{O}_k \rightarrow \reals^+$ as follows:
$$P(\oo) \triangleq \frac{2^{r-1}} {\prod_{i=1}^k i^{c(\mathbf{o}, i)} c(\mathbf{o},i)!},$$
where $c(\mathbf{o}, i)$ represents the number of appearance of the number $i$ in $\mathbf{o}$.
\end{definition}
\noindent
For example, for $k=6$ we have $\mathcal{O}_6 = \left\{ \{1,5\}, \{3,3\}, \{3,1,1,1\}, \{1,1,1,1,1,1\}\right\}$.
Furthermore, $P\left(\{1,5\}\right) = \frac{2}{5}$, $P\left(\{3,3\}\right) = \frac{1}{9}$,
$P\left(\{3,1,1,1\}\right) = \frac{4}{9}$, and $P\left(\{1,1,1,1,1,1\}\right) = \frac{2}{45}$. In the following
lemma, we prove that for any $k$ the function $P$ is a probability measure on $\mathcal{O}_k$.

\begin{lemma} The function $P$ is a probability measure on $\mathcal{O}_k$. In other words:
$$\sum_{\mathbf{o} \in \mathcal{O}_k} P(\mathbf{o}) = 1.$$
\end{lemma}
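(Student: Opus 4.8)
The plan is to prove the identity by an exponential generating function computation. First I would rewrite $P(\oo)$ in a form that decouples over the distinct odd parts: since $r=\sum_i c(\oo,i)$ and $2^{r-1}=\tfrac12\prod_i 2^{c(\oo,i)}$, we have
\[
P(\oo)=\frac{1}{2}\prod_{i\ \mathrm{odd}}\frac{(2/i)^{c(\oo,i)}}{c(\oo,i)!}.
\]
Summing over $\oo\in\mathcal{O}_k$ is the same as choosing, for each odd $i$, a multiplicity $c_i\ge 0$ subject to $\sum_i i\,c_i=k$, so $2\sum_{\oo\in\mathcal{O}_k}P(\oo)$ is exactly the coefficient of $x^k$ in the formal power series
\[
\prod_{i\ \mathrm{odd}}\Bigl(\sum_{c\ge 0}\frac{(2x^i/i)^c}{c!}\Bigr)=\prod_{i\ \mathrm{odd}}\exp\Bigl(\frac{2x^i}{i}\Bigr)=\exp\Bigl(2\sum_{i\ \mathrm{odd}}\frac{x^i}{i}\Bigr).
\]

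Next I would evaluate the exponent in closed form. Using $\sum_{i\ \mathrm{odd}}x^i/i=\tfrac12\ln\frac{1+x}{1-x}$ as formal power series (subtract $\sum_{i\ge1}(-x)^i/i=-\ln(1+x)$ from $\sum_{i\ge1}x^i/i=-\ln(1-x)$ and divide by $2$), the product collapses to $\exp\bigl(\ln\frac{1+x}{1-x}\bigr)=\frac{1+x}{1-x}=1+2\sum_{k\ge1}x^k$. Hence the coefficient of $x^k$ is $2$ for every $k\ge1$, so $2\sum_{\oo\in\mathcal{O}_k}P(\oo)=2$, i.e.\ $\sum_{\oo\in\mathcal{O}_k}P(\oo)=1$, which is the claim.

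I do not anticipate a genuine obstacle; the only care needed is bookkeeping — justifying the interchange of the infinite product with the sums, which is immediate at the level of formal power series since each fixed power $x^k$ receives contributions only from odd parts $i\le k$ and multiplicities $c_i\le k$, hence finitely many terms. (An alternative is a direct induction on $k$ using a recursion for $\sum_{\oo\in\mathcal{O}_k}P(\oo)$, but the generating function route is shorter.) As a consistency check one can take $k=6$: the four terms of $2P$ are $\tfrac45,\tfrac29,\tfrac89,\tfrac4{45}$, summing to $2$, matching the values $P(\{1,5\})=\tfrac25$, $P(\{3,3\})=\tfrac19$, $P(\{3,1,1,1\})=\tfrac49$, $P(\{1,1,1,1,1,1\})=\tfrac2{45}$ listed above.
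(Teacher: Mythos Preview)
Your proposal is correct and follows essentially the same route as the paper: both arguments identify $2\sum_{\oo\in\mathcal{O}_k}P(\oo)$ as the coefficient of $x^k$ in $\prod_{i\ \mathrm{odd}}\exp(2x^i/i)=\exp\bigl(\ln\tfrac{1+x}{1-x}\bigr)=\tfrac{1+x}{1-x}=1+2\sum_{k\ge1}x^k$. Your write-up is somewhat more explicit about the factorization of $P(\oo)$ and the formal-power-series justification, but the key generating-function identity is the same.
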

\begin{proof}
From Taylor series expansion of $\log(x)$, for $|x| < 1$ we have,
$$ \log(1+x) - \log(1-x)  = 2 \sum_{i = 1}^\infty \frac{x^{2i+1}}{2i+1}.$$
Therefore,
$$ \Rightarrow \frac{1+x}{1-x} = \prod_{i=1}^\infty \exp\left(2\frac{x^{2i+1}}{2i+1} \right).$$
Now using the Taylor expansion of $e^x$, we have:
$$ 1 + \sum_{k=2}^\infty 2x^k = \frac{1+x}{1-x} = \prod_{i=1}^\infty \sum_{j = 0}^\infty \frac{\left(2x^{2i+1}\right)^j}{(2i+1)^j j!}.$$
From definition \ref{def2},  it is easy to see that if we expand the right hand side of the above equation 
as $\sum_{k=0}^\infty a_k x^k$, then for any $k\ge1$, $a_k = 2\sum_{\oo \in \mathcal{O}_k} P(\oo)$. 
In Other words,
$$ 1 + \sum_{k=1}^\infty 2x^k = \prod_{i=1}^\infty \sum_{j = 0}^\infty \frac{\left(2x^{2i+1}\right)^j}{(2i+1)^j j!} = 1 + \sum_{k=1} \left( 2\sum_{\oo \in \mathcal{O}_k} P(\oo)\right) x^k.$$
This proves that for any $k\ge 1$,
\begin{equation*}
\sum_{\oo \in \mathcal{O}_k} P(\mathbf{o}) = 1. 
\end{equation*}
\end{proof}

\begin{definition}
For any $\mathbf{o} = \{o_1, \dots, o_r\} \in \mathcal{O}_k$ and any $\xx = \{x_1, \dots, x_n\}$, define $\Phi(\mathbf{o}, \xx)$ to be the set of all possible ways of
splitting $\{x_1, \dots x_n\}$ into sets with sizes $o_1, \dots, o_r$. 
\end{definition}
\noindent
For example for $\mathbf{o} = \{1,1\} \in \mathcal{O}_2$ and $\xx = \{x_1, x_2, x_3, x_4\}$,
$$\Phi(\mathbf{o}, \xx) = \big\{ \{x_1,x_2\}, \{x_1,x_3\}, \{x_1,x_4\}, \{x_2,x_3\}, \{x_2,x_4\},
\{x_3,x_4\}\big\}.$$
It's easy to see that 
\begin{align} \label{eq0}
|\Phi(\mathbf{o}, \xx)| = \frac{n!}{(n-k)!\prod_{i=1}^r o_i! \prod_{i=1}^k c(\mathbf{o}, i)!},
\end{align}
where $c(\mathbf{o}, i)$ is the number of appreance of the number $i$ in $\oo$.

\begin{definition} \label{deff}
For any $\mathbf{o} = \{o_1, \dots, o_r\} \in \mathcal{O}_k$ and any $\xx = \{x_1, \dots, x_n\}$, define $f(\mathbf{o}, \xx)$ as follows:
$$f(\mathbf{o}, \xx) = \frac{1}{|\Phi(\mathbf{o}, \xx)|} \sum_{\phi \in \Phi(\mathbf{o}, \xx)}
\min \left( \sum_{i=1}^{o_1} x_i^{(1)}, \dots, \sum_{i = 1}^{o_r} x_i^{(r)}\right),$$
where in the summation it is assumed that $\phi$ is in the following form,
$$\phi = \left\{ \{x_1^{(1)}, \dots x_{o_1}^{(1)}\}, \dots 
, \{x_1^{(r)}, \dots x_{o_r}^{(r)}\} \right\}.$$
\end{definition}
\noindent
For example for $\mathbf{o} = \{1,1\} \in \mathcal{O}_2$ and $\xx = \{x_1, x_2, x_3, x_4\}$,
$$f(\mathbf{o}, \xx) = \frac{1}{6} \bigl( \min(x_1, x_2), \min(x_1, x_3), \min(x_1, x_4),
\min(x_2, x_3), \min(x_2, x_4), \min(x_3, x_4) \bigr).$$
In the following theorem, we find a close expression for $E(\xx)$ based on 
$f(\oo_k, x)$ for different values of $\oo_k$. 
\begin{theorem} \label{Ex}
For any $\xx \in \mathcal{S}_n$,
$$E(\xx) = \sum_{\substack{k\le n \\ k\text{~is even} \\ \oo_k \in \mathcal{O}_k}} P(\oo_k)f(\oo_k, \xx),$$
\end{theorem}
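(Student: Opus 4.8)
The plan is to prove the identity by strong induction on $n$, after recasting the definition of $E$ as a self‑contained recursion on multisets. Both sides are symmetric functions of $x_1,\dots,x_n$, so I treat $\xx$ as a multiset and write $m_0=\min\xx$; I also assume the $x_i$ are distinct and recover the general (tied) case at the end by continuity, since both sides are continuous and piecewise linear in $\xx$.

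First I would derive the recursion from a first‑step analysis of the greedy algorithm that produces $\cM$. On a uniformly random ordering of $\xx$, the first matched pair always spans the minimal gap $m_0$, and that gap is equally likely to occupy each of the $n$ slots. If it occupies an endpoint slot, the matched pair contributes $m_0$, one neighbouring gap is discarded, and the residual instance is $\mathcal{C}$ of a uniformly random ordering of $\xx$ with $m_0$ and that neighbour deleted. If it occupies an interior slot, the matched pair contributes $m_0$, its two neighbouring gaps $u,v$ are fused with it into a single gap of length $m_0+u+v$, and the residual instance is $\mathcal{C}$ of an ordering of $(\xx\setminus\{m_0,u,v\})\cup\{m_0+u+v\}$; crucially, as the interior slot ranges uniformly over $\{2,\dots,n-1\}$ the fused gap's slot ranges uniformly over $\{1,\dots,n-2\}$, so after averaging the residual ordering is again uniform on the reduced multiset. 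Collecting the cases,
\begin{align}
E(\xx)=m_0+\tfrac{2}{n}\,\Expect_{w}\big[E(\xx\setminus\{m_0,w\})\big]+\tfrac{n-2}{n}\,\Expect_{\{u,v\}}\big[E\big((\xx\setminus\{m_0,u,v\})\cup\{m_0+u+v\}\big)\big],\nonumber
\end{align}
where $w$ is uniform over $\xx\setminus\{m_0\}$ and $\{u,v\}$ is a uniform $2$‑subset of $\xx\setminus\{m_0\}$, with $E(\emptyset)=0$. This determines $E$ uniquely, so it suffices to show that $G(\xx):=\sum_{k\le n,\ k\text{ even}}\sum_{\oo\in\mathcal{O}_k}P(\oo)\,f(\oo,\xx)$ obeys the same recursion and boundary value.

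To check this I would substitute the closed form for $G$ on the smaller multisets into the right‑hand side of the recursion and reduce the required equality to a family of identities purely about the weights $P(\oo)$ and the symmetrized functions $f(\oo,\cdot)$. A useful preliminary is a parallel first‑step decomposition of $f(\oo,\xx)$ obtained by conditioning on the location of $m_0$ in the random partition defining it: by exchangeability $m_0$ is a singleton block with probability $c(\oo,1)/n$ (in which case the minimum of the block sums is exactly $m_0$), lies in a block of odd size $s\ge 3$ with probability $s\,c(\oo,s)/n$, and is unused with probability $(n-k)/n$. Combined with identities such as $\sum_{\oo\in\mathcal{O}_k}P(\oo)\,c(\oo,1)=2$ — which follows from the cycle‑type reading $P(\oo)=2^{\,r-1}\big(\#\{\sigma\in S_k:\text{type}(\sigma)=\oo\}\big)/k!$ together with the probability‑measure lemma applied to $\mathcal{O}_{k-1}$ — this reduces the coefficient of $m_0$ on both sides to $1$, and organizes the "unused" contributions against the endpoint term $\tfrac2n\Expect_w[G(\cdot)]$ and the size‑$s$ contributions against the fused term $\tfrac{n-2}{n}\Expect_{\{u,v\}}[G(\cdot)]$.

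The hard part will be matching the pieces that come from the fusion $\{m_0,u,v\}\mapsto\{m_0+u+v\}$: one must relate a configuration of $\Phi(\oo,\xx)$ in which $m_0$ sits in a block of odd size $s\ge 3$ (with two given further entries $u,v$ in that block) to a configuration of $\Phi(\oo',\cdot)$ on the reduced multiset in which the fused element sits in a block of size $s-2$, track the induced change in $\prod_i i^{c(\oo,i)}c(\oo,i)!$ and in the factor $2^{\,r-1}$, and evaluate a Vandermonde‑type count of how many source configurations collapse onto each target. All of this bookkeeping is governed by the same identity $\frac{1+x}{1-x}=\prod_{i\ge1}\exp\!\big(2x^{2i+1}/(2i+1)\big)$ used to prove that $P$ is a probability measure, so the cleanest route is to express both sides of the recursion as coefficient extractions from that product and let the algebra close. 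Finally I would drop the distinctness assumption by a perturbation argument, and observe that the resulting formula presents $E$ as a nonnegative combination of minima of linear forms, hence a concave function of $\xx$ — the property that is invoked afterwards.
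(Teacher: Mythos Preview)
Your approach is essentially the paper's: both prove the identity by strong induction on $n$ via exactly the same first--step recursion
\[
E(\xx)=m_0+\tfrac{2}{n}\,\Expect_{w}\bigl[E(\xx\setminus\{m_0,w\})\bigr]+\tfrac{n-2}{n}\,\Expect_{\{u,v\}}\bigl[E\bigl((\xx\setminus\{m_0,u,v\})\cup\{m_0+u+v\}\bigr)\bigr],
\]
and both reduce the inductive step to checking that the claimed closed form reproduces itself under this recursion. The only difference is organizational: the paper fixes a single monomial $A=\min\bigl(\sum_{S_1}x,\dots,\sum_{S_r}x\bigr)$ and tracks its coefficient through the two branches of the recursion, splitting into the cases $x_1\notin S$ and $x_1\in S_1$; you instead propose to group terms by $\oo$ and decompose each $f(\oo,\xx)$ according to the block containing $m_0$. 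These are two ways of indexing the same bookkeeping, and the ``hard part'' you defer---matching the size-$s$ block of $m_0$ against the fusion branch---is precisely what the paper's Case~(ii) computation carries out explicitly, with the weight change $P(\oo)\mapsto P(\oo')$ absorbed into the factor $\tfrac{(o_1-1)(o_1-2)}{(n-1)(n-2)}$ and the formula for $|\Phi(\oo,\xx)|$. Your generating-function route would in the end have to unwind to the same coefficient identities, so there is no genuine shortcut; on the other hand there is no error either, and your side observations (the identity $\sum_{\oo\in\mathcal O_k}P(\oo)\,c(\oo,1)=2$, the distinctness/continuity reduction, and the cycle-type reading of $P$) are correct and pleasant.
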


\begin{proof}
Let $n = 2m$. Suppose $\xx = (x_1, \dots, x_n)$ and without loss of generality assume 
that $x_1 \le x_2 \le \dots \le x_n$. Let $\pi$ be a random permutation which is drawn uniformly 
from $\prod_n$. In order to find a stable matching for $\pi(\xx)$, the algorithm should first find an index 
$i_1 \in \{1, \dots, n\}$ where $P_{i+1} - P_{i} = x_1$, and then matches points $P_i$ and $P_{i+1}$
to each other. Since, $\pi$ is drawn uniformly at random, $i_1$ is uniformly distributed over 
 $\{1, \dots, n\}$, {\it i.e.}\ , $\Prob(i_1 = i) = 1/n, \forall i \in \{1, \dots, n\}$. 
 Due to the existing symmetry we can assume that $i_1 \in \{1, \dots, m\}$. Now, conditioning
 on $i_1$ we have,
 \begin{align} \label{eq1}
 E(\xx) = \Expect\bigl(D_\pi(\xx)\bigr) = \frac{1}{m} \sum_{i = 1}^m \Expect\bigl(D_\pi(\xx) | i_1 = i\bigr).\end{align}
 For $i = 1$, by conditioning on the second segment at index $2$, we have
 \begin{align} \label{eq2}
 \Expect\bigl(D_\pi(\xx) | i_1 = 1\bigr) = x_1 + \frac{1}{n-1}\sum_{2\le i\le n}  E(\xx_{\{{i}\}}),
 \end{align} 
 where $\xx_{\{i\}}$ represents the sequence of $n-2$ positive numbers which is generated from
 $\xx$ by removing $x_1$ and $x_i$.
For $i>1$, by conditioning on the segments at indices $i-1$ and $i+1$, we have

\begin{align} \label{eq3}
 \Expect\bigl(D_\pi(\xx) | i_1 > 1\bigr) = x_1 + \frac{2}{(n-1)(n-2)}\sum_{2\le i < j \le n}^n  E(\xx^{\{{i}, {j}\}}),
\end{align}
 where $\xx^{\{{i}, {j}\}}$ represent the sequence of $n-2$ positive integers which is generated
 from $\xx$ by removing $x_1$, $x_i$, and $x_j$ and adding $x_1 + x_i + x_j$. We prove the
 theorem by induction on $n$. For $n = 2$, it is clear that $E\bigl( (x_1, x_2)\bigr) = \min(x_1, x_2)$.
 Assume the theorems statement is valid for n-2. From equations (\ref{eq1}), (\ref{eq2}), and (\ref{eq3}) 
 and from the induction assumption it is clear that $E(\xx)$ can be written as a linear combination 
 of terms with the following form, 
 \begin{align} \label{eq4}
 A = \min \left( \sum_{x \in S_1} x, \dots, \sum_{x \in S_r} x\right),
 \end{align}
 where $S_i$'s are disjoint subsets of $\{x_1, \dots, x_n\}$ with an odd size. Suppose
 $S_i =  \{x_1^{(i)}, \dots x_{o_i}^{(i)}\} $ where $o_i = |S_i|$ and let $k = \sum_i o_i$. Define
 $S = \cup S_i$.
 Let $C$ represent the constant factor of the the term $A$ show in equation (\ref{eq4}) in $E(\xx)$. 
According to equation \ref{eq0}, It is sufficient to prove that, 
 \begin{align} \label{Ceq}
C &= P\bigl( (o_1, \dots, o_k) \bigr) \times 
\frac{{(n-k)!\prod_{i=1}^r o_i! \prod_{i=1}^k c(\mathbf{o}, i)!}}{n!} \nonumber\\
& = \frac{2^{r-1}} {\prod_{i=1}^k i^{c(\mathbf{o}, i)} c(\mathbf{o},i)!} \times
\frac{(n-k)!{\prod_{i=1}^r o_i! \prod_{i=1}^k c(\mathbf{o}, i)!}}{n!} \nonumber\\
& = \frac{(n-k)!2^{r-1}\prod_{i=1}^r o_i!}{n! \prod_{i=1}^r o_i}.
\end{align}
We Prove this by considering two different cases: (i) $x_1 \notin S$, and 
(ii) $x_1 \in S_i$ for some $i$.

\begin{enumerate}[label=(\roman*)]
\item The term $A$, has the following constant factor $C_1$ in the equation \ref{eq2}, 
\begin{align}\label{C11}
C_1 &= \frac{n-k-1}{n-1}\frac{(n-2-k)!2^{r-1}\prod_{i=1}^r o_i!}{(n-2)! \prod_{i=1}^r o_i} \nonumber \\
& = \frac{n}{n-k}\frac{(n-k)!2^{r-1}\prod_{i=1}^r o_i!}{n! \prod_{i=1}^r o_i}. 
\end{align} 
The reason is that in order for term to appear in in $E(\xx_{\{i\}})$, it is sufficient to have $i \notin S$ which
occurs with probability $\frac{n-k-1}{n-1}$.
The term $A$ also appear in $E(x^{\{i,j\}})$ if $i \notin S$ and $j \notin S$ which occurs with 
probability $\frac{(n-k-1)(n-k-2)}{(n-1)(n-2)}$. Therefore, the term $A$ has the following constant 
factor $C_2$ in the equation \ref{eq3},
\begin{align}\label{C21}
C_2 &= \frac{(n-k-1)(n-k-2)}{(n-1)(n-2)}\frac{(n-2-k)!2^{r-1}\prod_{i=1}^r o_i!}
{(n-2)! \prod_{i=1}^r o_i} \nonumber \\
& = \frac{n(n-k-2)}{(n-k)(n-2)}\frac{(n-k)!2^{r-1}\prod_{i=1}^r o_i!}{n! \prod_{i=1}^r o_i}. 
\end{align}
Since, $C = \frac{2}{n}C_1 + \frac{n-2}{n}C_2$, it is easy to derive equation (\ref{Ceq}) from equations (\ref{C11}) and (\ref{C21}).
\item Without loss of generality assume $x_1\in S_1$. First we consider the case where $o_1 > 1$, {\it i.e.}\ , 
$|S_1| \ge 3$. It is clear that in this case the term $A$ does not appear in equation (\ref{eq2}), {\it i.e.}\ ,
$C_1 = 0$. $A$ appears in $E(\xx^{\{i,j\}})$ if and only if $i, j \in S_1$ which occurs with probability 
$\frac{(o_1-1)(o_1-2)}{(n-1)(n-2)}$. 
Therefore, it has the following constant factor $C_2$ in the equation (\ref{eq3}),
\begin{align}\label{C22}
C_2 & = \frac{(o_1-1)(o_1-2)}{(n-1)(n-2)}
\frac{(n-k)!2^{r-1}(o_1-2)!\prod_{i=2}^r o_i!}{(n-2)! (o_1 - 2)\prod_{i=2}^r o_i}.  \nonumber \\
& =\frac{n}{n-2} \frac{(n-k)!2^{r-1}\prod_{i=1}^r o_i!}{n! \prod_{i=1}^r o_i}. 
\end{align}
Since $C = \frac{n-2}{n}C_2$, we can derive the equation (\ref{Ceq}) from (\ref{C22}).
For $o_1 = 1$, the value of $A$ is equal to $x_1$. On the other hand if we consider all
the terms $A$ with $S_1 = \{x_1\}$, it is easy to see that the sum of all of their coefficient
is equal to $1$ which is consistent with the coefficient of $x_1$ in equation \ref{eq1} (if we
plug in equations (\ref{eq2})) and (\ref{eq3}).
\end{enumerate}
 \end{proof}
\begin{corollary}\label{concave} 
$E(\xx)$ is a concave function of $\xx$. 
\end{corollary}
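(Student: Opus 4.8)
The plan is to read concavity straight off the closed-form expression supplied by Theorem \ref{Ex}, so the argument is essentially a sequence of ``concavity is preserved by $\ldots$'' observations. First I would fix $k$ (even, $k\le n$) and a composition-type object $\oo_k=\{o_1,\dots,o_r\}\in\mathcal{O}_k$, and fix a single splitting $\phi\in\Phi(\oo_k,\xx)$ of the form $\phi = \{\{x_1^{(1)},\dots,x_{o_1}^{(1)}\},\dots,\{x_1^{(r)},\dots,x_{o_r}^{(r)}\}\}$. For this $\phi$ the quantity
$$\min\left(\sum_{i=1}^{o_1} x_i^{(1)},\ \dots,\ \sum_{i=1}^{o_r} x_i^{(r)}\right)$$
is the pointwise minimum of $r$ linear functions of $\xx$. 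Each linear function is concave, and the pointwise minimum of concave functions is concave, so this quantity is a concave function of $\xx$.

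Next I would note that, by Definition \ref{deff}, $f(\oo_k,\xx)$ is the arithmetic mean of exactly these quantities over the index set $\Phi(\oo_k,\xx)$. The one point worth stating carefully is that $\Phi(\oo_k,\xx)$ depends only on which coordinates of $\xx$ are grouped together, not on the numerical values of the $x_i$, so for fixed $\oo_k$ the collection of terms being averaged is a fixed finite family of concave functions of $\xx$; an average (a convex combination) of concave functions is concave, hence $f(\oo_k,\xx)$ is concave in $\xx$ for every admissible $k$ and every $\oo_k\in\mathcal{O}_k$.

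Finally, by the lemma preceding Definition \ref{deff} the weights $P(\oo_k)$ are nonnegative (they in fact sum to $1$ over $\oo_k\in\mathcal{O}_k$), so Theorem \ref{Ex} writes $E(\xx)$ as a nonnegative linear combination of the concave functions $f(\oo_k,\xx)$ over the finite index set of even $k\le n$ and $\oo_k\in\mathcal{O}_k$. A nonnegative linear combination of concave functions is concave, which gives the claim. There is no real obstacle here: all the work is already absorbed into Theorem \ref{Ex} and into the probability-measure lemma, and the only thing to be a little careful about is the ``average of a fixed family of concave functions'' step just flagged.
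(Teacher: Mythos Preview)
Your proposal is correct and follows essentially the same approach as the paper: the paper's proof simply observes that each $f(\oo,\xx)$ is concave in $\xx$ by Definition \ref{deff}, and then invokes Theorem \ref{Ex} to conclude that $E(\xx)$ is concave. Your write-up is a more detailed unpacking of exactly this reasoning (min of linear functions is concave, averages and nonnegative combinations preserve concavity), including the useful remark that $\Phi(\oo_k,\xx)$ is a fixed index set independent of the numerical values of $\xx$.
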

\begin{proof}
According to definition \ref{deff} it is clear that $f(\oo, \xx)$ is a concave function of $\xx$ (for any
$\oo$). Therefore, from Theorem \ref{Ex} we can conclude that $E(\xx)$ is a concave function 
of $\xx$.
\end{proof}

Now define $\xx_n = (1, \dots, 1)$, to be the sequence of $n$ numbers all equal to $1$.
The following lemma proves an upper bound for $E(\xx)$ for arbitrary $\xx$ in terms of $E(\xx_n)$.
\begin{lemma} \label{xnupper}
For any $\xx = (x_1, \dots, x_n)$, we have
$$E(\xx) \le E(\xx_n) \bigl(\frac{1}{n}\sum_i x_i\bigr).$$
\end{lemma}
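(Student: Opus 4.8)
The plan is to derive the bound from three structural properties of the functional $E$: permutation symmetry, positive homogeneity of degree one, and concavity (the latter being exactly Corollary \ref{concave}). Once these are in hand, the inequality is a one-line application of Jensen's inequality.

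First I would record that $E$ is symmetric in its arguments. This is immediate from the definition $E(\xx)=\frac{1}{n!}\sum_{\pi\in\Pi_n}D_\pi(\xx)$: replacing $\xx$ by $\sigma(\xx)$ for $\sigma\in\Pi_n$ merely reindexes the sum over $\Pi_n$, so $E(\sigma\xx)=E(\xx)$ for every $\sigma$. Next I would record that $E$ is positively homogeneous of degree one, i.e. $E(t\xx)=tE(\xx)$ for $t\ge 0$. This can be read off Theorem \ref{Ex}: each $f(\oo_k,\xx)$ is an average of terms of the form $\min\bigl(\sum_i x_i^{(1)},\dots,\sum_i x_i^{(r)}\bigr)$, every one of which scales linearly in $\xx$; alternatively it follows directly from the construction of $D_{\cM}$, since scaling all the gaps $x_i$ by $t$ scales every pairwise distance by $t$, hence leaves the greedy tie-breaking distribution unchanged and scales every matching distance by $t$.

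Then the argument runs as follows. Put $\bar x=\frac1n\sum_i x_i$. For each coordinate $i$, as $\pi$ ranges over $\Pi_n$ the index $\pi(i)$ takes each value in $\{1,\dots,n\}$ exactly $(n-1)!$ times, so the $i$-th coordinate of $\frac{1}{n!}\sum_{\pi}\pi(\xx)$ equals $\bar x$; that is, $\frac{1}{n!}\sum_{\pi\in\Pi_n}\pi(\xx)=\bar x\,\xx_n$. Using symmetry, then concavity (Jensen over the uniform distribution on the $n!$ permuted points), then homogeneity,
\[
E(\xx)=\frac{1}{n!}\sum_{\pi\in\Pi_n}E\bigl(\pi(\xx)\bigr)\ \le\ E\Bigl(\frac{1}{n!}\sum_{\pi\in\Pi_n}\pi(\xx)\Bigr)=E\bigl(\bar x\,\xx_n\bigr)=\bar x\,E(\xx_n),
\]
which is exactly the claimed bound.

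I do not expect a genuine obstacle here. The only points that need care are (i) checking that $E$ truly is permutation-symmetric and degree-one homogeneous — both are read off the definitions above — and (ii) making sure Corollary \ref{concave} is applied on a legitimate convex set: a convex combination of sequences of positive reals again has positive entries, so $\bar x\,\xx_n\in\mathcal{S}_n$ and every $\pi(\xx)$ lies in $\mathcal{S}_n$, so the concavity inequality applies along the relevant segment.
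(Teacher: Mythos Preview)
Your proof is correct and follows essentially the same approach as the paper: use permutation symmetry of $E$ to rewrite $E(\xx)$ as an average, apply concavity (Corollary \ref{concave}) via Jensen, observe that the symmetrized average equals $\bar x\,\xx_n$, and finish with degree-one homogeneity. The only cosmetic difference is that the paper symmetrizes over the $n$ cyclic shifts $\xx^{(i)}=(x_{i+1},\dots,x_n,x_1,\dots,x_i)$ rather than all $n!$ permutations; either family averages to $\bar x\,\xx_n$, so the argument is the same.
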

\begin{proof}
For any $i \in \{0, \dots, n-1\}$, define $\xx^{(i)} = (x_{i+1}, \dots, \xx_n, x_{1}, x_{i})$. First 
not that $E(\xx^{(i)}) = E(\xx), \forall i\in \{0, \dots, n-1\}$.Since
$E(\xx)$ is a concave function of $\xx$, we have
\begin{equation}
E(\xx) = \frac{1}{n}\sum_i E(\xx^{(i)}) \le E\bigl(\frac{1}{n}\sum_i \xx^{(i)}\bigr) 
=E\bigl( \frac{1}{n}\sum_i x_i, \frac{1}{n}\sum_i x_i, \dots, \frac{1}{n}\sum_i x_i\bigr) =  E(\xx_n) \bigl(\frac{1}{n}\sum_i x_i\bigr). \nonumber
\end{equation} 
\end{proof}

In the following lemma we prove an upper bound for $E(\xx_n)$.
\begin{lemma} \label{loglemma}
Let $n = 2m$. We have,
$$E(\xx_n) \le m(1+\ln m).$$ 
\end{lemma}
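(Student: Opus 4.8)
The plan is to reduce the claim to a one-variable recursion for $e_m := E(\xx_{2m})$ and then close by induction. Specializing the recursion from the proof of Theorem~\ref{Ex} to $\xx = \xx_n$ with $n = 2m$ (so $x_1 = 1$), two simplifications occur because $\xx_n$ is all ones: deleting two coordinates equal to $1$ leaves the all-ones sequence of length $n-2$, so $E(\xx_{\{i\}}) = e_{m-1}$ for every $i$; and deleting three coordinates equal to $1$ and inserting a single $3$ always yields the same sequence $\mathbf{y} := (1,\dots,1,3)$ of length $n-2 = 2(m-1)$ (recall $E$ depends only on the multiset of coordinates), so $E(\xx^{\{i,j\}}) = E(\mathbf{y})$ for every $i<j$. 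Substituting these into the two conditional expressions in that proof gives $\Expect(D_\pi(\xx_n)\mid i_1 = 1) = 1 + e_{m-1}$ and $\Expect(D_\pi(\xx_n)\mid i_1 > 1) = 1 + E(\mathbf{y})$, and averaging over $i_1 \in \{1,\dots,m\}$ yields $e_m = 1 + \tfrac1m e_{m-1} + \tfrac{m-1}{m}E(\mathbf{y})$.

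Next I would bound $E(\mathbf{y})$ using Lemma~\ref{xnupper}: since $\mathbf{y}$ has $2(m-1)$ coordinates whose sum is $(2m-3)+3 = 2m$, we get $E(\mathbf{y}) \le E(\xx_{2(m-1)})\cdot\tfrac{2m}{2(m-1)} = \tfrac{m}{m-1}\,e_{m-1}$. Plugging this in collapses the recursion to $e_m \le 1 + \tfrac{m+1}{m}\,e_{m-1}$, with base case $e_1 = E\bigl((1,1)\bigr) = \min(1,1) = 1$ (the $n=2$ case from the proof of Theorem~\ref{Ex}).

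Finally I would prove $e_m \le m(1+\ln m)$ by induction on $m$, the case $m=1$ being the equality $1 = 1\cdot(1+\ln 1)$. For $m\ge 2$, substituting $e_{m-1} \le (m-1)(1+\ln(m-1))$ into the recursion, the target $1 + \tfrac{m^2-1}{m}\bigl(1+\ln(m-1)\bigr) \le m(1+\ln m)$ rearranges (using $\tfrac{m^2-1}{m} = m - \tfrac1m$) to $1 - \tfrac1m \le m\ln\tfrac{m}{m-1} + \tfrac1m\ln(m-1)$; since $\ln(m-1) \ge 0$ it suffices to have $m\ln\tfrac{m}{m-1} \ge 1$, which follows from the elementary bound $\ln\tfrac{m}{m-1} = \ln\bigl(1+\tfrac1{m-1}\bigr) \ge \tfrac{1/(m-1)}{1+1/(m-1)} = \tfrac1m$.

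I expect the only delicate step to be the first one: verifying that the deletions and insertion prescribed by the Theorem~\ref{Ex} recursion really do collapse $\xx_n$ to exactly the two sequences $\xx_{n-2}$ and $\mathbf{y}$, and that the numbers of terms in $\sum_{2\le i\le n}$ and $\sum_{2\le i<j\le n}$ cancel the prefactors $\tfrac1{n-1}$ and $\tfrac{2}{(n-1)(n-2)}$ exactly. Everything after that is a routine single-variable induction.
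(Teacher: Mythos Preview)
Your proposal is correct and follows essentially the same approach as the paper: specialize the recursion from the proof of Theorem~\ref{Ex} to $\xx_n$, apply Lemma~\ref{xnupper} to the sequence $(3,1,\dots,1)$, and close by induction on $m$. The only cosmetic difference is in the last step: the paper first weakens $\tfrac{m+1}{m}$ to $\tfrac{m}{m-1}$ so that the recursion telescopes cleanly to $\tfrac{e_m}{m}\le H_m\le 1+\ln m$, whereas you keep $\tfrac{m+1}{m}$ and verify the target directly via the inequality $m\ln\tfrac{m}{m-1}\ge 1$; both are valid and equivalent in spirit.
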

\begin{proof}
Similar to the proof of Theorem \ref{Ex} conditioning on the value of the first index $i_1$, we have
\begin{align}
E(\xx_n) = {1} + \frac{1}{m} E(\xx_{n-2}) + \frac{m-1}{m} E(\xx_{n-2}'),
\end{align}
where $\xx_{n-2}' = (3, 1, \dots, 1)$ and has $n-2$ entries. According to Lemma 
\ref{xnupper}, $E(\xx_{n-2}') \le \frac{n}{n-2}E(\xx_{n-2})$. Therefore,
\begin{align}
E(\xx_n) &\le {1} + (\frac{1}{m} + 1)E(\xx_{n-2}) \nonumber \\
& \le 1 + (\frac{m}{m-1})E(\xx_{n-2}). \nonumber
\end{align}
Therefore,
\begin{align}
\frac{E(\xx_n)}{m} \le \frac{1}{m} + \frac{E(\xx_{n-2})}{m-1}.
\end{align}
By induction on $m$, we have 
$$\frac{E(\xx_n)}{m} \le \frac{1}{m} + \frac{1}{m-1} + \dots + 1 \le 1 + \ln m.$$
This completes the proof.
\end{proof}

\noindent \textbf{Proof of theorem \ref{Xupperbound}:}
Let $b$ be an arbitrary blue point and let $\cW(b) = {w_i}$, $-2N'+1\le i \le 2N-1$, represent 
its potential wave. Define $u_i = w_{i+1} - w_{i}$ for $-2N'+1\le i \le 2N-2$. Let $m = |[b]|$.
Assume that $u_i$'s take their values from the set $\xx = \{x_1, \dots, x_{2m}\}$. According
to the definition of $E(\xx)$, we have
\begin{align}
\Expect\bigl(X | m, \xx\bigr) = \frac{1}{m}E(\xx) \le (1 + \ln m) \bigl(\frac{1}{2m} \sum_i x_i\bigr), \nonumber
\end{align}
where in the last inequality we used the results of Lemma \ref{xnupper} and \ref{loglemma}. 
Since $\xx$ is arbitrary, we can conclude
\begin{align}
\Expect(X | m)  = \Expect \bigl( \Expect(X| \xx, m)\bigr) \le (1 + \ln m) \frac{1}{\mu-\lambda}. \nonumber
\end{align}
On the other hand, since $\ln(x)$ is a concave function, from Jensen inequality we have
\begin{align}
\Expect(X) = \Expect\bigl(\Expect(X | m)\bigr) \le \Expect\bigl( (1 + \ln m) \bigr)  \frac{1}{\mu-\lambda}
\le \bigl( 1 + \ln \Expect(m)\bigr)  \frac{1}{\mu-\lambda} 
= \left(1 + \ln \bigl (\frac{\mu+\lambda}{\mu - \lambda} \bigr)\right) \frac{1}{\mu-\lambda}. \nonumber
\end{align}

\epr

\bibliography{bibliography}


\end{document}